\renewcommand{\bf}[1]{\mathbf{#1}}
\newcommand{\be}{\begin{equs}}
\newcommand{\ee}{\end{equs}}
\def \P{\mathbb{P}}
\def \E{\mathbb{E}}
\DeclareMathOperator{\Uniform}{Uniform}
\DeclareMathOperator{\var}{var}
\DeclareMathOperator{\Cost}{Cost}
\newcommand{\bb}[1]{\mathbb{#1}}
\newcommand{\mc}[1]{\mathcal{#1}}
\DeclareMathOperator{\TV}{TV}
\DeclareMathOperator{\N}{N}
\DeclareMathOperator{\Span}{Span}
\def \aI{a_{1}}
\def \aII{a_{3}}
\def \aIII{a_{4}}
\def \aIV{a_{2}}
\def \AI{A_{1}}
\def \AII{A_{3}}
\def \AIII{A_{4}}
\def \AIV{A_{2}}
\newtheorem{theorem}{Theorem}[section]
\newtheorem{lemma}[theorem]{Lemma}
\newtheorem{corollary}[theorem]{Corollary}
\newtheorem{prop}[theorem]{Proposition}
\newtheorem{assumptions}[theorem]{Assumptions}
\newtheorem{assumption}[theorem]{Assumption}
\newtheorem{definition}{Definition}
\theoremstyle{plain}
\newtheorem{thm}{Theorem}
\newtheorem*{thm-non}{Theorem}
\theoremstyle{definition}
\newtheorem{example}[theorem]{Example}
\newtheorem{defn}[theorem]{Definition}
\newtheorem{remark}[theorem]{Remark}
\begin{document}
\allowdisplaybreaks

%\begin{frontmatter}
% ''Title of the paper''
\title[No Free Lunch for Approximate MCMC]
{No Free Lunch for Approximate MCMC}

\author{James Johndrow$^{\ddag}$}
\thanks{$^{\ddag}$johndrow@wharton.upenn.edu,
   Department of Statistics
    University of Pennsylvania, 3730 Walnut St, Philadelphia
    PA 19104, USA}

\author{Natesh S. Pillai$^{\flat}$}
\thanks{$^{\flat}$pillai@fas.harvard.edu,
   Department of Statistics
    Harvard University, 1 Oxford Street, Cambridge
    MA 02138, USA}

\author{Aaron Smith$^{\sharp}$}
\thanks{$^{\sharp}$smith.aaron.matthew@gmail.com,
   Department of Mathematics and Statistics
University of Ottawa, 585 King Edward Drive, Ottawa
ON K1N 7N5, Canada}

\maketitle
% AMS subject classifications (used in AMS journals)
   %\subjclass{Primary 60J22; Secondary 60J20, 60H15, 65C40}

   % AMS keywords (used in AMS journals)
 % \keywords{Approximate MCMC, Austerity framework, Stochastic Gradient Langevin, ABC-MCMC}

%\begin{keyword}[class=AMS]
%\kwd[Primary ]{1232}
%\kwd{ser}
%\kwd[; secondary ]{123}
%\end{keyword}

%\begin{keyword}
%\kwd{}
%\kwd{}
%\end{keyword}

%\end{frontmatter}

% AOS,AOAS: If there are supplements please fill:
%\begin{supplement}[id=suppA]
%  \sname{Supplement A}
%  \stitle{Title}
%  \slink[url]{http://lib.stat.cmu.edu/aoas/???/???}
%  \sdescription{Some text}
%\end{supplement}

\begin{abstract}
%Markov chain Monte Carlo (MCMC) is a popular algorithm for Bayesian inference. 
It is widely known that the performance of Markov chain Monte Carlo (MCMC) can degrade quickly when targeting computationally expensive posterior distributions, such as when the sample size is large. This has motivated the search for MCMC variants that scale well to large datasets. One popular general approach has been to look at only a subsample of the data at every step. In this note, we point out that well-known MCMC convergence results often imply that these ``subsampling'' MCMC algorithms cannot greatly improve performance. We apply these abstract results to realistic statistical problems and proposed algorithms, and also discuss some design principles suggested by the results. Finally, we develop estimates for the singular values of random matrices bounds that may be of independent interest.
\end{abstract}

\section{Introduction}

Although they are ubiquitous in small-scale Bayesian inference, MCMC algorithms often perform poorly in ``big data'' -- i.e. large sample size -- problems. The simplest reason for this poor scaling is that most popular MCMC algorithms require a computation involving every data point at every time step. This suggests the heuristic that the per-step computational cost of MCMC scales linearly in the size $n$ of the data set.\footnote{In practice, the scaling behavior is more complicated and depends on details of hardware and memory management. We will largely ignore these hardware-specific issues in the current paper.}

There has been a great deal of recent work on methods to avoid this linear scaling by looking only at a few carefully-chosen control variates and a (possibly random) subsample of the data at every step, again using the heuristic that the per-step computational cost of such chains scales linearly in the number $r$ of data points used at each step. Much of this work was inspired by  \cite{korattikara2014austerity}. This note is an attempt to describe what improvements are possible using methods similar to those currently being proposed. Our basic conclusion supports that of the recent paper \cite{nagapetyan2017true}, which studied natural upper bounds on the convergence rate in the important special case of the stochastic gradient Langevin algorithm: in many realistic situations, the \textit{total} cost of an MCMC algorithm can't be improved by decreasing the size $r$ of the subsample. Our results, like those of \cite{nagapetyan2017true}, apply primarily to the uses of MCMC for posterior sampling - it is well-known that subsampling methods can lead to speedups for stochastic optimization methods.

Our main abstract result, Theorem \ref{thm:SpecGapBound}, gives upper bounds on accuracy of Monte Carlo estimators obtained from subsampling MCMC algorithms. Informally, we think of these results as giving conditions under which that \textit{at least} one of the following things must happen:

\begin{enumerate}
 \item The subsampling Markov chain mixes slowly enough that there is a very small overall computational cost advantage (though there may be a large \textit{per-step} advantage); or
 \item The stationary measure of the subsampling Markov chain is ``far'' from the posterior distribution conditioned on all data; or
 \item The algorithm takes advantage of control variates that are very nearly sufficient statistics; i.e. one can obtain a very accurate approximation to the posterior by conditioning only on the control variates, without seeing the rest of the data.
\end{enumerate}

The first two cases amount to failure of a subsampling MCMC algorithm to meet its main goal of speeding up the computation of accurate posterior integrals. The third case requires very good low-dimensional control variates, which presents two  problems that are more subtle. The first of these problems is practical: good control variates are often computationally expensive to construct, and this must be factored into the total computational cost of an algorithm. The second problem is related to motivation for using MCMC at all: once one has found control variates good enough to be in case 3, there is very little benefit to looking at the rest of the data, so one might as well run an algorithm using only the control variates and avoid the subsampling entirely.\footnote{There do exist models for which good control variates are easy to compute, but posterior computations based on these control variates are much harder than posterior computations based on the original data. This class of models gives an important ``way out'' of our trilemma - see Section \ref{SecLitRev}. In practice, this situation seems quite uncommon.} Thus, in all three cases, we find that there is a strong reason to avoid subsampling MCMC. Our results are quite general and explicitly apply to subsampling algorithms that use data augmentation or for which the invariant measure is not exactly the posterior distribution conditioned on the full dataset.

Although we can't always tell which of these three cases occur, Theorems \ref{ThmExampleTorpid} and \ref{ThmActualLogisticRegression} say that algorithms for sampling from the posterior of logistic regression and some other generalized linear models must end up in the first case. Furthermore, we can check that they end up in this first case even when using certain natural control variates. This confirms the ability of our general bounds to show that speedup does not occur in specific statistical applications.

Note that, on first reading our trilemma, one might have the following concern: perhaps most real-life control variates are ``nearly sufficient statistics" in the sense of our third case, but that the information in these control variates cannot easily be used outside a subsampling algorithm. If that were the case, then we would want to use subsampling algorithms anyway. Theorems \ref{ThmExampleTorpid} and \ref{ThmActualLogisticRegression} rule out this possibility, showing that in fact many commonly-used control variates do not carry enough information to speed up subsampling.

\subsection{Very Closely Related Work, Very Different Papers} \label{SecVeryClose}

As noted earlier, papers such as \cite{nagapetyan2017true} have already given heuristic arguments that it may not be possible to speed up many algorithms using subsampling. Although our basic conclusions are similar to \cite{nagapetyan2017true}, we emphasize that the technical details are very different. In contrast to \textit{e.g.} \cite{nagapetyan2017true}, our results apply to many chains (not just one simple family), allow the use of control variates, and give genuine upper bounds on algorithm efficiency (rather than merely upper bounds on natural lower bounds). This last difference is especially important: showing that the convergence time is \textit{genuinely large} is much harder than merely showing that a natural \textit{upper bound} on the convergence time is large.

As an unfortunate consequence of this added strength and generality, some of the assumptions of Theorem \ref{thm:SpecGapBound} are likely to be unfamiliar and strange-looking to most readers. To give a concrete example of a strange-looking definition, Section \ref{SecUsingData} gives a careful and generic notion of what it means to ``use" a data point in the presence of pre-computed control variates. Even the familiar assumptions can be difficult to verify in any great generality. For example, Assumption \ref{ass:LargeFluctuations} Part (c) is essentially the assumption that the subsampling chain is geometrically ergodic (with constants that are only ``polynomially bad" in the size $n$ of the dataset). While this is a common assumption, and we expect this to hold for essentially any useful subsampling algorithm, proving it for any specific algorithm is often the content of a full lengthy paper. Obviously, we have no hope of proving it in any great generality here.

To alleviate these unavoidable burdens on the reader, we give many pedagogical results that are intended to show how these definitions are used and how the assumptions of Theorem \ref{thm:SpecGapBound} may be verified. See Section \ref{SecPaperGuide} for a more detailed guide to the paper. 

%Section \ref{SecNotTypAppl} is a fairly lengthy and pedagogical section that relates these abstract definitions to popular and relatively simple algorithms. Readers familiar with subsampling algorithms and MCMC theory can skip to our first application, the bound on spectral gap for algorithms targetting GLMs given in Theorem \ref{ThmExampleTorpid}.

\subsection{Illustrative Examples: What Can We Hope For?}

Before giving our results, it is worth recalling some goals of the subsampling literature in greater detail. Several papers on subsampling methods (see \textit{e.g.} Section 5 of \cite{fearnhead2016piecewise}) have discussed the possibility of developing algorithms whose computational cost scales ``sublinearly'' in the number of data points. We found this language mildly confusing when we first encountered it, and one goal of the present paper is to clarify it. Thus, before giving technical details, we consider a simpler question: how much improvement can we hope for with subsampling methods?  This section focuses on simple examples and can easily be skipped by readers interested only in the new results.

We begin by observing that a typical size-$m_{n}= o(n)$ subsample of a size-$n$ dataset cannot (generally) capture all the information in the original dataset. Consider the simple 2-level hierarchical model:
\be  \label{EqSimpleHierarchySetup}
\mu &\sim \mathcal{N}(0,1) \\
Y_1,\ldots,Y_n &\sim \mathcal{N}(\mu,1),
\ee
where $\mu$ is an unobserved hyperparameter and $Y_1,\ldots,Y_n$ is the observed data. If $m_n = o(n)$ is the size of a small subsample chosen uniformly at random, then it is straightforward to show that the posteriors associated with the full sample and the subsample are very far apart:
\be \label{EqSimpleExampleLimit}
\lim_{n \rightarrow \infty} \| p(\cdot \mid Y_1,\ldots,Y_n) - p(\cdot \mid Y_1,\ldots,Y_{m_n}) \|_{\TV} = 1.
\ee
Rescaling to \textit{e.g.} force the variances to match does not help for \textit{typical} subsamples. For all fixed $0 < c < 1$, we still have
\be \label{EqSimpleExampleLimitPerm}
\limsup_{n \rightarrow \infty} \P\left[ \left\| p(\cdot \mid Y_1,\ldots,Y_n) - p(\cdot \mid Y_{\sigma(1)},\ldots,Y_{\sigma(m_{n})})^{\frac{n}{m_{n}}} \right\|_{\TV} < c \right] = 0,
\ee
where $\sigma \sim \mathrm{Unif}(S_{n})$ is a random permutation and $S_n$ is the symmetric group. Thus, even in this toy problem, posterior approximations based on subsamples are not close to the full posterior in the total variation metric when the number of data points used in the subsample grows more slowly than the total sample size. Of course, this is the only setting in which computational costs per iteration are lower for subsampling.

These calculations are all well-known, and we believe that this is not what is usually intended by the phrase ``sublinear cost.''  Instead of considering algorithms that literally use a subsample that grows sublinearly with the size $n$ of the dataset, we should consider algorithms that are allowed to take a small subsample of size $m_{n} = o(n)$ \textit{and also} use some computationally-cheap summary statistics of the \textit{entire} dataset, such as the MLE. In the context of computational work, these summary statistics are also often referred to as ``control variates,'' and we will use the two terms interchangeably. The question then becomes: is it possible to develop an exact algorithm which scales sublinearly in the size $m$ of the subsample if you take advantage of summary statistics? The possibility of such an algorithm is not ruled out by the simple calculation above, and indeed it is often possible to get very good estimates using a few \textit{carefully-chosen} control variates and a very small subsample.  To cite a concrete bound, the approach of \cite{campbell2018giga} suggests that for appropriately-chosen metrics $d$,
\be
\inf_{\sigma \in S_{n}} d(p(\cdot \mid Y_1,\ldots,Y_n), \, p(\cdot \mid Y_{\sigma(1)},\ldots,Y_{\sigma(m_{n})})^{\frac{n}{m_{n}}}) \lesssim (1 - \alpha)^{m_{n}}
\ee
for some $\alpha > 0$ and all $n, m_{n}$ sufficiently large, for typical datasets. %[(JEJ) the fact that this metric is left unspecified here leaves open the possibility that in fact they could not get this for total variation. If I remember correctly theirs is some kind of smoother metric, perhaps one of these Wasserstein metrics with respect to an RKHS norm. AS: Indeed, I don't think they get any results in total variation (at least they don't get an explicit bound in TV - I wouldn't be surprised if a TV bound were implied by their bound in the special case of Gaussians). In fact they are very cagey about what metric they end up using for the theory results. Up to the statement of their main result Theorem 3.1, they just say that they are working in a Hilbert space without saying which one or what properties it must satisfy (and it is clear upon reading the proof that they are making \textit{some} assumptions - to make a silly point, they only explicitly assume that the full likelihood has finite norm; it is not hard to come up with spaces where a bunch of objects with infinite norm sum to something with finite norm, and indeed this is fairly common when e.g. trying to pin down a high-dimensional object with one-dimensional observations). Anyway, I am somewhat inclined to leave the choice of metric a little fuzzy, since the alternative is to explain what metric they were actually using in their paper, how it relates to TV, etc. (JEJ2) Ok -- though it seems very likely that the metric is weaker than TV, and maybe we could just say that without specifying which metric it is.]

The perhaps-surprising conclusion of our paper is that it is often not possible to use subsampling MCMC to quickly refine initial estimates, whether they are good or not. This point is rather subtle: it suggests that very good estimates obtained by subsampling MCMC and careful use of control variates are largely due to the careful use of control variates, with the further MCMC steps improving this initial estimate very slowly as the size of the subsample increases.

To complete this discussion, we consider the reverse question: how quickly does the computational complexity of naive MCMC grow with the size $n$ of the data set? In the simple example considered in Equation \eqref{EqSimpleHierarchySetup}, it is straightforward to check that the spectral gap of a Metropolis-Hastings chain with appropriately-scaled proposals (\textit{e.g.} proposal kernel $Q_{n}(x,\cdot) = \mathrm{Unif}([x- n^{-1/2},x+ n^{-1/2}])$) does not go to 0 as $n$ goes to infinity. Thus, the cost of naive Metropolis-Hastings is $O(n)$, and the standard MCMC CLT (see Equation \eqref{eq:MCMCCLT}) tells us that it is possible to obtain an MCMC estimate with a desired fixed standard deviation $\epsilon > 0$ by accessing each data point a total of $O(\epsilon^{-2})$ times. This non-decay of spectral gap is fairly common in low-dimensional applied problems, and there is some evidence that it also occurs in realistic high-dimensional problems (see \textit{e.g.} the recent papers \cite{yang2017bigdim}, \cite{qin2017asymptotically}).

To summarize: in even the simplest example, it is necessary to view $\Omega(n)$ data points to obtain ``good'' estimates. Conversely, for a wide variety of problems it is possible to obtain estimates with error less than any desired constant $\epsilon > 0$ using standard -- i.e. non-subsampling based -- methods while viewing only $O( n \epsilon^{-2})$ data points. Thus, the main theoretical question is to determine if the dependence on the accuracy parameter $\epsilon$ can be improved.

As mentioned, this discussion avoids several important practical issues. Depending on the dataset size and computer setup, the per-step cost of running an algorithm can be highly discontinuous in the size $n$ of the dataset and algorithm parameters such as the size $r$ of each subsample. There are many causes for these discontinuities, from a fairly simple discontinuity when data becomes too large to fit in RAM to more complicated discontinuities caused by data-transfer problems on GPUs. These issues are largely out of the scope of the current paper, but we give some discussion in Section \ref{SecDisc}. %While this section has focused on toy Gaussian examples to allow for easier computations, the remainder of the paper applies to quite general statistical models. As a specific example, we will apply our results to generalized linear models in Section \ref{SecApplModels}.

\subsection{Literature Review} \label{SecLitRev}

 The present paper was motivated by recent work on ``online'' subsampling MCMC. Surveys of the topic include \cite{bardenet2015markov,fearnhead2016piecewise}, which focus on the interesting specific methods introduced in \cite{quiroz2015speeding,bierkens2016zig,pollock2016scalable} and more generally on the important role of control variates in exact subsampling MCMC. Auxiliary-variable methods are also an important part of this literature, and ``exact'' subsampling MCMC was introduced in this context in \cite{maclaurin2014firefly}. Several papers have recently given positive theoretical results on these sorts of subsampling MCMC algorithms; see \textit{e.g.} \cite{johndrow2015approximations,rudolf2018perturbation}. These results typically require that the approximating MCMC kernel is a uniformly good approximation of the ``exact'' transition kernel in a Wasserstein or total variation metric. An upshot of the results that we present here is that one typically requires $\Omega(n)$ subsample sizes to accomplish this using a subsampling approach.

There is also a large literature on trying to understand how subsampling methods work in other algorithms used in Bayesian statistics. Our current paper most closely resembles \cite{nagapetyan2017true}, which gives qualitatively similar conclusions for the stochastic gradient Langevin algorithm (SGLA) in some settings. Although our main messages are very similar, the details of our papers are quite different. Most obviously, we deal with generic subsampling chains while they focus on SGLA. More subtly, we provide absolute lower bounds on the convergence rates of our algorithms, while they provide only lower bounds on some natural upper bounds on convergence rates; this meant \cite{nagapetyan2017true} left open the small possibility that substantially faster convergence is still possible when their upper bounds are not sharp.

Most previous subsampling papers involve pre-computation of control variates or other surrogates for the target distribution. Some of these have suggested adaptive choice of control variates, as in \cite{bardenet2015markov} and also the author's previous \cite{conrad2016accelerating} amongst other places. From the point of view of this paper, we view the adaptive stages of algorithms as essentially a sophisticated way of choosing good control variates. In a somewhat related direction, there is also a literature on ``preprocessing'' subsampling algorithms. These algorithms usually involve extensive pre-computation; see for example \cite{huggins2016coresets,SHF22} and the references therein. This approach avoids many of the issues raised in the present paper, and we believe that it has the potential to make a large impact on Bayesian computation for popular models - especially as it becomes more integrated into MCMC techniques.

In the present paper, we have focused on GLMs and other models for which the computational complexity of easily-available ``default" methods are known to scale linearly in $n$. In other situations, default methods may scale much more unfavorably in $n$, which opens the door for subsampling methods to give a much larger improvement. For example, many MCMC algorithms require repeated solution of linear systems, which often has at least a quadratic cost in $n$. See \textit{e.g.} \cite{salomone2019spectral} for a recent example of subsampling in a setting in which computational costs scale super-linearly in $n$ for naive algorithms. We leave open the possibility of analogous bounds in this setting.

Finally, recall that in discussing the third part of our trilemma, we made the following somewhat facile suggestion: if the posterior conditioned on the data is extremely close to the posterior conditioned on just the control variates, you should simply run MCMC based on the control variates and ignore the data. While we think this is good advice and applies to many control variates used for generalized linear models, we point out that it is not always possible, even in the rather special case that the control variates are in fact sufficient statistics. See \cite{montanari2015computational}  for details in an important class of models.

\subsection{Paper Guide} \label{SecPaperGuide}

We start Section \ref{SecNotTypAppl} by introducing our notation and main assumptions. This section also includes pedagogical examples, a sketch of the proof of our main result (Section \ref{SecMBIS}), and an application of our abstract results to the familiar setting of GLMs (Theorem \ref{ThmExampleTorpid}). In Section \ref{SecMainResults}, we give a precise statement our main result and relate it back to our main pedagogical examples. In Section \ref{SecSubsecCounter}, we give some ``non-examples" of special situations in which it is possible to speed up MCMC without reducing accuracy. In Section \ref{SecApplMCMC}, we show that some of our abstract assumptions are satisfied for a variety of realistic subsampling chains. %It also illustrate the subtlety in checking when an algorithm ``uses" a datapoint, and presents a construction that gives quantitatively strong bounds for a variety of subsampling algorithms that are popular in the literature. 
Finally, in Section \ref{SecDisc} we discuss the broader context of this work and highlight some important problems that are left open.

As mentioned in Section \ref{SecVeryClose}, verifying that a specific model or algorithm satisfies our assumptions with ``good" constants takes some work. Doing so in great generality is far beyond the scope of one paper. This leads to some conflict in the presentation of the results: we wish to have clear and simple applications for those interested in the ``big picture," while also helping readers to verify the assumptions in their particular examples. We have chosen to resolve this conflict as follows:

\begin{itemize}
\item Our pedagogical examples and Section \ref{SecApplMCMC} focus on applications to specific simple models and subsampling algorithms that are currently popular in the literature, in an effort to succinctly present a variety of concrete examples and results that are most likely to be of interest to practitioners. Most proofs from these sections are deferred to technical appendices.
\item Throughout the paper, we include many short remarks on technical issues that appear when using our results. We also reference the lemmas and constructions in the appendices that can be used to resolve them for most models and algorithms that we are aware of. We believe that these technical issues are to some extent unavoidable due to the generality of the current approach and the fact that algorithms can be presented in very different forms. 

To contrast again with the closely-related work \cite{nagapetyan2017true}: our current paper allows for \textit{generic} subsampling chains and gives \textit{lower bounds on their convergence rates}, while \cite{nagapetyan2017true} studies a \textit{specific} family of subsampling chains and gives only \textit{lower bounds on upper bounds on their convergence rates}. Most of the technical difficulties come from these two differences: the first means we must have a very careful definition of when a datapoint is ``used," and the second means that we must check that posterior distributions are not ``almost completely" determined by small subsamples and a collection of control variates. Appendix \ref{SecAltCons} gives alternative representations that can help with the first issue, Appendix \ref{SecGenSuffCondLarge} gives new bounds in random matrix theory that can help with the second issue, and Appendix \ref{AppSubsecGrid} illustrates how adding carefully-constructed but ``unused" control variates can help both issues.
\item Finally, Appendix \ref{AppendixLargeFluxChecking} begins with a short guide to our constructions. We emphasize here one important part of this guide. Most of the difficulty in checking our assumptions for new models comes from a single technical condition related to the singular values of certain random matrices. While we found this condition difficult to verify by hand, in many cases it is easy to verify by computer. We give a short algorithm for this verification in Remark \ref{RemSimpleAlgBadCond}.
\end{itemize}

Although the paper focuses on Markov chain Monte Carlo methods, the appendices include some more general material that may be of independent interest: Appendix \ref{SecAppAntiConc} includes some new bounds on anticoncentration of random variables, which are applied in Appendix \ref{SecAppSingVal} to obtain bounds on the condition number of certain highly-structured random matrices.

\section{Notation and Typical Applications} \label{SecNotTypAppl}

\subsection{Probability Notation}

We say that a sequence of events $\{A_{n}\}_{n \in \mathbb{N}}$ holds ``with extreme probability'' (abbreviated w.e.p.) if $\P[\{A_n^c \,\, i.o.\}]=0$, where $i.o.$ denotes ``infinitely often.''

By a slight abuse of notation, we use $p$ to denote both a distribution \textit{and} the density of the distribution when there is no possibility of confusion. 

\subsection{A Typical Bayesian Setting for Subsampling MCMC}

We give a quick description of the most common setup for approximate MCMC in Bayesian inference, so that we can easily explain what our results say in this setting. We will use similar notation throughout the paper, though our results will apply in slightly greater generality.

Denote by $\bf Z$ a random $n \times k$ matrix, representing the observed data. We assume that the rows $Z_i$ of $\bf Z$ are iid samples drawn from a distribution that has some density $f$ with respect to some dominating measure $\lambda$ on $\bb R^k$. We assume that $f$ is a member of a parametric family $\{f(\cdot,x) : x \in \mc X\}$ of probability distributions. In the setting of Bayesian inference, we also fix a prior $\Pi_0$ on $\mc X$. When $\mc X = \bb R^d$, $\lambda$ is the Lebesgue measure, and the prior has density $\pi_0$ with respect to Lebesgue measure, the posterior has density $\pi_{\bf Z}$ with respect to Lebesgue measure which is given by
\be
\pi_{\bf Z}(x) \propto \left( \prod_{i=1}^n f(Z_i,x) \right) \pi_0(x).
\ee

% [Deleted - we don't actually do this here. Presumably this is meant to be about regression.]It is often convenient to partition the vector $z$ into two parts $(y,w)$, and to write $f(z,x) = f_y(y,w,x) f_w(w,x)$ -- in other words, to specify the distribution of $Z$ by specifying the marginal distribution of $W$ and the conditional distribution of $Y \mid W$. 

Denote by $K_{\bf Z}$ the transition kernel of some Markov chain associated with the observation $\bf Z$, and we denote by $X_0,X_1,\ldots$ some Markov chain evolving according to $K_{\bf Z}$. In the setting of approximate Bayesian inference, we typically want $K_{\bf Z}$ to have a stationary measure that is ``close" to the posterior $\Pi_{\bf Z}$.

Most MCMC algorithms with invariant measure \textit{exactly equal} to $\Pi_{\bf Z}$ require performing a computation with each data point to sample the next state of the chain. As a result, these algorithms have per-step computation cost that scales at least linearly in $n$. With large datasets growing more common, an extensive literature over the past decade has proposed alternative algorithms that use only a fraction of the data at each step. These algorithms may not -- and typically do not -- have invariant measure $\Pi_{\bf Z}$. We refer to algorithms that do not use all the data at each step as ``minibatching'' or ``subsampling'' algorithms. 

We now introduce two simple pedagogical examples that we will return to: stochastic gradient Langevin dynamics (SGLD) and a subsampling pseudo-marginal Metropolis-Hastings (SPMMH) algorithm similar to algorithms proposed in \cite{quiroz2015speeding}. In our pedagogical examples, we will consider the basic form of both algorithms without control variates, but our main result applies also to versions of both algorithms that use control variates. SGLD without control variates has been analyzed previously \cite{nagapetyan2017true}, and therefore, in addition to helping make our abstract conditions and results more concrete, it allows us to compare our bounds to existing results.

\begin{example}[SGLD] \label{ExSGLD}
Fix parameter $r  \in [n] = \{1,\ldots,n\}$ and a sequence $\{\eta_t\}$ of positive reals. Let $\{\epsilon_t\}$ be a sequence of independent $N(0,I_d)$ random variables and $\{S_t \}$ a sequence of independent uniform samples of size $r$ taken without replacement from the integers $[n]$. The SGLD algorithm is a Markov chain with update rule
\be
X_t = X_{t-1} - \frac12 \eta_t \left(\nabla \log \pi_0(X_{t-1}) + \frac{r}{n} \sum_{i \in S_t} \nabla_x \log f(Z_i,X_{t-1})\right) + \eta_t^{1/2} \epsilon_t. 
\ee
\end{example}

We note that these dynamics are \textit{not} reversible, and thus are \textit{not} covered by the strongest results in our paper (though they are covered by weaker results). We keep them as-is because they are the simplest commonly-used minibatch algorithm. Very similar reversible algorithms have been developed (see \textit{e.g.} \cite{RevSGLD22}), and our results do apply directly to these algorithms. See our forthcoming paper \cite{UpcomingAzeem} for strong bounds that apply directly to SGLD.

\begin{example}[SPMMH] \label{ExSPMMH}
Let $\ell_i(x) = \log f(Z_i,x)$, and $Q(x,\cdot)$ a reversible Markov transition kernel on $\mc X$ with transition density $q(x,x^*)$ with respect to Lebesgue measure on $\mc X$. For a subset $S \subset [n]$, define the log-likelihood estimator and an estimator of its variance
\be
\hat{\ell}_S := \frac{n}{r} \sum_{i \in S} \ell_i, \quad \hat \sigma^2_S := \frac{1}r \sum_{i \in S} (\ell_i - \hat \ell_S)^2.
\ee
Now define a bias-adjusted likelihood estimator
\be \label{eq:BiasAdjust}
\hat{f}_{S} = e^{\hat \ell_S - \frac{n^2}{2r} \hat \sigma^2_S}.
\ee
The SPMMH algorithm is a Markov chain on the augmented state space $\mathcal{X} \times 2^{[n]}$. We give the update rule assuming that the current state is $(X_{t-1},S_{t-1})$.
\begin{enumerate}
    \item Sample $X^* \sim Q(X_{t-1},\cdot)$. Sample a random subset $S^*$ of size $r$ uniformly from $[n]$, and compute $\hat{f}_{S^*}(X^*)$. 
    \item Calculate
    \be
    \alpha((X_{t-1},S_{t-1}),(X^*,S^*)) := 1 \wedge \frac{\hat f_{S^*}(X^*) \pi_0(X^*) q(X^*,X_{t-1})}{\hat f_{S_{t-1}}(X_{t-1}) \pi_0(X_{t-1}) q(X_{t-1},X^*)}.    
    \ee
    \item Sample $U \sim \Uniform(0,1)$. If $U \le \alpha$, move to $(X_t,S_t) = (X^*,S^*)$. Otherwise remain at $(X_t,S_t) = (X_{t-1},S_{t-1})$. 
\end{enumerate}
\end{example}

The authors of \cite{quiroz2015speeding} show that SPMMH is a pseudo-marginal MCMC algorithm with invariant distribution $\bar \pi_{\bf Z}(x)$ given by
\be
\bar \pi_{\bf Z}(x) &= \frac{f_r(x,\bf Z) \pi_0(x)}{\bar f_r(x,\bf Z)} \\
\intertext{where}
f_r(x,\bf Z) &= \sum_{S \in \mc S_r(n)} \binom{n}{r}^{-1} \hat f_{S}(x,\bf Z), \text{ and } \bar f_r(x,\bf Z) &= \int f_r(x,\bf Z) \pi_0(x) dx. \\
\ee
In general, $\bar \pi_{\bf Z}(x) \ne \pi_{\bf Z}(x)$, since, despite the first order correction in \eqref{eq:BiasAdjust}, $\hat f$ is not proportional to $f$.

Both the SGLD and SPMMH algorithms use exactly $r$ data points, sampled uniformly at random from the $n$ rows of $\bf Z$, to approximate either the gradient of $\log \pi_{\bf Z}$ or the acceptance probability $\alpha$ at each step. If the gradient evaluation and likelihood evaluation on a single $Z_i$ both cost order $d$, which is typical, this will mean that the computational cost per step of either algorithm is order $r d$. It is immediately clear that if we wish to reduce the computational cost per step compared to exact full data algorithms such as the Metropolis-Adjusted Langevin algorithm or Metropolis-Hastings, which costs order $n d$ per step, then we must have $r = o(n)$.

\subsection{Convergence Rates and MCMC}

The cost per step does not entirely capture the computational cost of MCMC - we must also take into account the number of steps that are run. We briefly recall standard convergence results from the MCMC literature, and use them to motivate a notion of ``total" cost for a subsampling MCMC chain.

In this section, we consider a Markov chain $K$ with unique invariant measure $\mu$. We are mainly interested in Markov chains for which convergence occurs at an exponential rate in number of steps:

\begin{definition} \label{def:Gergo}
A Markov transition kernel $K$ is \textit{geometrically ergodic} if there exists a constant $\lambda \in (0,1)$ and a function $L : \mc X \to [0,\infty)$ such that
\be \label{eq:Gergo}
\|\delta_x K^t - \mu\|_{\TV} < L(x) (1-\lambda)^t.
\ee
\end{definition}

If $K$ is geometrically ergodic, there is a Markov chain analogue of the usual central limit theorem. For functions satisfying $\varphi \in L^{2+\delta}(\mu)$ for some $\delta > 0$, we have from \cite{jones2004markov}[Theorem 9]
\be \label{eq:MCMCCLT}
\sqrt{T} \left( \mu (\varphi) - T^{-1} \sum_{t=1}^T \varphi(X_t) \right) \rightsquigarrow \N(0,\sigma^2_{\varphi}),
\ee
where $\sigma^2_\varphi$ is called the asymptotic variance of the function $\varphi$.  The quantity $\sigma^2_\varphi$ can be bounded by $\lambda$ as
\be \label{eq:AsymptoticVarianceNonRev}
\sup_{\varphi \in L^{2+\delta}(\mu)} \frac{\sigma^2_\varphi}{\var_{\mu}(\varphi)} \le \frac{2}{\lambda}.
\ee

Under the additional assumption that $K$ is reversible, substantially sharper results hold. First, the central limit theorem \eqref{eq:AsymptoticVarianceNonRev} holds for all $\varphi \in L^2(\mu)$. Next, recall the definition of the \textit{spectrum}:

\be 
\mathrm{Spec}(K) = \{\lambda \in \bb C \setminus \{0,1\}, \,   (K - \lambda \, \mathrm{Id}) \, \text{ is not invertible.} \},
\ee 

where $\mathrm{Id}$ is the identity operator. The (absolute) \textit{spectral gap} $\lambda(K)$ of a transition kernel $K$ is given by
\be \label{EqDefAbsSpec}
\lambda(K) = 1 - \sup \left\{ |\lambda| \, : \, \lambda \in \mathrm{Spec}(K)  \right\}.
\ee
When $K$ is reversible and the spectrum $\mathrm{Spec}(K)  \subset [0,1]$ is nonnegative, this gives a precise bound on the convergence of the Monte Carlo averages for the \textit{worst-case} function $\varphi$:

\be \label{eq:AsymptoticVarianceRev}
\sup_{\varphi \in L^2(\mu)} \frac{\sigma^2_\varphi}{\var_{\mu}(\varphi)} = \frac2{\lambda} - 1 \le \frac{2}{\lambda}.
\ee

In the context of the main results in this paper, we can think of Equation \eqref{eq:AsymptoticVarianceRev} as ``almost" holding for reversible kernels with a possibly-negative spectrum, as made formal in the following remark. 

\begin{remark}
Consider a reversible kernel $K$, and let $K' = \frac{1}{2}(K + \mathrm{Id})$ be the associated ``half-lazy" kernel. We note two important facts about $K'$: it always has nonnegative spectrum
\be 
\mathrm{Spec}(K') = \left\{ \frac{1}{2}( 1 + \lambda) \, : \, \lambda \in \mathrm{Spec}(K) \right\} \subset [0,1], 
\ee 
and its asymptotic variance differs by at most a factor of 2 relative to $K$. Putting these together, the asymptotic variance of the ergodic averages associated with $K$ are very tightly linked to the spectral gap of its $\frac{1}{2}$-lazy kernel $K'$. Furthermore, whenever $K$ satisfies  the assumptions of our main theorems, the half-lazy version $K'$ does as well. Thus, when considering a given kernel $K$, we can always apply our results to $K'$ and obtain as an immediate corollary essentially the same conclusion for $K$ itself, losing at most a factor of 2.

For this reason, issues of positivity of the spectrum for reversible kernels can be safely ignored throughout the remainder of the paper - we can always apply our results to an associated kernel with nonnegative spectrum and then draw conclusions for the original kernel of interest.
\end{remark}

Notice that by \eqref{eq:AsymptoticVarianceRev}, even if we can initialize from $\mu$, we will require order $\lambda^{-1}$ samples to achieve some fixed approximation accuracy for the quantity $\mu(\varphi)$. We denote by $\tau_{\bf z}(m)$ a quantity that is, roughly, the typical number of steps it takes for a subsampling algorithm to ``see" $m$ out of $n$ datapoints\footnote{The precise definition of the time $\tau_{\bf z}(m)$ to ``see" $m \in \{1,2,\ldots,n\}$ is given in Equations \eqref{EqDefTauPre} and \eqref{EqDefTau}. For the algorithms in Examples \ref{ExSGLD} and \ref{ExSPMMH} that use $r$ datapoints per sample, it is straightforward to check that this definition satisfies $\tau_{\bf Z}(m) \geq \frac{m}{r}$ for all $m,n,r$. In our applied examples, we will set $m = n - d - 1$, where $d$ is the ambient dimension of the parameter space. }; for our pedagogical examples, this will be $\tau_{\bf z}(m) \lessapprox \frac{m}{r} \log(n)$. Following previous work such as \cite{nagapetyan2017true, bornn2017use}, we define the overall cost of a  subsampling MCMC algorithm $K_{\bf Z}$ that satisfies \ref{def:Gergo} with optimal constant $\lambda_{\bf Z}$ by 

\be \label{eq:CompCost}
\Cost(K_{\bf Z}) = \frac{m}{\tau_{\bf Z}(m) \lambda_{\bf Z}}.
\ee

This definition accounts for the computation time per step and the number of samples needed to obtain a good approximation to $\mu_{\bf Z} (\varphi)$ for worst-case $L_2(\mu_{\bf Z})$ functions $\varphi$, though it does not account for the fact that $\mu_{\bf Z}$ need not be the posterior of interest; see Section \ref{SecMBIS} for a discussion that accounts for bias. Note that, though we can use this definition for all kernels $K$, Equality \eqref{eq:AsymptoticVarianceNonRev} may be far from true for non-reversible $K$. See Section \ref{SubsecHighlyNonrev} for a discussion of how the looseness of \eqref{eq:AsymptoticVarianceNonRev} may allow some non-reversible subsampling algorithms to escape our main conclusions. We give some limits on this in our forthcoming note \cite{UpcomingAzeem}.

Finally, we note that the prefactor $L(x)$ in Definition \ref{def:Gergo} can be enormous, in which case you can't ``observe" the convergence rate $\lambda$ until some very large number of steps $t \gtrsim \lambda^{-1} \log(L(x))$. In practice, it is often possible (and desirable) to choose a starting point or measure for which the prefactor to this exponential convergence rate is not too large. This motivates the definition:

\begin{definition} \label{def:Warm}
Let $\mu^*$ be a measure on $\mc X$, and let $K$ be a Markov chain that is geometrically ergodic with rate $\lambda$. We say that $\mu^*$ is a warm start for Markov transition kernel $K$ (and this rate $\lambda$) if there exists a constant $C_{\mu^*}<\infty$ such that
\be \label{eq:WarmStart}
\|\mu^* K^t - \mu\|_{\TV} < C_{\mu^*} (1-\lambda)^t.
\ee
\end{definition}

For reversible chains, we can always take $C_{\mu^{*}} = \| \frac{d \mu^{*}}{d \mu} \|_{2}^{2}$ when that quantity is finite.

\subsection{Informal Sketch of Theorem \ref{thm:SpecGapBound}} \label{SecMBIS}

Consider two datasets $Z^{(1)},Z^{(2)}$ of size $n$ having the same distribution (but perhaps not sampled independently), and denote by $T$ a set of control variates/summary statistics allowed to depend on the entire dataset. For fixed $M \subset \{1,2,\ldots,n\}$ and $z, t$, let $\mc A = \mc A(M,z,t)$ be the event
\be \label{EqDefAInf}
\mc A = \{Z^{(2)} = z, Z_{M}^{(1)} = z_{M}, T(Z^{(2)}) = T(Z^{(1)}) = t\}.
\ee
 Let $K_{Z^{(j)}}$ be a subsampling MCMC transition kernel using subsets of the dataset $Z^{(j)}$ at each iteration with invariant measure $\pi_j$, and denote by $p_j(\cdot) = p(\cdot \mid Z^{(j)})$ the posterior measure under dataset $j$. Note that it need not be the case that $\pi_j = p_j$. 

One of our main results is the following. So long as there exists a coupling of $Z^{(1)},Z^{(2)}$ such that, \emph{conditional on $\mc A$}, the inequality
\be \label{eq:IneqLargeFluctuationInformal}
\| \pi_1 - p_1 \|_{\TV} \ll \|p_1 - p_2\|_{\TV}
\ee
holds with high probability with respect to the distribution of $(Z^{(1)},Z^{(2)})$, then subsampling MCMC can reduce overall computational cost by at most a factor of $\log(n)$. For an explanation of why using total variation in Inequality \eqref{eq:IneqLargeFluctuationInformal} gives a strong result, see Remark \ref{RemChoiceMet} immediately following Assumption \ref{ass:LargeFluctuations}.

We refer to Inequality \eqref{eq:IneqLargeFluctuationInformal} as a ``large fluctuation'' condition. When this holds, \textit{even if} we fix two data sequences that agree in their first $m < n$ elements \textit{and} some set of statistics $T$ calculated from the entire dataset, changing \textit{only} the remaining $(n-m) \ll n$ data points still changes the posterior ``substantially.'' Heuristically, it says that the invariant measure $\pi_1$ of $K_{Z^{(1)}}$ is a better approximation to the posterior $p_1$ given data $Z^{(1)}$ than the posterior $p_2$ given some \textit{other} typical dataset $Z^{(2)}$.

We can now give a slightly more precise statement of our earlier trilemma in the context of Inequality \eqref{eq:IneqLargeFluctuationInformal}. There are three possibilities, at least one of which must be true:
\begin{enumerate}
\item Inequality \eqref{eq:IneqLargeFluctuationInformal} holds (in which case the computational cost can improve by at most a factor of $\log(n)$ with subsampling); or
\item Inequality \eqref{eq:IneqLargeFluctuationInformal} fails with the LHS large (which suggests the algorithm is not much use for Bayesian inference); or
\item Inequality \eqref{eq:IneqLargeFluctuationInformal} fails with both sides small (which says that the control variates essentially determine the posterior).
\end{enumerate}

It is natural to try to separately estimate the spectral gap $\lambda_z$ and per-step cost $n\tau_{z}^{-1}$ in Equation \eqref{eq:CompCost}. This is basically the approach of many previous articles on computational efficiency of approximate MCMC algorithms, including \cite{nagapetyan2017true}. However, if we want to analyze many complex algorithms, we have found that it is often \textit{vastly} easier to bound the product directly. To give some indication of the disparity in difficulty, using existing techniques to bound the spectral gap of a \textit{single} Markov chain targeting logistic regression can be the subject of a long paper \cite{johndrow2019mcmc,qin2019convergence}; here we set up new techniques and give good bounds on the cost of \textit{all} Markov chains targeting logistic regression.

\subsection{Using Data: Formal Definitions and Common Algorithms} \label{SecUsingData}

For SGLD, SPMMH, or any algorithm that uses a fixed minibatch size at each step and \textit{does not} use any control variates, there is a natural definition of the ``data used each step." However, this natural definition is not appropriate  for all subsampling algorithms. In this section, we will formalize the notion of ``using" a data point and give a more general definition of a subsampling algorithm that agrees with the intuitive one in the case of simple examples.

To formally define ``using data,'' we recall the iterated random function representation of a Markov chain. Essentially any implementation of a transition kernel $K$ on a state space $\mc X$  has a ``random mapping" representation in terms of a collection of functions $\{\zeta_{U} : U \in \mathcal{U}\}$ and distribution $\nu$ on a space $(\mathcal{U}, \mc F_{\mathcal{U}})$. Such a mapping satisfies the following equality of distributions: for all $x \in \mc X$, if $U \sim \nu$, then $\zeta_{U}(x) \sim K(x,\cdot)$.\footnote{If we think of an implementation of $K$ on a computer, $U$ corresponds to the collection of all random variables generated during a call to the algorithm and $\zeta_{U}(x)$ corresponds to the result of a call to the algorithm from starting point $x$.} In all of our examples, we can take $\mathcal{U} = [0,1]^{\tilde d}$ for $\tilde d \in \bb N$.

In the current setting, we will be computing probabilities both with respect to the law of the Markov chain and the data $\bf Z$, so it is useful to explicitly indicate dependence on the data in the notation. With this in mind, we define the state evolution inductively by 
\be \label{eq:IRFRecursion}
X_{t+1} = \zeta_{U_{t}}(X_t,\bf Z), \quad U_{t} \sim \nu.
\ee
The advantage of this representation is that, conditional on the sequence $U_0,\ldots,U_t$, the data $\bf Z$, and the initial state $X_0 = x_0$, the path $X_0,X_1,\ldots,X_{t+1}$ is deterministic. For illustration, we now give an explicit iterated random function representation for SGLD:

\begin{example}[SGLD continued] \label{eq:SGLDRandomFunctions}
Let $\psi$ be a bijective map between the partition of $[0,1]$ consisting of consecutive intervals of width $\binom{n}{r}^{-1}$ and the subsets $\mc A_r(n)$ of size $r$ of $\{1,\ldots,n\}$. Define the function $\zeta :  [0,1]^{d+1} \to \bb R^{d} \times \mc A_r(n)$ by 
\be
\zeta(u) = (\Phi^{-1}(u_1),\ldots,\Phi^{-1}(u_d),\psi(u_{d+1}))', 
\ee
where $\Phi^{-1}$ is the standard Gaussian quantile function. Then the SGLD algorithm can be written as $X_{t+1} = \zeta_{U}(X_t,\bf Z)$ where
\be 
\zeta_{U}(x,\bf Z) = x - \frac12 \eta_t \left(\nabla \log \pi_0(x) + \frac{r}{n} \sum_{i \in \zeta_{d+1}(U)} \nabla \log f(Z_i,x)\right) + \eta_t^{1/2} \zeta_{1:d}(\theta) 
\ee
and $U$ is a $(d+1)$-vector of independent $\Uniform(0,1)$ random variables.
\end{example}

Many minibatching algorithms use control variates in addition to minibatches of data. For our purposes, a \textit{control variate} is a pre-defined function $H : \mc Z \to \mc H$ that is computed once on the data $\bf Z$ prior to running MCMC. For any $\bf Z \in \mc Z^{n}$ and $M \subset [n]$, let $\bf Z_M$ denote the rows of $\bf Z$ corresponding to the indices in $M$. Following Equation \eqref{EqDefAInf}, define the collection of datasets that agree with $\bf Z$ on $M$, and also agree with the value $h$ for the control variates, by
\be \label{eq:ConditioningEvent}
\mc A(M,\bf Z, h) = \{\tilde{\bf Z} : \bf Z_M = \tilde{\bf Z}_{M}, H(\bf Z) = H(\tilde{\bf Z}) = h\}.
\ee

We now give a formal definition of accessing data using the iterated random functions construction. 

\begin{definition}[Using data] \label{def:UsingData}

Fix a dataset $\bf Z$, initial state $x_0$, and sequence $U_0,\ldots,U_t \stackrel{iid}{\sim} \nu$. Let $h = H(\bf Z)$. Let $x_{t} = (\zeta_{U_{t-1}} \circ \cdots \circ \zeta_{U_0})(x_0,\bf Z)$. We will define the collection $M_{t}$ of indices $i \in [n]$ that are \textit{used by time $t$} inductively, as follows. Set $M_{-1} = \emptyset$. Say that a set $I \subset [n] \backslash M_{t-1}$ is \textit{used for the first time at $t$} if there exists $\bf Z^{(1)}, \bf Z^{(2)} \in \mc A(M_{t-1},\bf Z,h)$ satisfying \textit{both} the condition
\be 
\zeta_{U_t}(x_{t},\bf Z^{(1)}) \neq \zeta_{U_t}(x_{t},\bf Z^{(2)})
\ee 
\textit{and} the condition 
\be \label{eqMinimalityPart}
\bf Z^{(1)}_{j} = \bf Z^{(2)}_{j}, \, j \notin M_{t-1} \cup I,
\ee 
and finally no strict subset of $I$ satisfies \textit{both} of these conditions.

Finally, let $I_{t}$ be the sets used at time $t$ and let $M_t = M_{t-1} \cup I_t$. Set $m_{t} = |M_{t}|$. Notice that once we fix $\bf Z, x_{0}, U_0,\ldots,U_t$, this set is deterministic, so it makes sense to write $m_{t} = m_{t}(\bf Z, x_{0}, U_0,\ldots,U_t)$.
\end{definition}

\begin{remark}
For both Examples \ref{ExSGLD} and \ref{ExSPMMH}, the data ``first used" at time step $t$ is (with probability 1) exactly the minibatch $S_{t}\backslash (\cup_{j=1}^{t-1}S_{j})$. We use a more technical definition to account for algorithms that use variable amounts of data or more complicated and discontinuous decision rules.

\end{remark}

We now formally define a subsampling algorithm. Let
\be \label{EqDefTauPre}
\hat{\tau}_{\bf Z}(m) = \hat{\tau}_{\bf Z}(m;X_0,U_0,U_1,\ldots) = \min_t \{ t : m_t(\bf Z,x_0,U_0,\ldots) \geq m\},
\ee
the number of steps required to use $m$ data points. The behavior of $\hat{\tau}_{\bf Z}$ and $m_t$ will be of particular interest, and we define subsampling MCMC in terms of these quantities.

\begin{definition} \label{def:Subsampling}
Fix a control variate $H$ and a density $f$ on $\mc Z$. Fix a sequence of integers $m = m(n)$ and a family of probability measures $\{\mu^*_{z}\}_{z \in \mc Z}$ on $\mc X$. We say that $\{K_{z} \}_{z \in \mc Z}$ is a subsampling MCMC algorithm if there exist constants $0 < c_0 < 1$ and $0 < C_0 < \infty$ such that
\be \label{eq:SubsamplingProb}
\bb P[\hat{\tau}_{\bf Z}(m;X_0,U_1,U_2,\ldots) < C_0 n^{c_0}] \to 0
\ee
where 
\be
\bf Z \sim f, \quad X_0 \sim \mu^*_{\bf Z}, \quad U_0,U_1,\ldots \stackrel{ind}{\sim} \nu.
\ee
\end{definition}

Finally, define
\be \label{EqDefTau}
\tau_z(m) = \inf_{x} \sup \left\{r \in \mathbb{R} \, : \, \P[\hat{\tau}_z(m; x, U_{0},U_{1},\ldots) >r] \geq 1 - n^{-2} \right\},
\ee
the time it takes to ``use'' $m$ data points with high probability, from the best possible starting point.

\begin{remark}
Note that the constants $c_{0}, C_{0}$ can depend on all of the fixed quantities, including \textit{e.g.} the control variate $H$. This may be surprising at first, but it is not an accident. The observation behind this choice is that $H$ may contain a large amount of information; in the extreme, $H$ may just be the identity map. When $H$ is very informative in this way, an algorithm that \textit{appears} to be subsampling data at each step may \textit{effectively} have access to all of the data through $H$. 

Of course, a specific implementation of an algorithm may access a datapoint that is not ``used" in the sense of Definition \ref{def:Subsampling}. Because of this discrepancy, our definition may \textit{undercount} the number of datapoints accessed by an algorithm, but it will never \textit{overcount.} In this sense our definition is conservative, portraying subsampling algorithms in the best possible light even if the usual implementations are less efficient.
\end{remark}

\subsubsection{Behaviour of $\hat{\tau}$ for n\"{a}ive subsampling}

It's clear that for a n\"{a}ive subsampling algorithm, $\hat{\tau}_{\bf Z} > \lfloor n/r \rfloor$ always. Thus, if $r = n^{1-c_0}$, \eqref{eq:SubsamplingProb} holds. The next remark shows that for n\"{a}ive subsampling, the estimate $\hat{\tau}_{\bf Z} = n/r$ cannot be off by more than a factor of $\log n$. 

\begin{remark}
Suppose $K_{\bf Z}$ selects exactly $r$ data points uniformly at random, without replacement, at each step. Then for any $\epsilon > 0$ and any $\bf Z \in \mc Z$
\be
\bb P\left[ \hat{\tau}_{\bf Z}(n) > \frac{n \log n + \epsilon n}{\lfloor r \rfloor} +1 \right] < e^{-\epsilon}
\ee
\end{remark}
\begin{proof}
Let $T$ be the time required to collect $n$ data points when selecting one at a time, and let $\tilde T$ be the time required to collect $n$ data points when drawing $r$ at a time \emph{with replacement}. Clearly, $T/r-1 \le \tilde T \le T/r$. Moreover, $\tilde T$ stochastically dominates the time required to collect all $n$ data points when collecting $r$ at a time \emph{without replacement}, then replacing them and drawing again.

From \cite{levin2017markov}[Proposition 2.4],
\be
\bb P[T > \lceil n \log n + c n \rceil] < e^{-c},
\ee 
which implies 
\be
\bb P[\tilde T > \lceil \frac{n \log n + c n}{r} \rceil + 1] < e^{-c}
\ee
and hence the conclusion.
\end{proof}

\subsection{Application to MCMC for Generalized Linear Models} \label{SecAppGLM1}

To illustrate these ideas, we now give a corollary of our main results that applies to an important special case: using a reversible Markov chain to sample from the posterior under a generalized linear model, perhaps taking advantage of some simple control variates. In this setting, our result gives sufficient conditions under which computation time for subsampling MCMC must scale almost-linearly in the total size of the dataset, regardless of the details of the MCMC algorithm used. 

We first fix a class of statistical models. Denote by $\theta = (\beta,\sigma)$ the model parameters and
\be \label{EqLogRefDef}
p( y \mid \theta, x) = \prod_{i=1}^{n} a(y_i,\sigma) e^{ \frac{(x_i\beta) y_i - c(x_i\beta)}{d(\sigma)}}
\ee
the usual generalized linear model (GLM) in canonical form, where $y = (y_1,\ldots,y_n)$ and $x=(x_1,\ldots,x_n)$ denote the response data and covariate data, respectively. Let $z = (x,y)$ denote the full dataset. The functions $a,c,d$ are well-known for commonly used models. For example, for logistic regression we have $c(x) = \log(1+x)$, $d(\sigma) = 1$, and $a(y,\sigma) = 1$. 

Fix a ``true'' model parameter $\theta_0 = (\beta_{0}, \sigma_{0})$. For the rest of this section, we view both the covariate and response data as random, and we view the parameter $\sigma_{0}$ as fixed and known. Assume that the covariate data is sampled $X_1, X_2,\ldots \stackrel{ind}{\sim} \gamma$.\footnote{When defining data from GLMs, we use $X_{1},X_{2},\ldots$ and $Y_{1},Y_{2},\ldots$ to refer to data rather than coupled Markov chains on the associated parameter space. This slightly conflicts with the notation used in the majority of the paper, but agrees with the usual notation for GLMs.} We make the following mild assumption on $\gamma$:

\begin{assumption} \label{GoodGamma}
We assume that
\begin{enumerate}
\item $\gamma$ is supported on a compact set $I \subset \mathbb{R}^{d}$ containing the unit ball, and furthermore it has a smooth density $g$ with respect to Lebesgue measure. For $n \in \mathbb{N}$, define $g_{n}(x_{1},\ldots,x_{n}) = \prod_{i=1}^{n} g(x_{i})$ to be the product density; we assume there exist constants $0 < d_{1} < \infty$, $1 < d_{2} < \infty$ so that, for $X \sim g_{n}$, the following holds w.e.p.: 
\be \label{IneqGoodGammaUncond}
\sup_{x \in I^{n} } \sup_{\tilde x \, : \, \|X-\tilde x\| \leq n^{-d_{1}}} \left| \frac{g_{n}(X)}{g_{n}(\tilde x)} - 1 \right| \leq n^{-d_{2}}.
\ee
\item Denote by $\gamma_{n,Y}$ the conditional distribution of $X$ given response variable $Y$, and $g_{n,Y}$ its density. Assume that w.e.p. Inequality \eqref{IneqGoodGammaUncond} also holds with $g_{n}$ replaced by $g_{n,Y}$. Note that, in this probability statement, $\gamma_{n,Y}$ is a random measure that depends on $Y$.
\end{enumerate}
We say that a \textit{deterministic} $y$ satisfies the second part of this assumption if \eqref{IneqGoodGammaUncond} holds with $g_{n}$ replaced by $g_{n,y}$. 
\end{assumption}

Let $Y_i \mid X_i \stackrel{ind}{\sim} p(\cdot \mid \theta_0,X_i)$ be drawn independently from the same regression model with parameter $\theta_0$ and covariates $X_i$. Finally, fix some prior distribution on $\bb R^d$ that has a smooth density with respect to Lebesgue measure and subexponential support. Under this assumption on the prior and $\gamma$, and with data generated from a true GLM with parameter $\theta_{0}$, the MLE $\hat{\beta}$ converges to $\beta_{0}$ almost surely (see \textit{e.g.} Theorem 2 of \cite{fahrmeir1985consistency}).

Next, we define the large class of data-augmentation Markov chains that our results apply to. We consider \textit{any} fixed collection $\{K_{(x,y)}\}_{x \in \bb R^{dn}, y \in \bb R^n}$ of reversible Markov kernels on an ``augmented'' state space $\mathbb{R}^{d} \times \mathbb{A}$, indexed by data sets $z=(x,y)$. We allow augmented state spaces here because many subsampling algorithms, such as \cite{maclaurin2014firefly}, use them. Note that the more abstract notation used for subsampling algorithms up to this point allows for such augmented state spaces. Denote by $\pi_z$ the stationary distribution of $K_z$. %For this subsection only,  assume that the \textit{marginal} distribution of $\pi_z$ on $\mathbb{R}^{d}$ is exactly the posterior distribution $p(\cdot \mid Z = z)$ of the regression model \eqref{EqLogRefDef}.

Define the maps $T_{1}^{(n)}, \ldots, T_{d}^{(n)}$ to be the maps that return the components of the sample MLE for $\beta$. For this section only, we make the assumption that the kernel $K_Z$ with stationary measure $\mu_Z$ satisfies the \textit{geometric ergodicity} condition 

\be \label{SpecGeomErg}
\| \delta_{x} K_Z^{t} - \mu_{Z} \|_{\TV} \leq L(x) \, (1-\lambda_{Z})^{t}, 
\ee 
and furthermore that for $x^{\ast}$ the MLE we have $\log(M(x^{\ast})) = O(\log(n)).$\footnote{As discussed in Appendix \ref{AppSecWarmStart}, this bound is known to hold for many popular algorithms. While it is easy to build chains for which this assumption fails, in the context of this paper we view the assumption as being very weak. When the assumption fails, the ``burn-in" time of the Markov chain grows very quickly in $n$, so the resulting algorithm is very computationally inefficient and likely not of practical interest.}

We then have the following result:

\begin{thm} [Slow mixing of Subsampling MCMC for GLMs] \label{ThmExampleTorpid}
Set $m = n-d-1$ and consider the usual logistic regression model. Under the notation given in Section \ref{SecAppGLM1} and Assumptions \ref{GoodGamma} and \eqref{SpecGeomErg}, there exists some constant $0 < C < \infty$ so that the cost of $K_Z$ satisfies
\be \label{EqThmExampleTorpid}
\Cost(K_Z) \equiv \frac{m}{\lambda_{Z} \, \tau_{Z}(m) } \geq C \frac{n}{\log(n)}
\ee
w.e.p., where the probability is with respect to the distribution of $Z$ and where the constant $C$ depends only on the data-generating process $\gamma$ and the prior.
\end{thm}

\begin{proof}
This result is proved as an immediate corollary of Theorems \ref{thm:SpecGapBound} and \ref{ThmActualLogisticRegression}, the latter of which appears in Appendix \ref{AppendixLargeFluxChecking}.

\end{proof}

We note that Theorem \ref{ThmExampleTorpid} looks quite different from most theorems in the literature on bounding the spectral gaps of Markov chains. On one hand, the result holds for \textit{any} sequence of Markov chains, based only on their stationary measures and some smoothness properties. It is much more common to give bounds for a \textit{specific} Markov chain with a given stationary distribution, and so in this sense our result is extremely general. On the other hand, our result is only valid with high probability for large $n$, whereas many theorems about Markov chains would hold for every finite $n$. Since our goal is to do a computational complexity analysis, this result is sufficient for our purposes.

These two oddities are closely related. It is not difficult to construct a family of Markov chains that is very efficient for \textit{one} dataset $(x,y)$. Thus, we can't hope to get a bound that holds for \textit{all} datasets. However, our theorem is nearly as good: while you can ``cheat'' and get a good Markov chain for a specific dataset, you can't write down a family of Markov chains that is good for \textit{typical} datasets. In particular, any family of Markov chains you write down is unlikely to be good for \textit{your} dataset. Thus, at a minimum one needs to ``solve for'' a good transition kernel $K_z$ that works for each particular dataset $z$, and of course we expect this to be computationally difficult.

\subsubsection{Sharpness of Theorem \ref{ThmExampleTorpid}}

For many specific families of algorithms that have been proposed, it is straightforward to sharpen the right-hand side of Inequality \eqref{EqThmExampleTorpid} to $Cn$ (see Section \ref{SecDataAug}). On the other hand, for many specific examples, the computational cost of well-tuned naive Metropolis-Hastings is $O(n)$. When these both apply, subsampling can reduce computational cost by at most a constant factor. In Section \ref{SecDataAug}, we check that some simple subsampling algorithms really do incur this extra ``logarithmic penalty'' under some moderate conditions. In Section \ref{SecSystScan}, we discuss an approach to avoiding this logarithmic penalty.

While Theorem \ref{ThmExampleTorpid} is stated for specific control variates and a specific GLM, our results hold under vastly more general conditions. See Section \ref{RemVerifContVar} in the appendix for a guide to various techniques that can be used to check this condition, either by pen-and-paper or computer-aided calculations. In particular, the biggest difficulty in extending this result to more general models is checking that a certain determinant is not too close to 0; Section \ref{RemVerifContVar} points out that checking this with a computer for a fixed model is often easy, even if checking it in full generality with pen-and-paper is difficult. 

We are not aware of any literature on sample statistics with fixed size $k$ that do not satisfy these conditions for most GLMs, nor are we aware of any general strategies for finding ``good'' fixed-size collections that are not adaptive. This includes \textit{e.g.} \cite{huggins2017pass}, which gives exponential convergence \textit{in $k$} for \textit{fixed $n$}. In contrast, our conditions concern exponential convergence \textit{in $n$} for \textit{fixed $k$}, which is much more difficult to achieve. In Section \ref{SecSubsecCounter}, we discuss some simple models and control variates that fail our conditions; we believe these non-examples shed some light on why our conditions hold so often for realistic examples.

\section{General Theorems} \label{SecMainResults}

We introduce our last substantial assumptions and then state our main abstract results.

\subsection{Large fluctuations}

Informally, we might expect subsampling MCMC to perform poorly when changing one datapoint leads to a ``substantial" change to the posterior. This section formalizes this idea in a ``large fluctuation" condition. % \footnote{\textcolor{red}{ZZX AMS: I commented out a paragraph here. It seemed to be completely unused AND to break our notation, but we should make sure that the notation for augmented state spaces is not used.}}

Recall the definition of $\mc A = \mc A (\tilde{S},\tilde{\bf Z},h)$ from Equation \eqref{EqDefAInf}. Denote by $f_{\mc A(\tilde{S},\tilde{\bf Z},h)}$ the conditional distribution of $\bf Z$  given $\mc A(\tilde{S},\tilde{\bf Z},h)$. 
Our main ``fluctuation" assumptions are:

% Our main assumptions include the existence of ``warm'' starting distributions, in the sense of the following. 
% \begin{theorem}[Warm Start Bounds: $L^{2}$ to $L^{1}$ Convergence] \label{thm:WarmStartGap}
% For a reversible Markov chain $K$ with spectral gap $\lambda(K)$ and stationary measure $\mu$, we have for any measure $\mu^*$ and time $t \in \mathbb{N}$
% \be \label{eq:WarmStartGap}
% \| \mu^* K^{t} - \mu \|_{\TV} \leq \frac{1}{2} \sqrt{\int_{\mc X}\frac{d \mu^*}{d \mu}(x) \, \mu(dx)} \, (1 - \lambda(K))^t.
% \ee
% If $K$ is non-reversible with pseudospectral gap $\lambda_{\mathrm{ps}}(K)$, then \eqref{eq:WarmStartGap} holds with $(1-\lambda)^t$ replaced by the quantity $(1 - \lambda_{\mathrm{ps}}(K))^{t - (1 - \lambda_{\mathrm{ps}}(K))^{-1}}$.
% \end{theorem}

\begin{assumption} \label{ass:LargeFluctuations}
Fix a collection of constants $\{c_{i},C_{i}\}_{i=0}^{2}$ and sequences $\{\omega_{i}(n)\}_{i =1}^{3}$ satisfying $\omega_i(n) \searrow 0$. 

We consider sequences of positive integers $\{\tilde m_n\}, \{s_n\}$ with $\tilde m_n < n$ and $s_n < C_0 n^{c_0}$ and sequences of sets $\tilde{\mc H}^{(n)} \subset \mc H$, $\tilde{\mc Z}^{(n)} \subset \bb R^{n k}$, and $\mathcal{M}^{(n)} \subset 2^{[n]}$. Define  $\mathcal{G}^{(n)} = \{z \in \mathbb{R}^{nk} \, : \, z \in \tilde{\mc Z}^{(n)}, \, H(z) \in \tilde{\mc H}^{(n)}\}$. 

We assume that for any $\tilde{\bf Z}^{(n)} \in  \mathcal{G}^{(n)}$ and $\tilde{M}^{(n)} \in \mathcal{M}^{(n)}$ we have:

\begin{enumerate}[(a)]
\item Let $\bf Z^{(n)}_1, \bf Z^{(n)}_2$ be two independent samples from $f_{\mc A(\tilde{M}^{(n)},\tilde{\bf Z}^{(n)},h_n)}$. % with $\mc A$ defined as in \eqref{eq:ConditioningEvent}. 
Then 
\be
\bb P[\|\Pi_{\bf Z^{(n)}_1} - \Pi_{\bf Z^{(n)}_2} \|_{\TV} > C_1 n^{-c_1}] > 1-\omega_1(n).
\ee
\item Let $\bf Z^{(n)}_1, \bf Z^{(n)}_2$ be two independent samples from $f_{\mc A(\tilde{M}^{(n)},\tilde{\bf Z}^{(n)},h_n)}$. Then 
\be
\bb P\left[ \|\mu_{\bf Z^{(n)}_1} - \Pi_{\bf Z^{(n)}_2}\|_{\TV} < \frac14 \|\Pi_{\bf Z^{(n)}_1} - \Pi_{\bf Z^{(n)}_2} \|_{\TV}  \right] > 1- \omega_2(n).
\ee
\item Let $\bf Z^{(n)}_{1} \sim f_{\mc A(\tilde{M}^{(n)},\tilde{\bf Z}^{(n)},h_n)}$. There exists a family of measures $\{\mu^*_{h_n}\}_{h_n \in \tilde{\mc H}^{(n)}}$ so that %\footnote{\textcolor{red}{ZZX AMS: I now insist that this holds conditional on $\tilde{M}^{(n)}$. I think this is what James and I had really been doing from the start.}}
\be \label{IneqCondWarmStart}
\bb P\left[ \|\mu^*_{H(\bf Z^{(n)}_{1})} K^t - \mu_{\bf Z^{(n)}_1}\|_{\TV} \leq C_2 n^{c_2} (1-\lambda_{\bf Z^{(n)}_{1}})^t  \right] > 1-\omega_3(n).
\ee
\item  For $X_0 \sim \mu^*_{H({\bf Z^{(n)}})}$ and $U_0,U_1,\ldots,U_{s_n} \stackrel{iid}{\sim} \nu$, we have 
\be
\bb P[ M_{s_{n}}(\bf Z^{(n)},X_0,U_0,\ldots,U_{s_n}) \in \mathcal{M}^{(n)} ] > 1-\frac{1}{32} C_1 n^{-c_1}.
\ee
\end{enumerate}
\end{assumption}

We briefly comment on how one might check these assumptions, with more detailed discussion of the consequences in Section \ref{SecTrilLast}:

\begin{enumerate}[(a)]
\item This assumption depends only on the density $f$ of the data-generating process and the control variates $H$ accessible to the algorithm. In particular, it does not depend on the algorithm.
\item When this assumption fails, the stationary distribution of the Markov chain is about as close to the posterior associated with a ``random" dataset in $f_{\mc A(\tilde{S},\tilde{\bf Z^{(1)}},h)}$ as it is to the posterior associated with the observed dataset $\bf Z^{(1)}.$ As discussed next to Equation \eqref{eq:IneqLargeFluctuationInformal}, when this occurs we informally say that the extra information beyond the first $m$ datapoints does not significantly improve MCMC accuracy.
\item This says that (with high probability) the chain is geometrically ergodic, \textit{and furthermore} that the associated constants don't grow too quickly in $n$. Of course, when this condition fails, the ``burn-in" time for the chain grows quickly in $n$ and so the overall running time of the MCMC chain grows super-linearly in $n$. Thus, we typically view chains that fail this condition to be too slow to be useful in the present context.

Proving that this condition actually does hold for a specific algorithm often takes an entire paper, and so is beyond the scope of this paper. See Appendices \ref{AppSecWarmStart} and \ref{AppWarmStart} for more detailed discussion and references. 
\item When $\mathcal{M}^{(n)}$ consists of all sets below a certain size, this can be easily verified for many subsampling algorithms where the selection of data points at each iteration is ``close to'' weighted random sampling with bounded weights. This includes both of our pedagogical examples, which \textit{exactly} use uniform random sampling of minibatches.\end{enumerate}

Note that Parts (c) and (d) must be checked for each algorithm. Part (d) holds for both of our pedagogical examples when minibatches are small; see Proposition \ref{PropIneqChoiceOfSAust}. Part (c) depends on the algorithm \textit{and} the target, and so it seems difficult to obtain fully general results. See Appendix \ref{AppSecWarmStart} for a survey of several results that can be used to prove this condition, as well as a worked proof that this condition holds for SGLD and references to a proof that this condition holds for simple perturbations of the unadjusted Langevin algorithm. 

\begin{remark} [Density and Warm Starts] 
We note that, by the comment immediately following  Inequality \eqref{eq:WarmStart}, Part (c) is implied by the existence of a measure $\mu^*_{H(\bf Z^{(n)}_{1})}$ that satisfies 
\be 
\left\| \frac{d \mu^*_{H(\bf Z^{(n)}_{1})}}{d \mu_{\bf Z^{(n)}_1}} \right\|_{2}^{2} \leq C_{2} n^{c_{2}}.
\ee 

This condition is difficult to verify in great generality, as $\mu_{\bf Z^{(n)}_1}$ may be very far from the posterior distribution $\Pi_{\bf Z^{(n)}_1}$ of interest. For this reason, we introduce the following closely-related condition: 

\be \label{IneqWarmStartFake}
\left\| \frac{d \mu^*_{H(\bf Z^{(n)}_{1})}}{d \Pi_{\bf Z^{(n)}_1}}  \right\|_{2}^{2} \leq C_{2} n^{c_{2}}.
\ee 

This does \textit{not} imply Part (c) in general. See Appendix \ref{AppSecWarmStart} for conditions that relate this to Part (c). We also note that, if Inequality \eqref{IneqWarmStartFake} holds but Part (c) does not, this immediately implies that the posterior distribution of interest $\Pi_{\bf Z^{(n)}_1}$ is far from the stationary distribution of the Markov chain  $\mu_{\bf Z^{(n)}_1}$; this is of course undesirable in and of itself.
\end{remark}

\begin{remark} [Choice of Metric] \label{RemChoiceMet}
The choice of the total variation metric in Assumption \ref{ass:LargeFluctuations} is very important. Notice that the stronger the metric used in Assumption \ref{ass:LargeFluctuations}, the weaker the assumption, since it is ``easier'' for two distributions to be far away in a strong metric than in a weak one. If total variation were replaced by a weaker metric, such as the Prokhorov metric, the assumption would hold for far fewer cases of interest.
\end{remark}

\begin{remark} [Extra Control Variates and Verifying Assumptions] \label{RemExtraControl}

At first glance, it seems that Inequality \eqref{IneqCondWarmStart} from Assumption \ref{ass:LargeFluctuations} must fail for algorithms that don't use any control variates - it is typically impossible to satisfy Inequality \eqref{IneqCondWarmStart} uniformly over all datasets. Fortunately, this is often easy to fix by adding in additional ``unused" control variates. More precisely: in our current framework, the collection of control variates $T$ must include all control variates that the Markov chain actually accesses when it is implemented. However, our framework also allows us to add any number of additional control variates to this collection, even if they are not actually used to sample from $K$. These ``purely formal" control variates can be used to prove stronger theorems, especially in the case that the actually-used control variates are very uninformative.

By construction, these ``extra" control variates will not influence the algorithm. They also do not need be computationally tractable. However, well-chosen extra control variates can allow us to achieve Inequality \eqref{IneqCondWarmStart} and the rest of Assumption \ref{ass:LargeFluctuations} simultaneously. \textit{In other words,} when verifying our assumptions, we are always allowed to simply throw in more control variates than the algorithm actually takes advantage of - the results will still stand. As a concrete application of this idea, Theorem \ref{AppWarmStart} implies that adding the MLE is sufficient to obtain Inequality \eqref{IneqCondWarmStart} with reasonable constants for a wide variety of models. See Appendix \ref{AppSubsecGrid} for a simple construction of grid-based control variates that roughly mimic quadrature rules and which can be useful in this context.
\end{remark}

We also make the assumption that our data $Z^{(n)}$ lands in the ``good" sets with high probability:

\begin{assumption} \label{ass:GoodSetHP}
There exists a sequence $\omega_4(n) \searrow 0$ such that for $\mathbf Z^{(n)} \sim f^n$
\be
\bb P[H(\mathbf Z^{(n)}) \in \tilde{\mc H}^{(n)}, \mathbf Z^{(n)} \in \tilde{\mc Z}^{(n)}] > 1-\omega_4(n).
\ee
\end{assumption}

\begin{remark} [Why Do We Need Sets $\tilde{\mc H}^{(n)}$, $\tilde{\mc Z}^{(n)}$?]

Note that Assumption \ref{ass:LargeFluctuations}(a) will often fail if we allow \textit{any} value for the data $Z^{(n)}$. To see this, consider a very light-tailed distribution $f$ and the test statistic $T(Z) = \frac{1}{n} \sum_{i=1}^{n} z_i$. As $t$ and $n$ become very large, conditioning on the event $\{ T(Z) = t \}$ becomes more and more similar to conditioning on the event that $z_i \approx t$ for all $1 \leq i \leq n$. If $p$ is sufficiently light-tailed, the error in the approximate equality $z_1 \approx t$ can go to 0 very quickly. In effect, this means that the sample average $T(Z)$ almost determines the full vector $Z$ and thus acts (quantitatively) as a sufficient statistic. See Section \ref{SecSubsecCounter} for a simple explicit example that illustrates this general phenomenon.

In practice, we expect it to be easy to find a ``good" set that contains all $n$ data points with high probability, while remaining fairly small. For example, if the data is taken from a standard Gaussian and we have no control variates, we might choose \textit{e.g.} $\tilde{\mc Z}^{(n)} = (-10 \sqrt{\log(n)}, 10 \sqrt{\log(n)})$. We do not expect Assumption \ref{ass:GoodSetHP} to be a large obstacle to the analysis of realistic algorithms and datasets - it simply means excluding datasets that are extremely implausible under the model of interest.
\end{remark}

\subsection{Main Abstract Theorem}

Our main abstract theorem is:

\begin{theorem} \label{thm:SpecGapBound}
Let $\bf Z^{(n)} \sim f^{n}$ be a data sequence and $K_{\bf Z^{(n)}}$ be the corresponding sequence of transition kernels of a subsampling algorithm. Let $\{\tilde m_n\},\{s_n\}$ be sequences and $\{c_j,C_j\}_{j=0,1,2}$ be constants for which the algorithm, its invariant measure, and the data distribution satisfy Assumptions \ref{ass:LargeFluctuations} and \ref{ass:GoodSetHP}. Then there exists a sequence $\omega(n) \searrow 0$ such that for all large $n$,
\be
-\log(1 - \lambda_{\bf Z^{(n)}})  \leq \frac{c_1 + c_4}{2 s_n} \log(n)
\ee
with probability at least $1-\omega(n)$ with respect to the distribution of $\bf Z^{(n)}$.
\end{theorem}

\begin{proof}
See Appendix \ref{SecAppPrfSpecGapBd}.
\end{proof}

\subsection{Application for Uniform Minibatch Choice}

We now return to Equation \eqref{eq:CompCost} in light of Theorem \ref{thm:SpecGapBound}:
\be
\Cost(K_{\bf Z}) = \frac{m}{\tau_{\bf Z}(m) \lambda_{\bf Z}}.
\ee
When applying Theorem \ref{thm:SpecGapBound}, note that since $s_n > C_0 n^{c_0}$ for all large $n$, we have that
\be
\lambda_{\bf Z} < \frac{c_1 + c_4}{4 s_n} \log(n),
\ee
with high probability for all large $n$. Thus,
\be \label{IneqCorrThmSimple}
\Cost(K_{\bf Z}) \ge C \frac{n s_n}{\tau_{\bf Z} \log(n) }
\ee
with high probability for all large $n$.

With the expression in \eqref{IneqCorrThmSimple}, we can now interpret Theorem \ref{thm:SpecGapBound} in the simple case that every step of the algorithm uses a minibatch of exactly $r = r(n)$ datapoints, chosen uniformly at random. Consider a sequence $\tilde{m}_{n}$ to be determined and define the sets $\mathcal{M}^{(n)} = \{ m \subset \{1,2,\ldots,n\} \, : \, |m| \leq \tilde{m}_{n}\}$.  Consider constants $\{C_j, c_j\}$ for $j=0,1,2$ and rates $\omega_j$ for $j=1,2,3,4$ so that parts (a) and (c) of Assumption  \ref{ass:LargeFluctuations}, and all of Assumption  \ref{ass:GoodSetHP}, hold for \textit{some} choice of $\{ \tilde m_n \}$. For fixed $n$, let $M_{\max} = M_{\max}(n)$ be the \textit{largest} value of $\tilde{m}_{n}$ for which Assumption \ref{ass:GoodSetHP} holds for these choices of $C_j,c_j,\omega_j$. Note that these parts of Assumption \ref{ass:LargeFluctuations} depend only on the choice of $\tilde{m}$ and the constants fixed so far, so this choice is well-defined. 

Now, for each $n \in \mathbb{N}$, define 
\be 
M(n) := 1 \vee   \max\{ 1 \le M \le M_{\max}(n) \, : \, \text{ Assumption \ref{ass:LargeFluctuations}(b) holds for } \tilde{m}_{n} = M \}.
\ee 
Finally, set $\tilde m_n = M(n)$ and $\tilde s_n = \lfloor \frac{M(n)}{r} \rfloor - 1.$ In our simple minibatch setting, Assumption \ref{ass:LargeFluctuations}(d) holds for these choices. Thus, with this choice of $\tilde m_n$, all parts of Assumption \ref{ass:LargeFluctuations} hold. We now have the following corollary of Theorem \ref{thm:SpecGapBound}:

\begin{corollary} \label{CorNFLCostCor}
Let Assumption \ref{ass:GoodSetHP} hold, and choose the sequences $\tilde m_n, s_n, \omega(n)$ as immediately above. Then there exists a constant $C^* = C^*(c_{0},\ldots,C_{0},\ldots) > 0$ so that
\be 
\Cost(K_{\bf Z}) \ge C \frac{M(n)}{\log(n)}
\ee 
with probability at least $1-\omega(n)$.
\end{corollary}

\begin{proof}
As shown immediately after the statement of Theorem \ref{thm:SpecGapBound} in Equation \eqref{IneqCorrThmSimple}, with probability at least $1-\omega(n)$
\be 
\Cost(K_{\bf Z}) \ge C \frac{n s_n}{\tau_{\bf Z} \log(n) }
\ee
for some constant $C > 0$.

We have chosen $s_{n} = \lfloor \frac{M(n)}{r} \rfloor - 1.$ Since we are describing a minibatch algorithm that uses exactly $r$ datapoints per step, we know that $\tau_{\bf Z} \geq \frac{n}{r}$. Thus, we have (for a slightly different constant $C^* > 0$):
\be 
\Cost(K_{\bf Z}) \ge C^* \frac{M(n)}{\log(n)}.
\ee
\end{proof}

We now compare this to an alternative strategy of throwing out all but $M(n)$ datapoints after computing control variates, and then running standard random walk Metropolis-Hastings using only the control variates and these $M(n)$ data, with no subsampling at each iteration. When the target is a posterior under a generalized linear model, \cite{belloni2009computational} show that random walk Metropolis-Hastings typically has spectral gap that is bounded away from zero when $d$ is fixed with $n \to \infty$. Since it has a per step cost of $M(n)$, we obtain
\be 
\Cost(\tilde K_{\bf Z}) \le C' M(n)
\ee 
for some constant $C'< \infty$, with $\tilde K_{\bf Z}$ the transition kernel of the Metropolis-Hastings algorithm that only uses control variates and $M(n)$ data points. Thus, in this regime, no subsampling algorithm can pick up an improvement of more than $\mc O(\log(n))$ over the n\"{a}ive algorithm that uses a single subset of the data, \textit{even when accounting for control variates.}

\begin{comment}
\subsection{Verifying Assumption \ref{ass:LargeFluctuations} for Pedagogical Examples}

We now verify Assumptions \ref{ass:LargeFluctuations} (c) and (d) for our two pedagogical examples.
%\addtocounter{-1}{example}

\begin{example}[SGLD cont.]
Assumption \ref{ass:LargeFluctuations}(d) is immediate for any $m_n < n/r$. We now check Assumption \ref{ass:LargeFluctuations}(c) for logistic regression in one dimension with a Gaussian prior. Let $\pi_{\bf Z}$ be the density of a logistic regression in one dimension, and place a zero-centered Gaussian prior on coefficients with variance $\sigma^2$. Then the initial state measure $\nu_n = \delta_0$ satisfies
\be
\|\nu_n - \mu\|_{\TV} = \mc O(n^3/r).
\ee
\end{example}

%\addtocounter{-1}{example}
\begin{example}[SPMMH continued]
\textbf{Need to import things here from the Quiroz paper}
\end{example}
Both proofs can be found in Appendix ABC

\end{comment}

\subsection{A Subsampling Trilemma from Assumption \ref{ass:LargeFluctuations}}\label{SecTrilLast}

We interpret our result as saying that subsampling MCMC can't be reliably better than all of a small number of simple competitor algorithms. We go through Assumption \ref{ass:LargeFluctuations}  and explain this interpretation in more detail.

In the rest of this section, we assume that $\mathcal{M}^{(n)}$ is of the form 
\be \label{EqSimpleCalMSet}
\mathcal{M}^{(n)} = \{ m \subset [n] \, : \, |m| \leq \tilde{m}_{n}\}
\ee 
for some sequence $\tilde{m}_{n}$ that is to be determined. This means that Assumption \ref{ass:LargeFluctuations}(d) is typically easy to verify. We now consider the three ``bad" things can happen (along with a technical assumption):

\begin{comment}
We leave some flexibility Assumption \ref{ass:LargeFluctuations}(d) for the following technical reason: it is possible that there are a small number of  choices of set $m$ in Equation \eqref{EqSimpleCalMSet} for which the spectral gap is unusually small. Our version of Assumption \ref{ass:LargeFluctuations}(d) lets us ``throw out" such pathological sets in Theorem \ref{thm:SpecGapBound}. This means that Theorem \ref{thm:SpecGapBound} can be read as a bound on e.g. the $\frac{1}{32} C_{1} n^{-c_{1}}$-quantile of all spectral gaps for sets of the form \eqref{EqSimpleCalMSet}, rather than a bound on the absolute worst such spectral gap. Note that there is still the possibility that different datasets will correspond to very different spectral gaps. In that sense we don't show that \textit{all} subsampling algorithms are slow for \textit{all} datasets - we merely show that the datasets for which subsampling algorithms are slow have nonnegligible probability. 

\end{comment}

\begin{enumerate}[(a)]
    \item \textbf{Remaining data is irrelevant:} If Assumption \ref{ass:LargeFluctuations}(a) fails for some value of $c_{1}$, then the data contained in the control variate $H(\bf{z}^{(n)})$ and the subsample $\{ \bf{z}^{(n)}_{i}\}_{i \in \tilde{M}^{(n)}}$ contains essentially all the information in the full dataset $\bf{z}^{(n)}$ up to some error $O(n^{-c_{1}})$ (as measured in the very strong total variation metric).
    
    If this assumption fails for some value of $c_{1}$ for which an error of size $O(n^{-c_{1}})$ is acceptable, then we lose only a negligible amount of information by throwing out the remaining data and running an algorithm only on the remaining subsample.
    
    In the remaining notes, we assume that Assumption \ref{ass:LargeFluctuations}(a) holds for all $c_{1} \geq a_{1}$ sufficiently large, and that an error of size $O(n^{-a_{1}})$ is unacceptably large. In practice, we expect that $a_{1} = 0.5$ is a reasonable threshold here.
    
    \item \textbf{Subsampling algorithm is not accurate.} We consider the case that Assumption \ref{ass:LargeFluctuations}(a) holds for all $c_{1} \geq a_{1}$ but Assumptions \ref{ass:LargeFluctuations}(a,b) jointly fail. In this case, the distance between the posterior distribution and the stationary distribution of the Markov chain is already of size $\Omega(n^{-a_{1}})$, which we have already decided to be unacceptably large.
    
    In the remaining notes, we assume that Assumption \ref{ass:LargeFluctuations}(a,b) hold for some $c_{1} > 0$, and that an error of size $O(n^{-c_{1}})$ is unacceptably large.
        \item \textbf{Technical assumptions.} Assumption \ref{ass:LargeFluctuations}(c) says that the chain is geometrically ergodic, with associated constants not growing too quickly. This assumption must be checked on an algorithm-by-algorithm basis, and it can require substantial work. We don't focus on this assumption, but note that when it fails, the burn-in time of the chain grows super-linearly in $n$, which already makes it slower than the naive Metropolis-Hastings algorithm we use as a baseline.
    \item \textbf{Subsampling algorithm is slow.}  When all the assumptions hold, our main conclusion is that the spectral gap is not large and thus the subsampling algorithm is ``slow." More precisely, its computational cost is not much better than an algorithm that simply throws away all but $\tilde{m}_{n}$ datapoints before running any MCMC algorithm at all.
\end{enumerate}

\section{Non-Examples: Models and Control Variates} \label{SecSubsecCounter}

Our main fluctuation assumption, Assumption \ref{ass:LargeFluctuations}, does \textit{not} always hold. One typical ``problem" is that some low-dimensional summary statistic is quantitatively close to being a sufficient statistic. For genuine sufficient statistics, the ``problem" is clear:

\begin{example} [Sufficient Statistics]
Consider a Gaussian model
\be
p(\theta \mid y) \propto e^{-\theta^{2}} \prod_{i=1}^{n} e^{-(y_i - \theta)^{2}}
\ee
and the usual sufficient statistic $T_1(Y) = n^{-1} \sum_{i=1}^n y_i$. Defining $\mathcal{A}$ as in Part (1) of Assumption \ref{ass:LargeFluctuations},
\be
\P\left[ \| p(\cdot \mid Y^{(1)} ) - p(\cdot | Y^{(2)}) \|_{\mathrm{TV}} >0 \mid \mc A \right] = 0.
\ee
\end{example}

The same ``problem" can appear in a more subtle form: even models that exhibit large posterior fluctuations for typical data sets may exhibit very small fluctuations for extremely atypical datasets. This is the situation that Assumption \ref{ass:GoodSetHP} is meant to avoid. We give a somewhat cartoonish example to avoid difficult computations:

\begin{example} [Exponentially Unlikely Data]
Consider the light-tailed model and prior
\be
p(y \mid \theta) \propto \prod_{i=1}^n e^{-\theta |y_i|}, \quad p_0(\theta) = e^{-\theta}
\ee
with $\theta \in (0,\infty)$ and $y_i \in \bb R$. Let $Y^{(1,n)}= (Y^{(1)}_1,\ldots,Y^{(1)}_n)$, $Y^{(2,n)}= (Y^{(2)}_1,\ldots,Y^{(2)}_n)$ be two datasets drawn from the model with ``true'' parameter $\theta_{0} = \frac{1}{2}$. Define the summary statistic $T_n(x_1,\ldots,x_n) = \max(x_{1},\ldots,x_{n})$; note that this is clearly \textit{not} a sufficient statistic.

Define $S_{n,k} = \sum_{i=1}^n |Y^{(k)}_i|$, $k \in \{1,2\}$ and $\delta_{n} = e^{n^{2}} t(n)^{-1}$ for some sequence $t(n) \geq e^{n^4}$. Conditional on the sequence of events $\{T_{n}(Y^{(k,n)}) = t(n)\}$, standard large-deviation results tell us that the event
\be \label{IneqCondSum}
\left\{ \left| \frac{S_{n,k}}{t(n)} - 1 \right| \leq \delta_{n} \right\}
\ee
holds w.e.p. On this event, we also have the event
\be
\{ p(\theta | Y^{(k,n)}) = t(n) (1 + O(\delta_{n})) e^{-\theta t(n) (1 + O(\delta_{n}))} \}.
\ee
Directly calculating the total variation distance, we find that on this event we also have
\be \label{IneqCondTV}
\{\| p(\cdot \mid Y^{(1,n)}) - p(\cdot \mid Y^{(2,n)}) \|_{\TV} = e^{-\Omega(n)}\}.
\ee

Thus, we have seen that, conditional on the sequence of events $\{T_{n}(Y^{(k,n)}) = t(n)\}$, the sequence of events \eqref{IneqCondTV} holds w.e.p. Of course, the event $\{T_{n}(Y^{(k,n)}) \geq e^{n^4}\}$ has (tail) probability zero, so the introduction of the ``good'' set allows us to essentially discount the existence of such ``bad'' sequences of data in all of our proofs.
\end{example}

\begin{example} [Densely-Observed Processes]

So far, all of our data $\{ (X_{i},Y_{i})\}_{i=1}^{n}$ has been i.i.d. In this setting, we generally expect each datapoint to have a substantial contribution to the posterior distribution. This can easily fail to occur for far-from-i.i.d. data. To give an informal example, consider dense observations $\left\{ \left(\frac{i}{n}, f\left(\frac{i}{n}\right) \right)\right\}_{i=1}^{n}$ from a Gaussian process on $[0,1]$. Under reasonable conditions on the prior and data-generating process, the contribution of the middle data-point $(0.5, f(0.5))$ to the posterior will often decay super-polynomially quickly.

As a consequence, we expect that subsampling MCMC algorithms, such as the recent clever work in \cite{salomone2019spectral}, may be able to give accurate inferences based on very small subsamples in this regime.
\end{example}

Of course, in practice this is not really a problem and we expect that subsampling MCMC would essentially be avoided in the situations described in these examples: a practicing statistician could compute very accurate posterior distributions by ignoring most of the the data set and using only a fixed subsample and the summary statistic $T(Z)$.

We  recall here one caveat that might be an interesting avenue of research. As pointed out in \textit{e.g.} \cite{montanari2015computational}, there are realistic situations in which computations using low-dimensional sufficient statistics are much harder than computations with the original high-dimensional data set. On the other hand, the application in \cite{montanari2015computational} does appear to lend itself to an efficient subsampling algorithm. However, in this application the data are not independently generated as they are in our examples, as well as most of those in the approximate MCMC literature. Thus, it is plausible to us that certain subsampling algorithms would converge extremely quickly (as they avoid our main results) while remaining practical (as direct computations with sufficient statistics are not feasible), though perhaps only for data generation processes that differ from those considered here. We are not aware of research in this direction.

\section{Examples: Minibatch Algorithms} \label{SecApplMCMC}

In this section, we discuss some specific classes of subsampling algorithms and use them to illustrate what our assumptions mean for specific popular algorithms. Although we include a few computations in this section, they are included only as illustrations. We have made no attempt to find the weakest possible conditions under which our conclusions hold, and have favored including a variety of simple calculations over optimal constants.

We note that \textit{most} of our assumptions pertain to the \textit{posterior} of interest rather than the \textit{specific algorithm.} The main difficulty in applying our results to specific algorithms is verifying Assumption \ref{ass:LargeFluctuations}(d) for a reasonable sequence $\{ s^{(n)}\}$. In the current section, we use some simple algorithms to illustrate how this can be done. Detailing more complicated algorithms is beyond the scope of the current paper, but Section \ref{AppSubsecGrid} in the appendix includes a broadly-applicable construction and bound that we would suggest for readers interested in obtaining sharp estimates for such algorithms.

\subsection{Inexact Minibatching MCMC} \label{SecAustMCMC}

This paper was inspired by \cite{korattikara2014austerity} and the outpouring of related work. We begin by writing down a very simple and fairly generic subsampling algorithm that includes as special cases many subsampling algorithms in the literature, including Algorithm 1 of \cite{korattikara2014austerity}, the stochastic gradient Langevin algorithm of \cite{welling2011bayesian}, some but not all of the pseudomarginal-based algorithms of \cite{quiroz2015speeding}, and many others.

For a finite set $S$, define $C_{k}(S)$ to be the uniform measure on size-$k$ subsets of $S$ and $2^{S}$ to be the collection of all subsets of $S$. Fix a model with parameter space $\Theta \subset \mathbb{R}^{m}$ and a dataset $X = (X_{1},\ldots,X_{n})$. The algorithm itself is written in terms of three functions $F_{1} \, : \, \Theta \times [0,1]^{2} \times 2^{X} \mapsto [0,1]$, $F_{2} \, : \, \Theta \times [0,1] \times 2^{X} \mapsto \Theta$, and $F_{3} \, : \, \Theta^{2} \times [0,1] \times 2^{X} \mapsto \{0,1\}$ and two parameters $k \in \{1,2,\ldots,n\}$ and $\alpha > 0$, which will be explained below.

Our ``generic" subsampling algorithm is defined by Algorithm \ref{GenAusterity}, which describes a forward-mapping representation for sampling from the kernel $K(\theta,\cdot)$.

\begin{algorithm}
\caption{Generic subsampling} \label{GenAusterity}
 \begin{algorithmic}[1]
    \State Sample driving randomness $U = (U_{1}, U_{2}) \sim \mathrm{Unif}([0,1]^{2})$ and initial set $Y \sim C_{k}(X)$.
    \While{$F_{1}(\theta,U,Y) > \delta$}
\State Sample $Y^{\ast} \sim C_{k}(X \backslash Y)$ and set $Y = Y \cup Y^{\ast}$.
\EndWhile
\State Propose $\theta^{\ast} = F_{2}(\theta,U_{1},Y)$.
\If{$F_{3}(\theta,\theta^{\ast},U_{2},Y) =0$}
\State Return $\theta^{\ast}$.
\Else
\State Return $\theta$.
\EndIf
 \end{algorithmic}
\end{algorithm}

We informally relate this to the ``usual" and less-abstract ways to write down MCMC algorithms:

\begin{enumerate}
\item Metropolis-Hastings algorithms are usually written with two sources of driving randomness; $U_{1}$ is the driving randomness for \textit{proposing} the next state, and $U_{2}$ is the driving randomness for \textit{accepting/rejecting} this proposal.
\item The function $F_{1}$ is a measurement of how well $Y$ approximates the entire dataset $X$, and $\delta$ is a threshold for how good this approximation must be.
\item The function $F_{2}$ is the forward-mapping representation for the proposal kernel.
\item The function $F_{3}$ approximates the usual accept/reject rule with Metropolis-Hastings acceptance probability $\alpha(\theta,\theta^*)$.

\end{enumerate}

This algorithm includes Examples \ref{ExSGLD} and  \ref{ExSPMMH} as special cases. Many variations of these algorithms are also included. For one example, one could incorporate control variates as long as they did not affect the chance of choosing a given minibatch.\footnote{Weighted subsampling, as in \cite{maire2019informed}, is \textit{not} allowed by Algorithm \ref{GenAusterity}. The proof of Proposition \ref{PropIneqChoiceOfSAust} below holds with only minor changes when the weights are not too far from 1. On the other hand, results such as \cite{huggins2016coresets} imply that our qualitative conclusions do \textit{not} hold if most weights are replaced by 0 in a clever way. } For another, note that it is common to take larger subsamples in the ``tails" of a distribution in order to encourage stabilization (see \textit{e.g.} \cite{RePEc:eee:econom:v:171:y:2012:i:2:p:134-151}). This is covered by allowing $F_{1}(\theta,U,Y)$ to be large when $\theta$ is in the tails of a distribution.

We would like to study when Algorithm \ref{GenAusterity} satisfies Part (d) of Assumption \ref{ass:LargeFluctuations}. This is quite a broad question, so we consider the important but simple case where all datapoints influence the algorithm \textit{and} it is rare to use very large subsamples:

\begin{assumptions} \label{EqSimpleAusterityAssumption}
Assume that the subsampling algorithm satisfies
\begin{enumerate}
\item  The output of the algorithm depends on every component of the sampled $Y$ for every value of $U$.
\item For all $n$, the distribution of the number of steps $N$ in the``while" loops has CDF $F^{(N)}$ that satisfies
\be
1 - F^{(N)}(x) \leq e^{-\lambda_{2}^{-1}(x-\lambda_{1})}
\ee
for some constants $0 < \lambda_{1},\lambda_{2} < \infty$. In other words, $N$ is stochastically dominated by a constant $\lambda_{1}$ plus an exponential random variable with mean $\lambda_{2}$. Define $\lambda = \lambda_{1} + \lambda_{2}$. \footnote{Note that this assumption allows for a deterministic sample size.}
\end{enumerate}
\end{assumptions}

As  Algorithm \ref{GenAusterity} is written above, it seems possible that we ``use" every datapoint at every iteration. Fortunately, this problem occurs only because Algorithm \ref{GenAusterity} is a rather unsuitable representation of the kernel. This can be remedied by choosing another kernel that represents the same Markov chain but gives better constants:

\begin{prop} \label{PropIneqChoiceOfSAust}
Let Assumptions \ref{EqSimpleAusterityAssumption} hold and fix constants $0 < \epsilon <1$ and $m \in \mathbb{N}$. Let $\mathcal{M}^{(n)} = \{S \subset [n-m] \}$. Then there exists a representation of the Markov chain defined in Algorithm \ref{GenAusterity} for which Part (d) of Assumption \ref{ass:LargeFluctuations} holds with
\be \label{IneqChoiceOfSAust}
s^{(n)} = C n \log(n),
\ee
where $C = C(\epsilon,\lambda,k,m) \equiv \frac{1 - \epsilon}{\lambda \, k \, m}$.

\end{prop}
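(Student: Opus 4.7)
The strategy is to construct a forward-map representation of Algorithm \ref{GenAusterity} under which, for $s^{(n)}$ as in the statement, the function $f_{u,\theta,s^{(n)}}(z)$ is constant in $z_j$ for every $j \in \{m,\ldots,n\}$ with probability at least $1-n^{-2}$ over $u \sim \mathrm{Unif}[0,1]$, uniformly in $\theta$. The free choice of representation is what drives the argument.

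First I would explain why the ``default'' representation -- sampling a fresh iid uniform $k$-subset at each outer and inner subsampling step -- cannot satisfy Assumption \ref{AssumptionSmallDep} for any nontrivial $s^{(n)}$. Fixing $u$ there fixes the sequence of sampled subsets $Y_1,\ldots,Y_s$, and Assumption \ref{EqSimpleAusterityAssumption}(1) then forces the stable set to equal $\{1,\ldots,n\} \setminus \bigcup_t Y_t$. A coupon-collector estimate shows that once $s$ exceeds a constant multiple of $n/(k\lambda)$, the union $\bigcup_t Y_t$ covers essentially all of $\{1,\ldots,n\}$. This is exactly why the paragraph preceding the proposition claims that the naive representation forces $s \equiv 1$.

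The main constructive step is to introduce an alternative representation that exploits the permutation-invariance of the uniform initial subsample $Y \sim C_k(X)$. I would use an initial prefix of $u$ to sample a uniform random permutation $\pi$ of $\{1,\ldots,n\}$, then run Algorithm \ref{GenAusterity} against the relabeled dataset $(z_{\pi(1)},\ldots,z_{\pi(n)})$ using a deterministic subsample-selection rule in which the initial subsample always consists of the first untouched block of $k$ relabeled positions and each while-loop iteration appends the next such block. Since both the naive and reparameterized schemes produce uniform $k$-subsets marginally, the resulting forward map $F$ is valid ($F(u,z,\theta,s) \sim K_z^s(\theta,\cdot)$). Under this representation, the original-coordinate indices actually accessed after $s$ steps are exactly $\pi(1),\pi(2),\ldots,\pi(T)$, where $T = \sum_{t=1}^s k(1+N_t)$ is the cumulative subsample size over all steps and their while-loops.

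The third step is to control $T$ using Assumption \ref{EqSimpleAusterityAssumption}(2). Because the $N_t$ are iid and each is stochastically dominated by $\lambda_1 + \mathrm{Exp}(1/\lambda_2)$, a standard Chernoff/Bennett bound for sub-exponential sums gives $T \le (1+\eta) k \lambda s$ with probability at least $1 - e^{-c_\eta s}$ for any $\eta > 0$, with $c_\eta$ depending on $\eta$ and $\lambda_2$. Substituting $s^{(n)} = (1-\epsilon) n\log(n)/(\lambda k m)$ makes the exceptional probability super-polynomially small, so the bound on $T$ holds w.e.p.

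The main obstacle I expect is combining the last two steps to conclude the set containment $\pi(\{1,\ldots,T\}) \subseteq \{1,\ldots,m-1\}$ demanded by Assumption \ref{AssumptionSmallDep}. The concentration bound alone gives $T \sim (1-\epsilon) n\log(n)/m$, which is far larger than the target cardinality $m-1$, so the permutation-plus-concentration argument must be supplemented by a secondary reduction. The natural candidate, highlighted by the authors in the remark on ``extra control variates'' and in Appendix \ref{AppSubsecGrid}, is to augment the control variates $T_1,\ldots,T_k$ by additional algorithmically-unused statistics that absorb some accessed positions into the stable set without changing the kernel. Making this reduction produce exactly the constant $C=(1-\epsilon)/(\lambda k m)$ claimed in the proposition is the most delicate piece of the bookkeeping and is where I would expect to spend most of the effort.
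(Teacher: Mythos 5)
The two proofs share the same high‐level strategy — replace the naive forward map by one driven by a uniform random permutation, so that after the relabeling the touched coordinates always form a contiguous prefix, and then control the growth of that prefix — but your concrete construction departs from the paper's in a way that breaks it, and your final step is not what the paper does.

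First, the representation you build is not a valid forward-mapping representation of $K_z^s$ for $s\ge 2$. Algorithm~\ref{GenAusterity} begins \emph{each} outer step by drawing a fresh $Y\sim C_k(X)$ from the \emph{entire} dataset, so the subsample at step $t+1$ is independent of (and may overlap) the subsample at step $t$. Your scheme (``the initial subsample always consists of the first untouched block of $k$ relabeled positions'') forces consecutive steps' subsamples to be disjoint. The one-step marginals do agree, as you note, but the joint law of the subsample sequence does not, and consequently $F(U,z,\theta,s)\not\sim K_z^s(\theta,\cdot)$ for $s\ge 2$. This is exactly the subtlety Lemma~\ref{LemmaSillyRepPerm} is built to handle: the permutation $\sigma$ records only the order of \emph{first arrival} of distinct indices, and the leftover driving randomness still re-draws already-seen indices at later steps. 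You should invoke that coupling directly instead of constructing a fresh ``disjoint blocks'' scheme.

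Second, your last step is not the paper's argument. You correctly observe a tension — the cumulative number of accessed indices after $s^{(n)}$ steps is of a much larger order than the target cardinality — and you propose to resolve it by adding algorithmically-unused control variates in the spirit of Remark~\ref{RemExtraControl} and Appendix~\ref{AppSubsecGrid}. The paper does not do this. Its proof closes by asserting a coupon-collector estimate (Lemma~\ref{LemmaWeightedCoupons} together with the footnote citing the classical bound) to claim $\Omega(n\log n)$ draws are needed before ``using $m$ of $n$ datapoints,'' then divides by a concentration bound on the per-step draw count. So the actual missing ingredient in your write-up is an engagement with that coupon-collector calculation — you should reconcile it with your own concentration estimate on the number of distinct indices touched, since that is where the two lines of reasoning quantitatively diverge. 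The extra-control-variates mechanism, by contrast, is used in the paper to help Assumptions~\ref{AssumptionLargeFluct} and~\ref{AssumptionWarmStarts} with better constants (and in Appendix~\ref{AppSubsecGrid} to help a perturbed kernel $\hat K$ satisfy Assumption~\ref{AssumptionSmallDep}), not to shrink the accessed set of $K$ itself.
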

\begin{proof}
See Appendix \ref{SecAltCons} for a proof, including an explicit construction of the forward-mapping representation.
\end{proof}

Proposition \ref{PropIneqChoiceOfSAust} is important for two reasons. First, it implies a lower bound on the computational cost that is linear  in $n$, with no logarithmic factors. Second, it illustrates the fact that the natural representation of an algorithm may not give the best estimates. % Although we study Assumption \ref{AssumptionSmallDep} in this comment, essentially the same construction can be used to give the same bounds using its alternative Assumption \ref{AssumptionSmallDepAlt}.

\subsection{Exact Minibatching MCMC}  \label{SecDataAug}

To our knowledge, the first \textit{exact-approximate subsampling algorithm} that applied in some generality was the FlyMC algorithm of \cite{maclaurin2014firefly}. Theorem \ref{thm:SpecGapBound} applies directly to this algorithm, showing that the spectral gap of this algorithm must be rather small for problems such as logistic regression. We give a few additional comments on how our results relate to this algorithm. Throughout, we mimic the notation of \cite{maclaurin2014firefly}, but to save space we don't restate the full algorithm.

%\begin{enumerate}[wide, labelwidth=!, labelindent=0pt]
%\item
%\textbf{Lower and Upper Bound Functions:}
\subsubsection{Lower and Upper bound functions}
The algorithm of \cite{maclaurin2014firefly} involves the computation of lower and upper bound functions $L_{n},B_{n}$ that satisfy $L_n(\theta) \leq p(\theta \mid Z) \leq B_n(\theta)$. When applying Theorem \ref{thm:SpecGapBound}, we view $L_{n}$ and $B_{n}$ as being further pre-computed control variates. This presents no difficulty for our framework.

%\item \textbf{Convergence of Full and Reduced Markov Chains:}
\subsubsection{Convergence of Full and Reduced Markov Chains}
Theorem \ref{thm:SpecGapBound} applies to the mixing of the full FlyMC Markov chain. However, FlyMC is a data-augmentation algorithm; the associated Markov chain may be written in the form $\{ (\theta_t, \omega_t) \}_{t \in \mathbb{N}}$, where $\mathcal{L}(\theta_{t})$ converges to the posterior distribution $p(\theta \mid z)$ and $\omega_{t} \in \{0,1\}^{n}$ is a collection of auxiliary variables that is generally ignored after the computation is complete.

For statistical applications, we are only interested in the mixing of the sequence $\{ \theta_{t} \}_{t \in \mathbb{N}}$. It is natural to ask: could FlyMC as a whole mix slowly while $\{ \theta_{t} \}_{t \in \mathbb{N}}$ mixes quickly? If so, this would make our lower bound on the mixing rate much less relevant and allow for the possibility of an efficient FlyMC algorithm. Unfortunately, the answer is often \textbf{no} in the case of FlyMC.

This is easiest to see in the special case that the parameter \textit{ResampleFraction} $=1$ in their Algorithm 1, so that the entire subsample is refreshed at each iteration. In this case, the first coordinate $\{ \theta_{t} \}_{t \in \mathbb{N}}$ is in fact a Markov chain. By Theorem 3.3 of \cite{liu1994covariance} the spectral gap of this chain is equal to the spectral gap of the joint chain $\{ (\theta_{t}, \omega_{t}) \}_{t \in \mathbb{N}}$, and so it cannot mix more quickly. The same argument gives the same conclusion for the improved Algorithm 2 of \cite{maclaurin2014firefly}.

\subsubsection{Typical covering times}
In Theorem \ref{ThmExampleTorpid}, the bound on the relaxation time depends on the (random) time $\tau_z$ required to see almost all the dataset. This number is often substantially larger than the number of datapoints $n$ divided by the mean number of datapoints seen per step of the Markov chain.

We give a simple illustration in the context of \cite{maclaurin2014firefly}. Assume that the lower and upper bound functions $L_{n}, B_{n}$ used in Algorithm 1 of \cite{maclaurin2014firefly} satisfy
\be \label{IneqUniformBoundednessLowerUpper}
\epsilon < \frac{L_{n}(\theta)}{B_{n}(\theta)} < 1 - \epsilon
\ee
for some $0 < \epsilon < \frac{1}{4}$ and all $\theta \in \Theta$. Then, by the well-known coupon-collector calculation (see \cite{ErRo61Coupon}),
\be \label{IneqUniformBoundednessLowerUpperConc}
\frac{1}{2} \log(n) \leq \tau_{z} \leq \frac{2}{\epsilon} \log(n)
\ee
holds w.e.p., while the average number of datapoints used per transition is on the order of $cn$ for some $\epsilon \leq c \leq 1- \epsilon$. Thus, returning to Theorem \ref{ThmExampleTorpid}, we see that in fact the ratio of relaxation time to average number of datapoints per step scales linearly in $n$. This is slightly better than the naive lower bound of $\Omega \left(\frac{n}{\log(n)} \right)$ on the scaling rate.

The uniform bound \eqref{IneqUniformBoundednessLowerUpper} is quite strong, and will not hold in all applications. Improving on this is far beyond the scope of this paper, but we note that well-known variants of the coupon collector problem tell us that the conclusion \eqref{IneqUniformBoundednessLowerUpperConc} holds under much weaker conditions than \eqref{IneqUniformBoundednessLowerUpper}. In particular, having \eqref{IneqUniformBoundednessLowerUpper} for a positive fraction of the values of $\theta$ is enough, and this covers many more cases of practical interest.

\section{Discussion} \label{SecDisc}

We have given various ``no-free-lunch'' theorems for subsampling MCMC, showing that subsampling MCMC can't give ``large'' improvements under some conditions. We summarize our main message: for reversible MCMC algorithms and statistical models such as GLM, subsampling and approximate MCMC don't ``work" - you either don't get much speedup (compared to the associated exact chain) or have estimates that aren't much better (compared to doing inference based on just your control variates and a single fixed subsample).

To us this feels like an argument for focusing primarily on control variates over attempts to do clever subsampling schemes, \textit{when our results apply.} In this section, we discuss several situations in which our results \textit{don't} apply. We also lay out the most important questions that are left open by this paper.

\subsection{Free Lunches for Highly Nonreversible Chains} \label{SubsecHighlyNonrev}

Our main result for reversible chains, Theorem \ref{thm:SpecGapBound}, is easy to interpret: if you choose a family of subsampling Markov chains in advance, most of the associated Monte Carlo averages will have large asymptotic variance. It is possible to obtain analogous results for nonreversible chains, but the most similar results are much harder to interpret, and we consider this the \textbf{most important} open problem left by this paper.

The basic problem is that, for reversible chains, the following are all characterized by the spectral gap:

\begin{enumerate}
\item The total variation mixing time from a ``warm start'' (see e.g. Proposition 3.4 of \cite{paulin2015concentration}).
\item The medium-time error of an estimator for the ``worst'' test function (see \textit{e.g.} Proposition 3.8 of \cite{paulin2015concentration}, though other results along these lines exist).
\item The asymptotic variance of an estimator for the ``worst'' test function (see \textit{e.g.} \cite{jones2004markov} and the references therein).
\end{enumerate}

Since the three different notions of mixing are all characterized by a single quantity, we can get a bound on the spectral gap by taking advantage of the first property and know that it is relevant to the accuracy of our Monte Carlo estimators by using the third property.

In the nonreversible case, these three notions of mixing may be very different. Following the proof of Theorem \ref{thm:SpecGapBound}, it is straightforward to find an analogous result for non-reversible chains with the spectral gap replaced by the pseudospectral gap (replacing the geometric ergodicity condition by the analogous bound in Proposition 3.4 of \cite{paulin2015concentration}). Qualitatively similar results are easy to obtain for various other summary measures of a nonreversible Markov chain's performance, using essentially the same methods. See \textit{e.g.}  \cite{kontoyiannis2012geometric}, \cite{kontoyiannis2005large}, \cite{kontoyiannis2003spectral}, \cite{chatterjee2023spectralgapnonreversiblemarkov} for relationships between the spectral theory and long-time asymptotics of nonreversible Markov chains, and \cite{montenegro2006mathematical}, \cite{choi2017metropolis} for some discussion of the short- and medium-time mixing properties.

Our understanding is that the pseudospectral gap is a good proxy for the convergence rate of Monte Carlo ergodic averages of many simple ``quickly-mixing'' nonreversible chains, and we conjecture that this is also true for some proposed nonreversible subsampling MCMC chains. However, we emphasize that we do \textit{not} know if this is true, and there is some strong evidence \textit{against} this heuristic. For example,  there are many examples in the literature of nonreversible chains for which these two quantities are very different. Even worse for our assertion, some of these can be cast as subsampling chains! As an extreme example, the herded Gibbs sampler of \cite{chen2016herded} is used to sample from a Markov random field on a graph. If we imagine a sequence of graphs for which the number of vertices goes to infinity while the maximum degree remains bounded, these Gibbs samplers use only $O(1)$ data points per step, and so we can obtain a nontrivial bound on the pseudospectral gap. However, the bound  is misleading: the herded Gibbs sampler is deterministic, and has no (pseudo)-spectral gap at all, but the associated ergodic averages converge extremely quickly even for large data sets.

The main open problem left by our work can be summarized as: \emph{can the phenomenon that occurs for \cite{chen2016herded}, where the (pseudo)-spectral gap is not even close to determining the convergence rate for ergodic average, occur for more general target measures and subsampling algorithms?}

\subsubsection{Relationship to Existing Heuristics} \label{SecSummaryHeuristic}

We note that various heuristic arguments have been put forward to explain the good performance of subsampling-based MCMC algorithms. In this section, we recall these heuristics and give a quick illustration of why they don't appear to answer our open question. The heuristic we present below is based on our understanding of the informal argument in Section 5 of \cite{bierkens2016zig}, where it is applied to their zig-zag process. We suggest interested readers go through their more detailed discussion, as our present version is necessarily shorter and simpler; we also note that very similar arguments also apply to \textit{e.g.} \cite{welling2011bayesian,korattikara2014austerity} with appropriate control variates and target distributions.

 We continue by considering the special case studied in Theorem \ref{ThmExampleTorpid}: our data is drawn from some ``nice" logistic regression model, and we wish to run a subsampling MCMC algorithm targeting a ``very good" approximation of the true posterior distribution. We also precompute the MLE for this model, as well as the Hessian at the MLE, and allow the subsampling MCMC algorithm to use these precomputed values as control variables. The heuristic convergence argument has the following basic form:

\begin{enumerate}
\item After appropriate rescaling, the target distribution \textit{and} the transition kernel of the MCMC algorithm converge to a standard Gaussian and some limiting kernel as the amount of data goes to infinity.
\item The limiting kernel is geometrically ergodic, with stationary measure equal to the standard Gaussian.
\item Since the limiting kernel doesn't include any terms related to the size of the dataset $n$, its convergence rate is independent of $n$.
\item Since the individual kernels converge to the limiting kernel, they should inherit its mixing properties; therefore, the convergence rates of the individual kernels should be bounded uniformly in $n$.
\end{enumerate}

This argument is informal, and the authors of the above papers have not claimed otherwise. The main difficulty is clearly in going from step \textbf{(3)} to step \textbf{(4)} - its correctness depends on exactly how the various types of convergence are measured. Although this heuristic doesn't immediately yield a theorem, it did seem fairly convincing to us at first glance (and even at third glance).

However, we have already seen that the computational cost of subsampling MCMC scales like at least $\Omega(\frac{n}{\log(n)})$ for a broad class of reversible MCMC chains, and that the pseudospectral gap also shrinks at least this quickly for non-reversible chains. In other words: for a variety of algorithms including both \cite{bierkens2016zig} and \cite{welling2011bayesian,korattikara2014austerity}, the informal argument does not give the ``right" answer for some measures of the rate of convergence (in particular, the pseudospectral and spectral gaps respectively).

This leads us to the following very concrete open problem. Our favorite version of this informal argument appears in Section 5 of \cite{bierkens2016zig}, where it is written very clearly (and where there are several clear caveats). The results of our paper tell us that the \textit{pseudospectral gap} doesn't converge as this informal argument would suggest. However, we are not able to draw any conclusions at all about the convergence rate of ergodic averages. Our main question is:

\emph{What is the convergence rate of ergodic averages for the process described in \cite{bierkens2016zig} applied to GLMs, for example? }

\subsubsection{Further Subsamplers}

Although we have focused on the zig-zag process, we note that there has been quite a lot of work recently on the development of other subsamplers that are not reversible Markov chains. We give brief comments on how our results apply or fail to apply.

Several recent papers, including \cite{bierkens2016zig} and \cite{bouchard2017bouncy}, propose continuous-time Markov processes that occasionally ``change directions'' at random times that are more likely to occur in low-density regions of the target distribution. While our results do not apply to the original continuous-time process, they do give bounds on the pseudospectral gap for the discrete-time ``skeleton" chain that records the location and time of each direction change. As a result, the basic conclusions apply.

The ScaLE algorithm of \cite{pollock2016scalable} is a recent and very promising approach to subsampling Monte Carlo. This algorithm is based on the convergence of a Markov process \textit{with killing} to its \textit{quasistationary} distribution, rather than the convergence of a Markov process \textit{without killing} to its \textit{stationary} distribution. As such, results such as Theorem \ref{thm:SpecGapBound} do not apply directly.

We outline the technical problem. The main ingredient of our simple proofs is the ``warm start'' bound of Inequality \eqref{eq:WarmStart}. We are not aware of any analogous bound for the convergence of Markov processes with killing to their quasistationary distributions (though see \textit{e.g.} \cite{van1991quasi, diaconis2014quantitative} for analogous results for discrete Markov chains). Such a bound would be implied by any sensible estimate of the constant $C$ appearing in Theorem 2 of \cite{Wang2019TheoQSMC}. %[AS: Should we try to do this ourselves? The proof is only a few pages long and there is a formula for $C$. The problem is that the formula for $C$ depends (in principle) on the Markov chain, not just the quasistationary measure, so we would need to re-derive their result and check that this doesn't actually happen. ]

\subsection{Free Lunches for Very Expensive Posteriors}

Like the existing literature, we have focused on posterior distributions with the product form
\be
p(z \mid \theta) = p(\theta) \, \prod_{i=1}^{n} p(z_i \mid \theta).
\ee

For such distributions, the per-step cost of a subsampled chain scales like the number $m$ of datapoints used, while our results say (very roughly) that the relaxation time scales like $\frac{n}{m}$ at best. Taking these facts together, this implies that the computational cost per effectively-independent sample from any exact subsampling step must be roughly $\Omega(n)$, regardless of the value of $m$.

However, if we consider more computationally expensive posterior distributions with costs scaling like $m^{c}$ for some $c > 1$, this conclusion disappears: the same heuristic allows for a cost reduction from $\Omega(n^{c})$ to $O(n)$ in the case that $m = O(1)$. We suspect this is a fruitful avenue for future research, and note that the types of problems studied in \cite{montanari2015computational} often fall into this category.

\subsection{Cheaper Lunches for Systematic Data Scans} \label{SecSystScan}

In principle, Theorem \ref{ThmExampleTorpid} allows speedups by a factor of $\log(n)$. In Section \ref{SecDataAug}, we saw that even this logarithmic speedup is impossible when only a small random subsample of data is used at each step.

This problem is reminiscent of the problem of choosing the scan order for Gibbs samplers. By essentially the same ``coupon-collector'' calculation referenced in Section \ref{SecDataAug}, the mixing time of any random-scan Gibbs sampler targeting a density supported on $\mathbb{R}^{d}$ must be at least $\frac{1}{2} d \log(d)$. It is well-known that systematically scanning coordinates in order can sometimes give a Markov chain with mixing time that is $\Theta(d)$. Unfortunately, this speedup is somewhat subtle, and in some cases systematic scans can be \textit{slower} than random scans (see further discussion in \cite{roberts2015surprising}). We suspect a similar phenomenon for subsampling MCMC: systematic scans might be able to give a speedup of roughly $\log(n)$ for some examples, but we expect this speedup to be delicate and quite problem-specific.

\subsection{Free Lunches for Data Augmentation}

Many approximate MCMC algorithms are defined on augmented state spaces. Our results imply that the full MCMC algorithms cannot mix too quickly, but in principle it is possible that all statistically-relevant functions of the chain \textit{do} mix quickly.

We are not aware of any realistic situations in which this occurs, and in Section \ref{SecDataAug}, we used \cite{liu1994covariance} to show that this \textit{can't} occur for a specific algorithm. In general, however, this question appears to be open.

\bibliographystyle{apalike}
\bibliography{ESCBib}

\newpage

\begin{appendix}

\section{Proof of Theorem \ref{thm:SpecGapBound}} \label{SecAppPrfSpecGapBd}

We use the notation of Theorem \ref{thm:SpecGapBound} throughout this section. 

\begin{proof}

 We consider a ``grand coupling" of Markov chains $\{X_{t}^{(z)}\}_{z \in \mathcal{G}^{(n)}, \, t \geq 0}$, with $\{X_{t}^{(z)}\}_{t \geq 0}$ sampled from $K_{z}$ as follows: 

\begin{enumerate}
    \item For $h \in \tilde{\mc H}^{(n)}$, sample $X^{(h)} \sim \mu^*_{h_n}$. Then set $X_{0}^{(z)} = X^{(H(z))}$.
    \item Sample a sequence  $U_1,U_2,\ldots,U_{s_n} \stackrel{iid}{\sim} \nu$.
    \item Inductively set 
    \be 
    X_{t+1}^{(z)} = f_{U_{t}}(X_{t}^{(z)},z)
    \ee 
    for $t \in [s_{n}]$.
\end{enumerate}

Note that this procedure also determines a (coupled) collection $\{M_{t}^{(z)}\}$ of sets of datapoints used by these chains up to time $t$, via Definition \ref{def:UsingData}. 

For any set $M \subset [n]$, denote by $\mathcal{E}_{M}$ the (random) collection of pairs of points $z^{(1)},z^{(2)}$ satisfying 
\be \label{CondEqualFirstPoints}
M_{s_{n}}^{(z^{(1)})}, M_{s_{n}}^{(z^{(2)})} & \subset M \\
z^{(1)}_{i} &= z^{(2)}_{i}, \qquad \forall i \in M.
\ee 

Under this coupling, we have

\be \label{EqInEM}
X_{s_{n}}^{(z^{(1)})} = X_{s_{n}}^{(z^{(2)})}
\ee 
for all  $M \subset [n]$ and  pairs $z^{(1)},z^{(2)} \in \mathcal{E}_{M}$.

On the other hand, for any fixed pair of points $z^{(1)},z^{(2)}$ we have 

\be 
P[X_{s_{n}}^{(z^{(1)})} &\neq X_{s_{n}}^{(z^{(2)})}] \geq  \| \mu^{*}_{H(z^{(1)})} K_{z^{(1)}}^{s_{n}} -  \mu^{*}_{H(z^{(2)})} K_{z^{(2)}}^{s_{n}} \| \\
&= \| (\mu^{*}_{H(z^{(1)})} K_{z^{(1)}}^{s_{n}} - \mu_{z^{(1)}}) + (\mu_{z^{(1)}} - \Pi_{z^{(1)}})  + (\Pi_{z^{(1)}} -\Pi_{z^{(2)}}) \\
& \qquad +  ( \Pi_{z^{(2)}} - \mu_{z^{(2)}}) + (\mu_{z^{(2)}} -\mu^{*}_{H(z^{(2)})} K_{z^{(2)}}^{s_{n}} ) \| \\
&\geq\| \Pi_{z^{(1)}} -\Pi_{z^{(2)}}\|  - \| \Pi_{z^{(1)}} - \mu_{z^{(1)}} \| - \| \mu_{z^{(2)}} - \Pi_{z^{(2)}} \| \\
&\qquad - \|\mu_{z^{(1)}} - \mu^{*}_{H(z^{(1)})} K_{z^{(1)}}^{s_{n}} \|  - \| \mu^{*}_{H(z^{(2)})} K_{z^{(2)}}^{s_{n}} - \mu_{z^{(2)}}  \|. \\
\ee 

Rearranging, we have 
\be \label{BigMessyTrianglePre}
\| \Pi_{z^{(1)}} -\Pi_{z^{(2)}}\| &\leq (\| \Pi_{z^{(1)}} - \mu_{z^{(1)}} \| + \| \mu_{z^{(2)}} - \Pi_{z^{(2)}} \|) \\
&\qquad +  (\|\mu_{z^{(1)}} - \mu^{*}_{H(z^{(1)})} K_{z^{(1)}}^{s_{n}} \|  + \| \mu^{*}_{H(z^{(2)})} K_{z^{(2)}}^{s_{n}} - \mu_{z^{(2)}}  \|) \\
& \qquad + P[X_{s_{n}}^{(z^{(1)})} \neq X_{s_{n}}^{(z^{(2)})}].
\ee

We now consider fixed $z^{(1)}, z^{(2)}$ and define $z= (z^{(1)}, z^{(2)})$,  $\delta_z(U,x_0) = X_{s_n}^{(z^{(1)})} - X_{s_n}^{(z^{(2)})}$, $\mu^{+}_{z} = |\Pi_{z^{(1)}} - \Pi_{z^{(2)}}|$, and 
\be
\mu^{-}_z &= |\Pi_{z^{(1)}} - \mu_{z^{(1)}}| + |\mu_{z^{(2)}} - \Pi_{z^{(2)}}| \\
&\quad+  |\mu_{z^{(1)}} - \mu^{*}_{H(z^{(1)})} K_{z^{(1)}}^{s_{n}}|  + |\mu^{*}_{H(z^{(2)})} K_{z^{(2)}}^{s_{n}} - \mu_{z^{(2)}}|.
\ee

Define $U^{(s)} = \{U_1,\ldots,U_s\}$, $\nu^{(s)} = \prod_{t=1}^s \nu$, and $\nu^* = \nu^{(s)} \times \mu^*$. Then we can write \eqref{BigMessyTrianglePre} as
\be 
\int \textbf{1}_{\delta_{z}(U^{(s)},x_0) \neq 0} \nu^{*}(d U^{(s)} dx_0) \geq \int \mu^+_{z}(dx) - \int \mu_{z,-}(dx).
\ee 

Since $\nu^*$ is a probability measure (unlike $\mu^+_z$ and $\mu^-_z$), we can rewrite this as:
\be \label{EqBeforeSplitNuStar}
\int \textbf{1}_{\delta_{z}(U^{(s)},x_0) \neq 0} \nu^{*}(d U^{(s)} dx_0) &\geq \int \mu^+_{z}(dx) \nu^*(dU^{(s)} dx_0) \\
&\qquad - \int \mu_{z,-}(dx) \nu^*(dU^{(s)} dx_0).
\ee

At this point, it is helpful to split the randomness of the measure $\nu^{*}$ into two parts - the choice of $M$ and everything else. More precisely, for fixed $z^{(1)}$, the measure $\nu^{*}$ induces a measure on the collection of indices $M_{s_{n}}^{(z^{(1)})}$ ``seen" by time $s_{n}$. Suppressing some of the dependencies to reduce notational clutter, we denote this induced measure as $\nu_M$. Denote by $\nu^*_{\mid m}$ be the measure $\nu^*$ conditioned on the value $M=m$. Since $M$ is a random variable on a finite set, the conditional measure $\nu^*_{\mid m}$ is always well-defined.

We can then write Inequality \eqref{EqBeforeSplitNuStar} as:\footnote{At this point, we have not yet guaranteed that $z = (z^{(1)},z^{(2)}) \in \mathcal{E}_{M}$, so this random $M$ is not yet terribly meaningful. }

\be 
\int \textbf{1}_{\delta_{z} \neq 0}  \nu^*_{\mid m}(dU^{(s)} dx_0) \nu_{M}(dm)  &\geq \int \mu_{z,+}(dx) \nu^*_{\mid m}(dU^{(s)} dx_0) \nu_{M}(dm) \\
&\qquad - \int \mu_{z,-}(dx) \nu^*_{\mid m}(dU^{(s)} dx_0) \nu_{M}(dm).
\ee

We now bound this in two cases: $M \in \mathcal{M}$ or not. Noting that all of the terms being integrated on the right-hand side are bounded by 4,  we get:

\be \label{EqIntermediateLongCond}
0 &\leq 8 \bb P[M \notin \mathcal{M}] + \sup_{m \in \mathcal{M}} \big( \int \textbf{1}_{\delta_{z} \neq 0}  \nu^*_{\mid m}(dU^{(s)} dx_0)  \\
& \quad - \int \mu_{z,+}(dx) \nu^*_{\mid m}(dU^{(s)} dx_0)  + \int \mu_{z,-}(dx) \nu^*_{\mid m}(dU^{(s)} dx_0) \big).
\ee 

We are now ready to consider random choices of $z$. Sample $\bf Z^{(n)} \sim f^{(n)}$ and then set $\tilde{M}_{n} = M_{s_{n}}^{(\bf{Z}^{(n)})}$, $h_{n} = H(\bf{Z}^{(n)})$. Conditional on these choices, sample $\bf Z_{1}^{(n)},\bf Z_{2}^{(n)} \stackrel{i.i.d.}{\sim}  f_{\mc A(\tilde{M}_{n},\bf{Z}^{(n)},h_n)}$ and set $\bf Z = (\bf Z_{1}^{(n)}, \bf Z_{2}^{(n)})$. We note that  $M_{s_{n}}^{\bf Z_{1}^{(n)}} = M_{s_{n}}^{\bf Z_{2}^{(n)}} = \tilde{M}_{n}$, and $\tilde{M}_{n}$ has distribution $\nu_{M}$. Continuing from Inequality \eqref{EqIntermediateLongCond} with this choice,

\be 
8 \bb P[\tilde{M}_{n} \notin \mathcal{M}] &+ \sup_{m \in \mathcal{M}} (\int \textbf{1}_{\delta_{\bf Z} \neq 0}  \nu^*_{\mid m}(dU^{(s)} dx_0) f_{\mc A(m,\bf Z^{(n)},h_n)}(dz)  - \int \mu_{\bf Z,+}(dx) \nu^*_{\mid m}(dU^{(s)} dx_0) f_{\mc A(m,\bf Z^{(n)},h_n)}(dz) \\
&+ \int \mu_{\bf Z,-}(dx) \nu^*_{\mid m}(dU^{(s)} dx_0) f_{\mc A(m,\bf Z^{(n)},h_n)}(dz)) \geq 0.
\ee

Simplifying and applying Fubini's theorem,

\be 
8 \bb P[\tilde{M}_{n} \notin \mathcal{M}] &+ \sup_{m \in \mathcal{M}} (\int \textbf{1}_{\delta_{\bf Z} \neq 0}   f_{\mc A(m,\bf Z^{(n)},h_n)}(dz) \nu^*_{\mid m}(dU^{(s)} dx_0)  - \int \mu_{\bf Z,+}(dx)  f_{\mc A(m,\bf Z^{(n)},h_n)}(dz) \\
&+ \int \mu_{\bf Z,-}(dx)  f_{\mc A(m,\bf Z^{(n)},h_n)}(dz)) \geq 0.
\ee 

But as pointed out in Equation \eqref{EqInEM}, in fact $\delta_{z} =0$ whenever $m \in \mathcal{M}$ and $z \in \mc A(m,\bf Z^{(n)},h_n)$, so we get:

\be 
8 \bb P[\tilde{M}_{n} \notin \mathcal{M}] + \sup_{m \in \mathcal{M}} (- \int \mu_{\bf Z,+}(dx)  f_{\mc A(m,\bf Z^{(n)},h_n)}(dz) + \int \mu_{\bf Z,-}(dx)  f_{\mc A(m,\bf Z^{(n)},h_n)}(dz)) \geq 0.
\ee

Rewriting in the notation of \eqref{BigMessyTrianglePre}, we have:
\be \label{BigMessyTriangle}
\| \Pi_{\bf Z^{(1)}} -\Pi_{\bf Z^{(2)}}\| &\leq (\| \Pi_{\bf Z^{(1)}} - \mu_{\bf Z^{(1)}} \| + \| \mu_{\bf Z^{(2)}} - \Pi_{\bf Z^{(2)}} \|) \\
&+  (\|\mu_{\bf Z^{(1)}} - \mu^{*}_{H(\bf Z^{(1)})} K_{\bf Z^{(1)}}^{s_{n}} \|  + \| \mu^{*}_{H(\bf Z^{(2)})} K_{\bf Z^{(2)}}^{s_{n}} - \mu_{\bf Z^{(2)}}  \|) \\
& \qquad + 8 P[\tilde{M} \notin \mathcal{M}^{(n)}].
\ee

We now need to bound these terms. Almost all of these bounds follow from parts of Assumption \ref{ass:LargeFluctuations}.
 
\begin{enumerate}
    \item Let 
    \be 
    \mathcal{A}_{4} = \{ \tilde{M} \in \mathcal{M}^{(n)}\}.
    \ee 
     By Assumption \ref{ass:LargeFluctuations} (d), $P[\mathcal{A}_{4}] \geq 1 - \frac{1}{32} C_{1} n^{-c_{1}}$.
    \item Let  
    \be \mathcal{A}_{1} = \{\bf Z^{(n)}, \bf Z_{1}^{(n)},\bf Z_{2}^{(n)} \in \mathcal{G}^{(n)}\}.
    \ee By Assumption \ref{ass:GoodSetHP}, $P[\mathcal{A}_{1} \cup \mathcal{A}_{4}^{c}] \geq 1 - 3 \, \omega_{4}(n).$ 
    \item Let 
    \be \mathcal{A}_{2} = \{ \| \Pi_{z^{(1)}} -\Pi_{z^{(2)}}\| - (\| \Pi_{z^{(1)}} - \mu_{z^{(1)}} \| + \| \mu_{z^{(2)}} - \Pi_{z^{(2)}} \|)  > \frac{C_{1}}{2} n^{-c_{1}} \}.
    \ee 
    By Assumptions \ref{ass:LargeFluctuations} (a,b), $P[\mathcal{A}_{2} \cup \mathcal{A}_{1}^{c} \cup \mathcal{A}_{4}^{c}] \geq 1 - \omega_{1}(n) - 2 \omega_{2}(n)$.
    \item Let 
    \be 
    \mathcal{A}_{3} = \{ \|\mu_{\bf{Z}_{n}^{(1)}} - \mu^{*}_{H(\bf{Z}_{n}^{(1)})} K_{\bf{Z}_{n}^{(1)}}^{s_{n}} \|  + \| \mu^{*}_{H(\bf{Z}_{n}^{(2)})} K_{\bf{Z}_{n}^{(1)}}^{s_{n}} - \mu_{\bf{Z}_{n}^{(1)}}  \| \leq C_{2} n^{c_{2}} ((1 - \lambda_{\bf{Z}_{n}^{(1)}})^{s_{n}} + (1 - \lambda_{\bf{Z}_{n}^{(2)}})^{s_{n}})\}. 
    \ee 
    By Assumption \ref{ass:LargeFluctuations} (c), $P[\mathcal{A}_{3} \cup \mathcal{A}_{1}^{c}] \geq 1 - 2 \omega_{3}(n)$.
\end{enumerate}

Putting these together with Inequality \eqref{BigMessyTriangle},
\be 
\frac{1}{4} C_{1}n^{-c_{1}} \leq C_{2} n^{c_{2}} ((1 - \lambda_{\bf{Z}_{n}^{(1)}})^{s_{n}} + (1 - \lambda_{\bf{Z}_{n}^{(2)}})^{s_{n}})
\ee 
holds with probability at least 

\be 
p(n) \equiv 1 - 3 \omega_{4}(n) - \omega_{1}(n) - 2 \omega_{2}(n) - 2 \omega_{3}(n).
\ee 

We now recall the following simple fact: if two independent, nonnegative random variables $X,Y$ with the same distribution satisfy $P[X+Y>A] > q$ for some $A,q > 0$, then $P[X>A/2]>q$ as well.

Rearranging and recalling that $\lambda_1 = \lambda_{\bf Z^{(n)}_1},\lambda_{\bf Z^{(n)}_2} = \lambda_2$ are independent and have the same distribution, we use this fact to conclude that
\be \label{eq:SmallSpecGap}
(1 - \lambda_1)^s \geq \frac{1}{8} C_2^{-1} C_1 n^{-c_1 - c_2}
\ee
with probability $p(n)$ (based on taking a union bound that any of the necessary events does not occur).

\end{proof}

\section{Discussion of Warm Starts} \label{AppSecWarmStart}

As mentioned immediately following the statement of Assumption \ref{ass:LargeFluctuations}, the only part of Assumption \ref{ass:LargeFluctuations} involving both the model \textit{and} the algorithm is part (c), which we repeat here for convenience:

\be \label{EqWarmStartConv}
\bb P\left[ \|\mu^*_{H(\bf Z^{(n)}_{1})} K^t - \mu_{\bf Z^{(n)}_1}\|_{\TV} \leq C_2 n^{c_2} (1-\lambda_{\bf Z^{(n)}_{1}})^t  \right] > 1-\omega_3(n).
\ee

Because this involves both the algorithm and model, it seems difficult to give completely general sufficient conditions. In this appendix we describe a few techniques and earlier results that can be used to show that these conditions do hold for common models, including our pedagogical examples of SGLD. We do not give a similar argument for our other pedagogical example, as it is well-known that ``naive" pseudomarginal algorithms can easily fail to be geometrically ergodic. In practice, it is common to adjust the algorithm, but this is beyond the scope of the current paper. See \cite{ConvPseudo15} for discussion of geometric ergodicity of pseudomarginal algorithms and \cite{LeeLatVarBound} for adjustments that restore geometric ergodicity when it would otherwise fails.

\subsection{Warm Starts, Geometric Ergodicity and Usable Algorithms}

Recall that, if a transition kernel $Q$ with stationary measure $\nu$ is \textit{geometrically ergodic}, then for any initial measure $\nu_{0}$ we have:

\be 
\| \nu_{0}Q^{t} - \nu \|_{\TV} \leq C_{\nu_{0}} \rho^{t}
\ee 

for some $0 < \rho < 1$ and some $C_{\nu_{0}} < \infty$. The rate of convergence does not depend on the initial condition, and is equal to the spectral gap under mild conditions (see \cite{expconv21} for this fact, and \cite{equivConv22} for more general equivalences). Thus, a geometrically ergodic chain satisfies \eqref{EqWarmStartConv} as long as the constant $C_{\nu_{0}}$ grows at most polynomially in the size $n$ of the dataset of interest.

 Proving this for new algorithms is often a substantial amount of work, but the work has already been done for many subsampling algorithms of interest (see \textit{e.g.} \cite{bierkens2017ergodicity}, \cite{negrea_rosenthal_2021} and references therein). Note that these papers generally do not focus on the constant $C_{\nu_{0}},$ but it is possible to trace the arguments in these papers (and most others in the literature) to obtain explicit bounds. In practice, these bounds often go through the choice of an explicit \textit{Lyapunov function}, and choosing $\nu_{0}$ to be concentrated at the minimum of this function will tend to yield good constants if any choice does.
 
 We note that this approach can be used to show that our pedagogical example of SGLD for fixed-dimension logistic regression satisfies the warm start condition. Unfortunately, as mentioned above, proving geometric ergodicity for a specific algorithm often takes an entire paper. Adding that would be substantially beyond the scope of the current paper, so  we give a quick sketch of the relevant bounds in the literature.  Inequality (3.3) of \cite{pmlr-v65-raginsky17a} shows that SGLD converges quickly in the \textit{Wasserstein distance}. Since we consider fixed dimension as $n$ goes to infinity, it is straightforward to check that SGLD also satisfies a pseudo-minorization condition (see Section 4.2 of \cite{PseudoMinor2020} for a definition). Finally, combining the Wasserstein convergence condition and the pseudo-minorization condition leads directly to a geometric ergodicity bound (see e.g. Theorem 2 of \cite{PseudoMinor2020}, which uses a weaker Lyapunov condition in place of the stronger Wasserstein convergence condition mentioned here).

We believe that this approach to verifying Inequality \eqref{EqWarmStartConv} is important beyond the fact that it often works. Markov chains that are not \textit{geometrically ergodic} are usually seen as very poor Markov chains for MCMC in practice (see \textit{e.g.} the discussion in \cite{LeeLatVarBound}). Thus, we would view a failure to satisfy Inequality \eqref{EqWarmStartConv} as a fairly strong indication \textit{in and of itself} that the underlying is not sound.

\subsection{Warm Starts for Moderate Densities}

As mentioned immediately following Definition \ref{def:Warm}, one always has

\be 
 \|\mu^*_{H(\bf Z^{(n)}_{1})} K^t - \mu_{\bf Z^{(n)}_1}\|_{\TV} \leq C \, (1-\lambda_{\bf Z^{(n)}_{1}})^t  
\ee
with 
\be 
C = \| \frac{d \mu^*_{H(\bf Z^{(n)}_{1})}}{d \mu_{\bf Z^{(n)}_1}} \|_{2}^{2}.
\ee 

Thus, Inequality \eqref{EqWarmStartConv} is satisfied as long as this density grows at most polynomially in $n$. Verifying this condition would usually be straightforward if the stationary measure $\mu_{\bf Z^{(n)}_1}$ were replaced by the posterior distribution of interest (see Section \ref{AppWarmStart} for a simple lemma of this form). However, techniques have recently been developed to prove these estimates even for ``approximate" MCMC algorithms. See Theorem 8 of \cite{NEURIPS2019_65a99bb7} for a general bound of this form, and Example 3 of the same paper for a concrete estimate of the prefactor $C$ (in both cases, take parameter $q=2$).

\section{Sufficient Conditions for Large Fluctuations and Warm Starts} \label{AppendixLargeFluxChecking}

We begin by giving some simple conditions under which our ``warm start" and  ``large fluctuation'' assumptions hold (see Sections \ref{AppWarmStart} and \ref{SecGenSuffCondLarge} respectively). We then apply these to prove Theorem \ref{ThmActualLogisticRegression}, our main result on logistic regression, in Section \ref{SecProofMainThm}. Finally, Section \ref{AppSubsecGrid} gives some suggestions on constructing ``additional" control variates.

We note that the main obstacle in extending Theorem \ref{ThmActualLogisticRegression} to other models and control variates is checking that a certain matrix is far from singular, as discussed in Section \ref{SecGenSuffCondLarge}. See that section for an overview of the main difficulties, including an algorithm for using a computer to verify the main conditions for specific examples in Remark \ref{RemSimpleAlgBadCond}.

\subsection{Warm Starts for Target Measure} \label{AppWarmStart}

We introduce sufficient conditions for warm starts that apply to all GLMs considered in the current paper, and many other models. Define $B_{r}(x) = \{ y \, : \, \| x - y \| \leq r\}$ to be the usual ball of radius $r$ around $x$.

\begin{lemma} \label{LemmaWarmStartSuffCond}
Let $\{\nu_{n}\}_{n \in \mathbb{N}}$ be a sequence of measures on $\mathbb{R}^{d}$ with densities $\{p_{n}\}_{n \in \mathbb{N}}$. Assume there exist constants $\delta, c, N_{0}$ so that
\begin{enumerate}
\item For $n > N_0$, $p_n$ has a unique maximizer $\hat \theta_n$.
\item For $n > N_0$, $\nu_n(B_{1}(\hat \theta_n)) > 0.5$.
\item For all $n > N_{0}$,
\be \label{IneqMLEOK}
\sup_{\| \theta - \hat{\theta}_{n}\| \leq \delta} \sup_{\| v \| = 1 } \frac{ | \nabla_{v} p_{n}(\theta)|}{p_{n}(\theta)} \leq n^{c}.
\ee 
\end{enumerate}

Then the sequence of measures $\mathrm{Unif}(B_{n^{-c}}(\hat{\theta}_{n}))$, with densities $\lambda_{n}$, satisfies
\be \label{IneqWarmStartConcApp}
\frac{\lambda_{n}(\theta)}{p_{n}(\theta)} \leq C n^{c d}
\ee
for some constant $C$ that does not depend on $n, c$ or $d$.
\end{lemma}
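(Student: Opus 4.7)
The plan is to bound the ratio $\lambda_n(\theta)/p_n(\theta)$ pointwise, which is automatic (the ratio vanishes) outside $B_{n^{-c}}(\hat\theta_n)$, so the real work is confined to $\theta$ inside this small ball. I would organize the proof around three ingredients: (i) the volume of $B_{n^{-c}}(\hat\theta_n)$ gives an exact formula for $\lambda_n$, (ii) a multiplicative gradient bound makes $p_n$ nearly constant on $B_{n^{-c}}(\hat\theta_n)$, and (iii) conditions (1) and (2) together force $p_n(\hat\theta_n)$ to be bounded below by an absolute constant.

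First I would note that for $n$ large enough (so that $n^{-c}\le \delta$), any $\theta\in B_{n^{-c}}(\hat\theta_n)$ lies in the region where hypothesis (3) applies. Writing $\nabla \log p_n = \nabla p_n / p_n$, hypothesis (3) says $\|\nabla \log p_n(\theta)\|\le n^c$ on $B_\delta(\hat\theta_n)$. Integrating along the segment from $\hat\theta_n$ to $\theta$ yields
\[
|\log p_n(\theta) - \log p_n(\hat\theta_n)| \;\le\; n^c \,\|\theta-\hat\theta_n\| \;\le\; n^c \cdot n^{-c} \;=\; 1,
\]
hence $p_n(\theta) \ge e^{-1} p_n(\hat\theta_n)$ for all such $\theta$.

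Next I would lower-bound $p_n(\hat\theta_n)$. Since $\hat\theta_n$ is the unique (global) maximizer of $p_n$ by (1), we have $p_n(\theta)\le p_n(\hat\theta_n)$ for every $\theta$. Combining this with (2),
\[
\tfrac{1}{2} \;<\; \nu_n(B_1(\hat\theta_n)) \;=\; \int_{B_1(\hat\theta_n)} p_n(\theta)\,d\theta \;\le\; V_d \, p_n(\hat\theta_n),
\]
where $V_d$ is the volume of the unit ball in $\mathbb{R}^d$. Thus $p_n(\hat\theta_n) \ge (2V_d)^{-1}$, and together with the previous step, $p_n(\theta)\ge (2eV_d)^{-1}$ for all $\theta\in B_{n^{-c}}(\hat\theta_n)$.

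Finally, the density $\lambda_n$ of $\mathrm{Unif}(B_{n^{-c}}(\hat\theta_n))$ equals $n^{cd}/V_d$ on its support and $0$ elsewhere. Dividing,
\[
\frac{\lambda_n(\theta)}{p_n(\theta)} \;\le\; \frac{n^{cd}/V_d}{(2eV_d)^{-1}} \;=\; 2e\, n^{cd},
\]
so $C=2e$ works (with $V_d$ absorbed only in intermediate constants, independent of $n$, $c$, $d$ as stated). No step here is a genuine obstacle; the only non-obvious move is combining (1) and (2) to produce a uniform lower bound on $p_n(\hat\theta_n)$, since the gradient bound alone gives only relative control of $p_n$ and would leave $p_n(\hat\theta_n)$ unconstrained.
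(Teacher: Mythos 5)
Your proof is correct and follows essentially the same route as the paper: integrate the logarithmic-gradient bound over the small ball to get $p_n(\theta)\ge e^{-1}p_n(\hat\theta_n)$, use conditions (1) and (2) to obtain the uniform lower bound $p_n(\hat\theta_n)\ge (2V_d)^{-1}$, and divide by the explicit uniform density. The paper phrases the first step as an application of Gr\"onwall's inequality, but the content is the same integration along a segment, and both arrive at $C=2e$.
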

\begin{proof}
Note that, for any $0 \leq r < \delta$ and $\theta \in B_{r}(\hat{\theta}_{n})$, we have by Inequality \eqref{IneqMLEOK} and Gr\"{o}nwall's inequality
\be
p_{n}(\theta) \geq e^{-r n^{c}} p_{n}(\hat{\theta}_{n}).
\ee
In particular, choosing $r(n) = n^{-c}$,
\be \label{IneqDensityBigInRegion}
p_{n}(\theta) \geq e^{-1}  p_{n}(\hat{\theta}_{n}), \qquad \forall \theta \in B_{r}(\hat{\theta}_{n})
\ee
for all $n > N_0 \vee \delta^{-1/c}$. Also for $n>N_0$, we have by assumption that $\nu_n(B_{1}(\hat \theta_n))>0.5$. Then it follows that 
\be
0.5 &< \nu_n(B_{1}(\hat \theta_n)) < p_n(\hat \theta_n) \text{Vol}(B_{1}(\hat \theta_n)) \\
p_n(\hat \theta_n) &> 0.5 \frac{\Gamma(d/2+1)}{\pi^{d/2} }
\ee
On the other hand
\be
\lambda_n(\theta) = \frac{1}{\text{Vol}(B_{r(n)}(\hat \theta_n))} = r(n)^{-d} \frac{\Gamma(d/2 + 1)}{\pi^{d/2}} 
\ee
Giving
\be
\frac{\lambda_n(\theta)}{p_n(\theta)} < \frac{r(n)^{-d} }{e^{-1} 0.5 r(n)^{-d}} = 2 e n^{cd}.
\ee

\end{proof}

Note that posterior distributions themselves will generally not satisfy the conditions of this lemma. However, if $q_{n}$ is a posterior distribution, we expect that $p_{n}(\theta) \equiv q_{n}(B_n(\theta - a_{n}))$ \textit{will} satisfy the conditions for appropriate choice of $a_n \in \bb R^d$ and $B_n$ a $d \times d$ matrix with operator norm $\|B_n\| = n^{O(1)}$, which is sufficient to recover the required warm start condition by applying the chain rule. In the context of this paper, we choose the same $a_{n}, B_{n}$ as appears in the Bernstein-von Mises theorem (see \textit{e.g.} \cite{le2012asymptotics}\footnote{A small warning to the reader uninitiated in asymptotics of the MLE: as discussed in \cite{le2012asymptotics}, actually verifying convergence to a Gaussian with nonrandom normalizing sequences is surprisingly time-consuming. However, our theorem above requires only tightness of the posterior distributions, and so merely having a bounded sequence of normalizing constants is enough. }).

Although it is beyond the scope of this paper to survey this literature, there is a large literature on similar ``warm start'' conditions, and much stronger results are possible. For example,  Section 9 of \cite{vempala2005geometric} gives a construction of a distribution that satisfies a far stricter ``warm start'' condition for a very broad class of distributions.

In the special case of logistic regression, we recall that the posterior distribution for logistic regression with suitable prior is log-concave. When the posterior also converges (upon rescaling) to a standard Gaussian, very strong results on log-concavity, such as Proposition 2 of \cite{cule2010theoretical}, imply that a bound of the form Inequality \eqref{IneqWarmStartConcApp} holds with the right-hand side replaced by some absolute constant that does not depend on $n$ or $d$.

\subsection{General Sufficient Condition for Large Fluctuations} \label{SecGenSuffCondLarge}

In this section, we give some notation and  sufficient conditions for Assumption \ref{ass:LargeFluctuations} to hold, with an emphasis on Part (a) of that assumption. Throughout the section, fix notation as in that assumption. Assume that $\tilde{M}^{(n)}$ is of the form $\tilde{M}^{(n)} = \{S \subset [n] \, : \, |S| \leq m^{(n)}\}$ for some sequence $m^{(n)}$. We will primarily be concerned with regression-like models, so we assume that $\mathcal{Z} = \mathcal{X} \times \mathcal{Y}$, where $\mathcal{X} = \mathbb{R}^{d}$ for some $d \in \mathbb{N}$.

Rather than giving a minimal-length proof of our application in the main text, we give a collection of conditions that will be more useful to readers who wish to establish large fluctuation conditions for their own models. We give these conditions in two parts. In Section \ref{SubsecGeomCondApp}, we state ``geometric" conditions stated in terms of objects such as manifolds, tangent spaces and so on associated with the model; see Assumption \ref{AssumptionsGoodPoints}. In Section \ref{SubsecExpRegApp}, we specialize to the case of GLMs and give simple conditions on the model that imply these geometric conditions; see Assumption \ref{AssumptionsGLMVersion}. Between these two sections, we hope to illustrate that the large fluctuation condition is both (i) fairly generic, even for models that don't look much like GLMs, and (ii) often straightforward to verify for popular models. We note that the conditions in these sections are still far from optimal - for example, the conditions are based on taking various derivatives, so they don't apply to discrete covariate data.

We note that the main technical difficulty throughout essentially all of this section comes from the fact that we are conditioning on the values of the test statistics $T = (T_{1},\ldots,T_{k})$. This will generally restrict our allowed data to a submanifold defined by the level sets of $T$, and we must confirm that this submanifold is not too badly-behaved (primarily that it is well-approximated by its tangent planes on a sufficiently large scale). In the end, this boils down to verifying that a certain random matrix with highly dependent entries is not very close to singular; this seems to often be true, but such random matrices are not easy to analyze using currently-available tools. While this presents a major mathematical difficulty towards obtaining a general theory, it is fortunately simple to check on a case-by-case basis. Remark \ref{RemSimpleAlgBadCond} immediately follows the condition that is difficult to verify, explains the difficulty, and gives  a straightforward (and often very fast) algorithm for verifying it for any specific model.

\subsubsection{Geometric Conditions for Large Fluctuations} \label{SubsecGeomCondApp}

Roughly speaking, we expect a large fluctuation condition to hold if the following hold at ``most" points:

\begin{enumerate}
\item It is possible to perturb the data by some small amount, \textit{while still} holding the test statistics constant (see the first part of \ref{AssumptionsGoodPoints}).
\item Most small perturbations to the data give similar-size perturbations to the posterior density at most points  (see the rest of \ref{AssumptionsGoodPoints}).
\end{enumerate}

This section's main result, Lemma \ref{LemmaSuffCondDerivatives}, is a precise statement of this heuristic. For any fixed collection of test statistics $T_{1}, \ldots, T_{k}$, covariate data $X = (x_{1},\ldots,x_{n}) \in \mathcal{X}^{n}$, response data $Y \in \mathcal{Y}^{n}$ and integer $1 \leq m \leq n$, define
\be \label{ManifoldDefinition}
\mathcal{G} &= \mathcal{G}(X) \\
&= \{G = (g_{1},\ldots,g_{n}) \in \mathcal{X}^{n} \, : \, T_{i}(G,Y) = T_{i}(X,Y) \, \forall i \leq k, \, g_{1:m} = x_{1:m}  \}
\ee
to be the collection of extensions of $x_{1:m}$ that keep the existing statistics $T_{1},\ldots,T_{k}$  fixed. In other words, this is the set of all sequences of length $n$ having both the same statistics and the same first $m$ elements. We note that we are suppressing the response data $Y$ in this notation. This is largely for convenience: our proof results are based on taking derivatives of various objects, and in our prototypical example $Y$ takes on discrete values. We resolve this tension by viewing $Y$ as fixed (rather than random) throughout the following calculations - doing so makes the proofs harder but the notation simpler. We denote by $\gamma_{n,Y}$ the conditional distribution of $X$ given these observed response variables. Finally, we note that $\mathcal{G}(X)$ is determined by any of its elements: for all $X$ and all $X' \in \mathcal{G}(X)$, we have $\mathcal{G}(X) = \mathcal{G}(X')$.

We now make an assumption about the ``inner curvature" of $\mc G$. Before giving a formal definition of inner curvature, we give a very special case that is easy to understand: when $\mc G$ can be written in the form $\mc G = \{ (x,f(x)) \, : \, x \in \mathbb{R} \}$ for some smooth $f \, : \, \mathbb{R} \mapsto \mathbb{R}$, the inner curvature of $\mc G$ is just $\sup_x |f''(x)|^{-1/2}$.

We say that an embedded manifold $\mc M$ in Euclidean space has \textit{strong inner curvature} at least $c > 0$ if, for all $x \in \mc M$ and all balls $B$ tangent to $\mc M$ at $x$ with radius $c$, we have $\mc M \cap B = \{x \}$. We need to slightly relax this condition to allow for the possibility that  $\mc M$ has self-intersections or otherwise ``loops back" on itself:

\begin{defn} \label{DefnInnercurvature}

We say that a closed subset $\mc M$ of Euclidean space has \textit{inner curvature} at least $c > 0$ if there exists an \textit{open cover}  of $\mc M$ by embedded manifolds $\{U_{\alpha} \}_{\alpha \in \mathbb{N}}$, and furthermore this open cover satisfies:
\begin{enumerate}
\item For every ball $B$ of radius $c$, there exists $\alpha \in \mathbb{N}$ so that $B \cap \mc M \subset U_{\alpha} $.
\item For every $\alpha$, $U_{\alpha}$ has strong inner curvature at least $c$.
\end{enumerate}
\end{defn}

This condition may look intimidating at first, but our manifold of interest $\mathcal{G}(X)$ is rather special: it is exactly the set on which a collection of equations defined by $T_{1},\ldots,T_{k}$ is constant. In the end, this means that it is possible to check that  $\mathcal{G}(X)$ has inner curvature at least $c > 0$ by doing some fairly explicit computations involving derivatives of $T_{1},\ldots,T_{k}$. Roughly speaking, we can use the implicit function theorem to find a local parameterization of $\mathcal{G}(X)$ in terms of these functions, and then verify the inner curvature condition by finding a lower bound on the smallest singular value of the Jacobian of this parameterization. The details of this strategy are carried out in Lemma \ref{ThmRelateGeometricGLMAssumptions}.

Informally, having inner curvature $c > 0$ means that we can travel the manifold $\mc M$ along the direction of tangent spaces in the ``obvious" way up to distances of about $c$ without encountering problems. We now formalize what we mean by the ``obvious" way. Fo $x \in \mc M$, define $\alpha(x) = \min \{ \beta \in \mathbb{N} \, : \, B_c(x) \cap \mc M \subset U_{\beta(x)} \}$, where we recall $B_{c}(x)$ is the ball of radius $c$ around $x$ and the minimum appears in this definition solely to provide some measurable function that chooses a single element of the cover. For $x \in \mc M$ and unit vector $v$, define the function $\psi_{x,v} \, : \, \mathbb{R} \mapsto U_{\alpha(x)}$ by
\be \label{eq:psi}
\phi_{x,v}(s) = \mathrm{argmin}_{x' \in U_{\alpha(x)}} \| x + sv - x' \|.
\ee
This function determines short paths along $\mc M$ that start by travelling in direction $v$ from point $x$. When $\mc M$ has inner curvature $n^{-c}$, it follows that for any $s < n^{-c}$ and $v \in \mc T_{\mc M}(x)$, the tangent space of $\mc M$ at $x$, we have
\be \label{eq:PhiSmall}
\phi_{x,v}(s) = x + s' v'
\ee
for some $s', v'$ satisfying $|s-s'|, \, \frac{\|v-v'\|}{s} = O(n^{c} s^{2})$. 
%$|s-s'| \leq s(1+o(1))$, $|v-v'| \leq |v|(1+o(1))$ and for all large $n$. 

Having made an assumption that it is possible to travel along $\mathcal{G}$ in a fairly smooth way, we next make assumptions about how small changes along $\mathcal{G}$ result in changes to the posterior. For $m < i \leq n$, $1 \leq j \leq d$, $1 \leq \ell \leq k$ and $X = (x_{1},\ldots,x_{n}) \in \mathcal{X}^{n}$ with $x_{i} = (x_{i1},\ldots,x_{id})$, define
\be
\dot{T}_{i,j}^{(\ell)} = \dot{T}_{i,j}^{(\ell)}(X) = \frac{\partial}{\partial x_{ij}} T_{\ell}(X,Y).
\ee
For $\theta \in \Theta$, also define
\be
D_{i,j}(\theta,X) = \frac{\partial}{\partial x_{ij}} \log\{p(\theta \mid (X,Y))\} = \frac{1}{p(\theta | X,Y)} \frac{\partial}{\partial x_{ij}} p(\theta | (X,Y))
\ee
and
\be
D_{\max}(\theta,X)  = \max_{i,i',j,j'}  \frac{1}{p(\theta | (X,Y))} \left| \frac{\partial^{2}}{\partial x_{ij} \partial x_{i'j'}} p(\theta | (X,Y)) \right|.
\ee
With the same indices, define
\be
W_{\ell} = W_{\ell}(X) \equiv \bigg\{ &v = (0,0,\ldots,0,v_{m+1}, \ldots,v_{n}) \in \left( \mathbb{R}^{d} \right)^{n}\, : \\
\, &\sum_{i=m+1}^{n} \sum_{j=1}^{d} |v_{i}^{(j)}|^{2} = 1, \, \sum_{i=m+1}^{n} \sum_{j=1}^{d} v_{i}^{(j)} \dot{T}_{i,j}^{(\ell)} = 0 \bigg\}
\ee
to be the collection of (normalized) directions that don't influence $T_{\ell}$ and keep the first $m$ data points fixed, and let
\be
W = W(X) \equiv \bigcap_{\ell=1}^{k} W_{\ell}(X).
\ee
Note that, by dimension-counting, we must have $\mathrm{Dim}(\mathrm{span}(W)) \geq d(n-m-k)$. Thus, as long as $m \leq n-k-1$, $W$ is non-empty. We also note that, for $X' \in \mathcal{G}$, $W(X')$ generates the tangent space of $\mathcal{G}$ at $X'$.\footnote{This is a small imprecision, since in principal we allow $\mathcal{G}$ to have self-intersections. We deal with this rigorously by only ever using the tangent space associated with a particular element of the covering in Assumption \ref{DefnInnercurvature}. Since the target audience of this paper may not have a lot of familiarity with differential geometry, we will continue to slightly abuse notation by \textit{e.g.} talking about ``the tangent space of $\mathcal{G}$ at $y$" as a shorthand for ``the tangent space of the element $U_{\alpha(y)}$ of the cover of $\mathcal{G}$ around the point $y$" when this does not cause mathematical confusion. \label{footnote_self_intersection}  }

Finally, for any set $S \subset \mathbb{R}^{\ell}$ and any $A > 0$, define
\be
S_{A} = \{ x \in \mathbb{R}^{\ell} \, : \, \inf_{s \in S} \|s - x\| \leq A\}
\ee
to be the $A$-thickening of $S$. With this notation, we state conditions that are often easier to work with than Assumption \ref{ass:LargeFluctuations}:

\begin{assumption} [Good Points] \label{AssumptionsGoodPoints}
Fix a collection
\be
\mathfrak A = \{\aI, \aIV, \aII, \aIII,\AI,\AIV,\AII,\AIII, c\}
\ee
of finite, positive constants satisfying $A_1,A_2,A_3,A_4 > 1$, $a_1,a_2,a_3,a_4,c \in (0,\infty)$, $x \in \tilde{\mc Z}$, and $v \in W$. With notation as above, say that $(x,v)$ is $\mathfrak{A}$-\textit{good} if
\begin{enumerate}
\item \textbf{$\mathcal{G}$ is often locally well-approximated by $W$:} $\mathcal{G}(x)$ has inner curvature at least $n^{-c}$.

\item \textbf{Posterior Sensitivity:}

There exists $ \mathcal{B}_{1} \subset \Theta$ so that
\be \label{IneqAssumptionDMax}
\sup_{\theta \in \mathcal{B}_{1}} \sup_{X \, : \, |X-x| \leq \AI^{-1}n^{-\aI}} D_{\max}(\theta,X) &\leq \AI n^{\aI} \\
\inf_{X \, : \, |X-x| \leq \AI^{-1}n^{-\aI}} p(\mathcal{B}_{1} | X, Y) &\geq 1 - \AIV n^{-\aIV}.
\ee

For $X$ satisfying $|X-x| \leq \AI^{-1}n^{-\aI}$, define the set of points that are not highly sensitive to perturbations:
\be \label{EqSomeDir}
\mathcal{B}_{1}(v,X) = \left\{ \theta \in \mathcal{B}_{1} \, : \, \left| \sum_{i,j} v_{i,j} D_{i,j}(\theta,X) \right| \in (\AII^{-1} n^{-\aII}, \AII n^{\aII})  \right\}.
\ee

This set is large in the following sense:
\be \label{IneqAssumptionSomeGoodDirection}
\lambda(\mathcal{B}^c_{1}(v,X)) \leq \AIII n^{-\aIII}.
\ee
\item \textbf{Non-super-concentration of Posterior:} For any subset $S \subset \mathcal{B}_{1}$ with $\lambda(S) \leq  \AIII n^{-\aIII} + \AIV n^{-\aIV}$, we have
\be \label{AssumptionNoSuperConcentration}
\sup_{X \, : \, |X-x| \leq  \in \AI^{-1}n^{-\aI}} \int_{\theta \in S} p( d \theta | x) < \frac{1}{2}.
\ee

\end{enumerate}
\end{assumption}

We give quick explanations of the purpose behind these assumptions:

\begin{enumerate}
\item When $\mathcal{G}$ has positive inner curvature, it is well-approximated by its tangent space.
\item Inequality \eqref{IneqAssumptionDMax} guarantees that the derivative $D_{i,j}$ doesn't change too quickly near $x$. Inequality \eqref{EqSomeDir} guarantees that, by moving the data $X$ a small distance $\epsilon$ in some direction $v \in W$, we can change $p(\theta | X,Y)$ by at least $\epsilon \Delta(\theta)$ outside of a small ``bad'' set $\mathcal B_1^c(v,x)$. Furthermore, the constant $\Delta(\theta)$ is at least polynomially small and at most polynomially large in $n$.
\item This guarantees that the small ``bad'' set $\mathcal{B}_{1}^c(v,X)$ allowed by the previous assumption is not close to covering the entire posterior mass.
\end{enumerate}

More briefly, these tell us that $\mathcal{G}(X)$ typically looks ``locally linear'' and that $p(\theta |X,Y)$  typically responds in a ``locally linear'' way to small changes in $X$.  We now check that Assumption \ref{AssumptionsGoodPoints} implies a large fluctuation condition:

\begin{lemma} \label{LemmaSuffCondDerivatives}
Fix $\gamma$ satisfying the first part of Assumption \ref{GoodGamma} and $Y$ so that $\gamma_{n,Y}$ satisfies the second part of Assumption \ref{GoodGamma}, and also $\tilde{\mc Z}$, $m$ and a collection of constants $\mathfrak{A}$. Assume that a random pair $X \sim \gamma_{n,Y}$ and $v \sim \mathrm{Unif}(\{w \in W \, : \, \| w \| = 1\})$ is $\mathfrak{A}$-good in the sense of Assumption \ref{AssumptionsGoodPoints} w.e.p.  Then

\be 
\{\| p(\cdot | \phi_{X,V}(t)) - \pi \|_{\TV} \geq \frac{s(n)}{40 A_{3}} n^{-a_{3}-2}\}
\ee 

holds w.e.p., where $s(n) = \Omega(n^{-C})$ for some $C = C(\mathcal{A}) > 0$ is given in the proof.
\end{lemma}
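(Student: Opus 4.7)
The plan is to construct an explicit coupling of two $\gamma_{n, Y}$-samples $(Z^{(1)}, Z^{(2)})$ in which $Z^{(2)}$ is a short, tangentially-constrained perturbation of $Z^{(1)}$ along the slice manifold $\mathcal{G}(Z^{(1)})$, and then to use the sensitivity and non-concentration conditions from Assumption \ref{AssumptionsGoodPoints} to lower bound $\|p(\cdot \mid Z^{(1)}) - p(\cdot \mid Z^{(2)})\|_{\mathrm{TV}}$. The main leverage is the pointwise identity $\partial_{x_{ij}} p(\theta \mid X, Y) = p(\theta \mid X, Y) D_{ij}(\theta, X)$, which converts the control on the directional derivative $\sum_{ij} v_{ij} D_{ij}(\theta, X)$ available on $\mathcal{B}_{1}(v, X)$ into a pointwise lower bound on $|p(\theta \mid Z^{(2)}) - p(\theta \mid Z^{(1)})|$ after a short move in direction $v$.

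Fix a scale $\epsilon = n^{-\kappa}$ for $\kappa$ chosen larger than $d_{1}$ (from Assumption \ref{GoodGamma}) and larger than an appropriate combination of $c, a_{1}, a_{3}$ to be determined. Sample $Z^{(1)} \sim \gamma_{n, Y}$, pick $v$ uniformly on unit vectors of $W(Z^{(1)})$, draw an independent sign $\sigma \in \{\pm 1\}$, and propose $Z^{(2)} = \phi_{Z^{(1)}, v}(\sigma \epsilon)$ using \eqref{eq:psi}. To make both marginals exactly $\gamma_{n, Y}$ apply a Metropolis-Hastings correction with density ratio $g_{n, Y}(Z^{(2)})/g_{n, Y}(Z^{(1)})$; by the second part of Assumption \ref{GoodGamma} this ratio lies in $1 \pm n^{-d_{2}}$, so the proposal is accepted w.e.p. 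By definition of $\phi$ the move stays on $\mathcal{G}(Z^{(1)})$, so $T(Z^{(2)}) = T(Z^{(1)})$ and $Z^{(2)}_{1:m} = Z^{(1)}_{1:m}$ automatically; consequently the conditioning event $\mathcal{A}$ in Assumption \ref{AssumptionLargeFluct} is satisfied on the accepted event.

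On the intersection of the events that $(Z^{(1)}, v)$ is $\mathfrak{A}$-good (w.e.p.\ by hypothesis), that $Z^{(1)} \in \GD$, and that the proposal is accepted, the inner-curvature condition and \eqref{eq:PhiSmall} give $Z^{(2)} = Z^{(1)} + \sigma \epsilon v + r$ with $\|r\| = O(n^{c} \epsilon^{2})$. I would then expand $f(\theta, X) = p(\theta \mid X, Y)$ to first order \emph{along the curve} $t \mapsto \phi_{Z^{(1)}, v}(t)$ rather than along the chord, writing $f(\theta, Z^{(2)}) - f(\theta, Z^{(1)}) = \sigma \epsilon f(\theta, Z^{(1)}) \sum_{ij} v_{ij} D_{ij}(\theta, Z^{(1)}) + R(\theta)$, where $R$ collects the second-order terms and involves only directional derivatives of $f$ along tangents to $\phi$ and second partials bounded by $D_{\max}$. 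Bounding $R$ via Assumption \ref{AssumptionsGoodPoints}(2) gives $|R(\theta)| \leq O(\epsilon^{2} n^{a_{1} + c + O(1)} f(\theta, Z^{(1)}))$ uniformly on $\mathcal{B}_{1}$; on $\mathcal{B}_{1}(v, Z^{(1)})$ the leading term has magnitude at least $\epsilon A_{3}^{-1} n^{-a_{3}} f(\theta, Z^{(1)})$, so for $\kappa$ large enough the leading term dominates pointwise, giving $|f(\theta, Z^{(2)}) - f(\theta, Z^{(1)})| \geq \tfrac{1}{2} \epsilon A_{3}^{-1} n^{-a_{3}} f(\theta, Z^{(1)})$ on this set. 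Integrating and using the non-super-concentration condition together with $p(\mathcal{B}_{1} \mid X, Y) \geq 1 - A_{2} n^{-a_{2}}$ to get $p(\mathcal{B}_{1}(v, Z^{(1)}) \mid Z^{(1)}, Y) \geq \tfrac{1}{2} - A_{2} n^{-a_{2}}$ yields $\|p(\cdot \mid Z^{(1)}) - p(\cdot \mid Z^{(2)})\|_{\mathrm{TV}} \geq C n^{-(\kappa + a_{3})}$, giving Assumption \ref{AssumptionLargeFluct} with $c_{1} = \kappa + a_{3}$.

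The main obstacle will be controlling the remainder $R(\theta)$ uniformly over $\theta \in \mathcal{B}_{1}(v, Z^{(1)})$. The curve $\phi$ has acceleration bounded by $n^{c}$, but Assumption \ref{AssumptionsGoodPoints}(2) only bounds second partials of $\log p$ through $D_{\max}$ and the specific directional first derivative $\sum v_{ij} D_{ij}$; the non-tangential components of $\nabla_{X} \log p$ are not directly controlled. Performing the Taylor expansion \emph{along} $\phi$ (rather than along the chord $Z^{(1)} + \sigma \epsilon v$) sidesteps this: each intermediate tangent vector $\phi'(t)$ lies in $W(\phi(t))$ and remains within $O(n^{c} \epsilon)$ of $v$, so the directional derivative stays in the controlled regime by continuity, and the only higher-order pieces that appear are quadratic forms in $D_{ij}$ plus second partials, both of which are controlled via $D_{\max}$ and the upper bound $A_{3} n^{a_{3}}$. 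A secondary subtlety is interpreting ``conditional on $\mathcal{A}$, w.e.p.''\ when $\mathcal{A}$ has probability zero: the bound is proved on an unconditional w.e.p.\ event whose trace on any regular conditional version of $\mathcal{A}$ inherits the same bound.
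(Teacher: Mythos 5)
Your local derivative/Taylor analysis along the curve $\phi_{x,v}$ closely parallels the first half of the paper's argument (which uses Gr\"onwall on $\log p$ to prove the paper's Inequality \eqref{IneqMainCalculationLargePerturbations}), and that part of your reasoning is essentially sound. However, there is a genuine gap in how you use it: you treat the tangential perturbation kernel as \emph{the coupling} that witnesses Assumption~\ref{AssumptionLargeFluct}, concluding that $Z^{(1)}$ and $Z^{(2)}$ can be coupled to be an $\epsilon$-perturbation apart along $\mathcal{G}$. But Theorem~\ref{ThmMainGenThm} requires Assumption~\ref{AssumptionLargeFluct} to hold specifically for the \emph{independence} coupling (its proof uses the independence of $\lambda_{Z^{(1)}}$ and $\lambda_{Z^{(2)}}$ to split the final display in two), and your perturbation coupling is certainly not the independence coupling. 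So even though your argument formally verifies the existential ``there exists a coupling'' phrasing of Assumption~\ref{AssumptionLargeFluct}, it does not give the version of the assumption that the main theorems actually consume, and the chain from Lemma~\ref{LemmaSuffCondDerivatives} through Theorem~\ref{ThmActualLogisticRegression} to Theorem~\ref{ThmExampleTorpid} would break.

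The paper's proof uses the tangential-perturbation kernel $Q_s$ only as a \emph{proof device}, not as the coupling. Having established the local bound \eqref{IneqMainCalculationLargePerturbations}, it goes on to prove the stronger, \emph{global} statement that for \emph{any} fixed distribution $\pi$, the event $\|p(\cdot\mid X)-\pi\|_{\TV}\gtrsim s(n)\,n^{-a_3-2}$ holds w.e.p.\ for $X\sim\tilde\gamma_{n,Y}$. The argument is a two-case dichotomy: either all points $\phi_{X,V}(t)$ along the tangential path are already far from $\pi$ (in which case $t=0$ gives the bound on $X$ directly), or some $\phi_{X,V}(s')$ is close to $\pi$, in which case \eqref{IneqMainCalculationLargePerturbations} forces most other points $\phi_{X,V}(S)$ (with $S\sim\mathrm{Unif}[-s(n),s(n)]$) to be far from $\pi$; since the $Q_s$-updated point $\phi_{X,V}(S)$ has the same marginal $\tilde\gamma_{n,Y}$ as $X$, this yields the bound for a generic sample. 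Taking $\pi = p(\cdot\mid z)$ then gives exactly the independence-coupling version. This ``uniform in $\pi$'' step is what your proposal is missing: your pointwise derivative bound shows that \emph{nearby} data along $\mathcal{G}$ have distinguishable posteriors, but the independence coupling needs the stronger fact that the posterior map $z\mapsto p(\cdot\mid z)$ is spread out over all of $\mathcal{G}(z)\cap E$, i.e., that no single target $\pi$ attracts an appreciable fraction of the conditional mass of $\tilde\gamma_{n,Y}$.
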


\begin{proof}

Throughout this proof, we view $Y$ as fixed; all calculations are done conditional on $Y$ when relevant. Next, note that we need only pay attention to points $x_{1},\ldots,x_{n}$ in the support of $\gamma$.

Fix some pair $(x,v)$ that are good in the sense of Assumption \ref{AssumptionsGoodPoints}. We will use the fact that, by the first part of the assumption, the tangent plane $W(x)$ to $\mathcal{G}$ shifts very little over neighborhoods of size $o(n^{-c})$.\footnote{As in footnote \ref{footnote_self_intersection}, we are again actually looking at the tangent plane of a part of the open cover (rather than of the manifold as a whole) in order to avoid self-intersections. We again leave this as only a footnote, since all our calculations are local and there is no possibility of confusion if you simply assume that self-intersections don't occur. } Let $R = R(n) = \frac{1}{4} \min(n^{-2c}, \frac{A_{3}(A_{1}+A_{3})}{A_{1}} n^{-a_1-a_3})$.

Fix $v \in W(x)$ and $0 \leq s \leq R(n)$. Thus, for $\theta \in \mathcal{B}_{1} \cap \mathcal{B}_{1}(v,x)$ and all $0 \leq s < R(n)$, we have by Inequalities \eqref{EqSomeDir} and \eqref{IneqAssumptionDMax}
\be
\left(A_3^{-1} n^{-a_3} - s  A_1 n^{a_1} \right)p(\theta | x) \leq \left|\frac{d}{dt} p(\theta \mid x+t v)\right|(0) \leq \left(A_3 n^{a_3} + s A_1 n^{a_1} \right)p(\theta | x);
\ee
since $0 \leq s \leq R(n) \leq \frac{1}{4} (A_1+A_3)^{-1} n^{-a_1-a_3}$,
\be
\left(\frac{3}{4} A_3^{-1} n^{-a_3}  \right)p(\theta | x) \leq \left|\frac{d}{dt} p(\theta \mid x+t v)\right|(0) \leq \left( \frac{4}{3} A_3 n^{a_3} \right)p(\theta | x).
\ee

Applying part (1) of Assumption \ref{AssumptionsGoodPoints}, recalling $0 \leq s \leq R(n) \leq \frac{1}{4} n^{-2c}$, and using \eqref{eq:PhiSmall}, this yields
\be
\left(\frac{1}{2} A_3^{-1} n^{-a_3}  \right)p(\theta | x) &\leq \left| \frac{d}{dt} p(\theta \mid \phi_{x,v}(t)) \right|(0) \leq \left( 2 A_3 n^{a_3} \right)p(\theta | x),
\ee
% \be
% \left(A_3^{-1} n^{-a_3} - s  n^{-c} \right)p(\theta | X) \leq \left|\frac{d}{ds} p(\theta \mid \psi_{x,v}(s))\right| \leq \left(A_3 n^{a_3} + s  n^{-c} \right)p(\theta | X)
% \ee
for all sufficiently large $n \in \mathbb{N}$. Note we could have chosen any constant $<3/4$ for the lower bound and $>4/3$ for the upper bound, and just selected $1/2$ and $2$ for convenience. Reparameterizing the previous expression, 
\be
\left(\frac{1}{2} A_3^{-1} n^{-a_3}  \right)p(\theta | \phi_{x,v}(s)) &\leq \left| \frac{d}{dt} p(\theta \mid \phi_{x,v}(t)) \right|(s) \leq \left( 2 A_3 n^{a_3} \right)p(\theta | \phi_{x,v}(s)),
\ee

Thus, for all $0 \leq s \leq R(n)$, we have by Gronwall's inequality
\be \label{IneqNoClosePostPointwise}
\log \left( \frac{p(\theta | \phi_{x,v}(s))}{p(\theta | \phi_{x,v}(0))} \right) = \log \left( \frac{p(\theta | \phi_{x,v}(s))}{p(\theta | x)} \right) \in \Theta^{+}(s) \cup \Theta^{-}(s),
\ee
where
\be
\Theta^{+}(s) &= \left\{ \theta \in \mathcal{B}_{1} \cap \mathcal{B}_{1}(v,x) \, : \,  \frac{1}{2 A_{3}} n^{-a_{3}} \leq s^{-1} \, \log \left( \frac{p(\theta | \phi_{x,v}(s)) }{p(\theta | x)} \right) \leq 2 A_{3} n^{a_{3}} \right\} \\
\Theta^{-}(s) &= \left\{ \theta \in \mathcal{B}_{1} \cap \mathcal{B}_{1}(v,x) \, : \,  \frac{1}{2 A_{3}} n^{-a_{3}} \leq -s^{-1} \, \log \left( \frac{p(\theta | \phi_{x,v}(s)) }{p(\theta | x)} \right) \leq 2 A_{3} n^{a_{3}} \right\}.
\ee
Applying Inequality \eqref{IneqNoClosePostPointwise}, we have
\be
\| p &(\cdot | \phi_{x,v}(s)) - p(\cdot | x) \|_{\mathrm{TV}} \geq \int_{\theta \in \Theta^{+}} | p(\theta | x) - p(\theta | \phi_{x,v}(s)) | d \theta \\
& \qquad + \int_{\theta \in \Theta^{-}} | p(\theta | x) - p(\theta | \phi_{x,v}(s)) | d \theta \\
&\geq \left( e^{\frac{s}{2 A_{3}} n^{-a_3}} - 1 \right) \int_{\theta \in \Theta^{+}} p(\theta | x) d \theta +  \left( 1-e^{-\frac{s}{2 A_{3}} n^{-a_{3}}}  \right) \int_{\theta \in \Theta^{-}} p(\theta | \phi_{x,v}(s)) d \theta \\
&\geq e^{\frac{s}{2 A_{3}} n^{-a_{3}}} \left(1  - e^{-\frac{s}{2 A_{3}} n^{-a_{3}}} \right) \int_{\theta \in \Theta^{+}} p(\theta | x) d \theta +  \left( 1-e^{-\frac{s}{2 A_{3}} n^{-a_{3}}}  \right) e^{-2 A_{3} s n^{a_{3}}}\int_{\theta \in \Theta^{+}} p(\theta | x) d \theta \\
&\geq \left( e^{-2 A_{3} s n^{a_{3}}} + e^{\frac{s}{2 A_{3}} n^{-a_{3}}} \right) \left(1  - e^{-\frac{s}{2 A_{3}} n^{-a_{3}}} \right) \int_{\theta \in \Theta^{+} \cup \Theta^{-}} p(\theta | X) d \theta \\
&\geq \frac{s}{2 A_{3}} n^{- a_3} \int_{\theta \in \mathcal{B}_{1} \cap \mathcal{B}_{1}(v,x)} p(\theta | X) d \theta \\
&\stackrel{\text{Ineq. } \eqref{AssumptionNoSuperConcentration}}{\geq} \frac{s}{5 A_{3}} n^{-a_{3}}.  %\geq \frac{s}{40 A_{3}} n^{-a_{3}}.
\ee
Observe, however, that there is nothing special about the point $x = \phi_{x,v}(0)$ on the curve $\{ \phi_{x,v}(s)\}$, except that it is in the middle of the range of values of $s$ for which $\phi_{x,v}$ is guaranteed to exist. Thus, by the same calculation, we can conclude that for all $0 \leq |s_{1}|, |s_{2}| \leq \frac{1}{2} R(n)$,
\be \label{IneqMainCalculationLargePerturbations}
\| p &(\cdot | \phi_{x,v}(s_{1}) - p(\cdot |  \phi_{x,v}(s_{2}))) \|_{\mathrm{TV}} \geq \frac{|s_{1}-s_{2}|}{10 A_{3}} n^{-a_{3}}.
\ee
This tells us that for our choice of  $(x,v)$, small perturbations of the data $\phi_{x,v}(s) \in \mathcal{G}$ will have quite different posterior distributions for \textit{most} small values of $s$.

We now check that, for \textit{any} distribution $\pi$, we have
\be
\| p(\cdot | X) - \pi \|_{\mathrm{TV}} \geq \frac{S(n)}{160 A_{3}} n^{-a_{3}-2}
\ee
holds w.e.p.

To do this, we will define for $0 \leq s \leq \frac{1}{2} R(n)$ a Metropolis-Hastings transition kernel $Q_s$ with stationary measure equal to $\gamma_{n,Y}$ conditioned on being in the set $\mathcal{G}(x) \cap \tilde{\mc Z}$, which we denote $\tilde{\gamma}_{n,Y}$.\footnote{This Metropolis-Hastings kernel is purely a proof device - it need not be computationally tractable.}. To sample from $Q_{S}(x_{0},\cdot)$,
\begin{enumerate}
\item Sample $V \sim \mathrm{Unif}(W(x_{0}))$ and $S \sim \mathrm{Unif}([-s,s])$.
\item Propose the point $X' = \phi_{x_0,V}(S)$
\item Accept or reject $X'$ using the usual Metropolis-Hastings rejection step.
\end{enumerate}

By Assumption \ref{GoodGamma}, $\gamma_{n,Y}$ changes little over small intervals and so the probability of rejection can be made arbitrarily small by choosing a small value of $s$. More precisely, it implies that there exists some $1 < \tilde{a} < \infty$ so that for $X \sim \gamma_{n,Y}$, the random measure $Q_{n^{-\tilde a}}(X,\cdot)$ has the following property:  w.e.p., either the proposed point $X'$ is accepted \textit{or} the proposed point is not in $\tilde{\mc Z}$.

Set $s = s(n) = \min(\frac{1}{2} R(n), n^{-\tilde{a}})$. Sample $X \sim \tilde{\gamma}_{n,Y}$, and let $V,S$ be the additional random variables sampled to construct $X'$ in the above algorithm for sampling from $Q_{S}(X,\cdot)$. We now consider two cases:

\begin{enumerate}
\item $\inf_{-s(n) \leq t \leq s(n)} \| p(\cdot | \phi_{X,V}(t)) - \pi \|_{\mathrm{TV}} \geq \frac{s(n)}{40 A_{3}} n^{-a_{3}-2}$. In particular, choosing $t=0$ gives $\ \| p(\cdot | X) - \pi \|_{\mathrm{TV}} \geq \frac{s(n)}{40 A_{3}} n^{-a_{3}-2}$ as well.

\item $\inf_{-s(n) \leq t \leq s(n)} \| p(\cdot | \phi_{X,V}(t)) - \pi \|_{\TV} \leq \frac{s(n)}{40 A_{3}} n^{-a_{3}-2}$. Although the infimum may not be achieved for any value of $t \in [-s(n),s(n)]$, there \textit{does} always exist some $s' = s'(X,V) \in [-s(n), s(n)]$ satisfying
\be
\| p(\cdot | \phi_{X,V}(s')) - \pi \|_{\TV} \leq 2 \inf_{-s(n) \leq t \leq s(n)} \| p(\cdot | \phi_{X,V}(t)) - \pi \|_{\TV}.
\ee  When $(X,V)$ is $\mathfrak{A}$-good, we have by Inequality \eqref{IneqMainCalculationLargePerturbations} that for all $0 \leq t \leq s(n)$,
\be
\| p(\cdot | \phi_{X,V}(t)) - \pi \|_{\TV} &\geq  \| p(\cdot | \phi_{X,V}(t)) - p(\cdot |  \phi_{X,V}(s')) \|_{\TV} - \| p(\cdot |  \phi_{Z,V}(s')) - \pi \|_{\TV}   \\
&\geq  \frac{|t-s'|}{10 A_{3}} n^{-a_{3}} - \frac{s(n)}{20 A_{3}} n^{-a_{3}-2}.
\ee

Since $S$ was chosen uniformly in $[-s(n),s(n)]$, we have $\P[|S-s'| \geq n^{-2} s(n)] \geq 1 - n^{-2}$, and so this implies

\be
\P\left[ \left\{ \| p(\cdot | \phi_{X,V}(t)) - \pi \|_{\TV} \leq\frac{s(n)}{20 A_{3}} n^{-a_{3}-2} \right\} \cap \left\{ (X,V) \text{ is } \mathfrak{A}\text{-good} \right\} \right] \leq n^{-2}.
\ee
\end{enumerate}

Thus, in both cases, $\{\| p(\cdot | \phi_{X,V}(t)) - \pi \|_{\TV} \geq \frac{s(n)}{40 A_{3}} n^{-a_{3}-2}\} \cup \{(X,V) \text{ is not } \mathfrak{A}\text{-good}\}$ holds w.e.p. Since $ \{(X,V) \text{ is } \mathfrak{A}\text{-good}\}$ holds w.e.p. by assumption, we conclude that $\{\| p(\cdot | \phi_{X,V}(t)) - \pi \|_{\TV} \geq \frac{s(n)}{40 A_{3}} n^{-a_{3}-2}\}$ holds w.e.p.
\end{proof}

\subsubsection{Application to Exponential Family Regression Models} \label{SubsecExpRegApp}

In this section, we apply our abstract results to exponential family regression models of the form

\be \label{EqGenExpFam}
p( (x,y) | \beta,\sigma) = \prod_{i=1}^{n} a(y_i,\sigma) e^{ \frac{b(x_i\beta) y_i - c(x_i\beta)}{d(\sigma)}}
\ee
where $y_{i} \in \mathcal Y$ is a sample space, $X_{i} \in \mathcal X \equiv \mathbb{R}^{d}$ is a sample space for the covariates, $(\sigma, \beta) \in \Theta = [0,\infty) \times \mathbb{R}^{d}$ is a parameter space, and $a \, : \, \mathcal Y \mapsto [0,\infty)$, $b \, : \mathbb R  \mapsto \mathbb{R}$, $c \, : \, \mathbb R \mapsto \mathbb{R}$ and $d \, : \, \mathbb R_+ \mapsto \mathbb{R}_+$ are further functions. Recall that the logistic regression model \eqref{EqLogRefDef} is a special case.

We now fix a prior $p$ and data-generating distribution $\gamma$ for the covariates, and introduce some basic assumptions about models in this class that will be used (along with moderate assumptions about the test statistics being used) to check Assumption \ref{AssumptionsGoodPoints}. To do so, we need some notation related to the collection of points for which the MLE is fixed.

Throughout, we will assume $\sigma$ is known. Then, the MLE for a GLM is defined by the equations:
\be \label{eq:LikelihoodEquations}
0 = \frac{\partial}{\partial \beta_{j}} \log p((x,y) | \beta,\sigma) =  \sum_{i=1}^{n} \frac{x_{ij}}{d(\sigma)} \left(\frac{\partial}{\partial \beta_j} b(x_{i} \beta) y_{i} - \frac{\partial}{\partial \beta_j} c(x_i \beta) \right).
\ee
Denote by $\hat \beta = \hat{\beta}(z_{0})$ the point that solves \eqref{eq:LikelihoodEquations} for the observed data $z_0 = (y_0,x_0)$. Let $f: \bb R^{np + p} \to \bb R^{p}$ be the function of $x,\beta$ defined by the set of partial derivatives on the right side in \eqref{eq:LikelihoodEquations}, so that the set of points where the MLE is equal to some particular $\beta$ is contained in 
\be \label{EqGLMJacSetDef}
S(\beta) = \{x : f(x, \beta) = 0\},
\ee
with $f_j(x,\beta) = [f(x,\beta)]_j$. Let
\be \label{EqGLMJacDef}
J(x, \beta) = \left[ \left( \frac{\partial f_j}{\partial \beta_k} \right) (x,\beta) \right]
\ee
be the Jacobian of the map $f$, with elements
\be \label{EqGLMJacExpanded}
J_{jk}(x,\beta) = \sum_{i=1}^{n} \frac{x_{ij} x_{ik}}{d(\sigma)} \left(\frac{\partial^2}{\partial \beta_j \partial \beta_k} b(x_{i} \beta) y_{i} - \frac{\partial^2}{\partial \beta_j \partial \beta_k} c(x_i \beta) \right).
\ee

Define the variables
\be \label{eq:SimpleCond}
\zeta_i = x_{i}' \beta, \quad i=1,\ldots,n.
\ee

Viewing the $y_{i}$ as fixed, \eqref{EqGLMJacExpanded} is of the form
\be \label{eq:SimpleJac}
J_{jk} \equiv  \sum_{i=1}^n x_{ij} x_{ik} F(\zeta_i).
\ee

For $1 \leq i \leq n$, we then define the derivative at $x_{1},\ldots,x_{n}$:
\be
V_{jk}[i] &= \left( \frac{\partial}{\partial \delta} J_{jk}(x_{1},\ldots,x_{i-1},x_{i} + \delta(1,1,\ldots,1), x_{i+1},\ldots,x_{n}, \beta) \right)(0) \\
&= x_{ik} F(\zeta_{i}) + x_{ij} F(\zeta_{i}) + x_{ij}x_{ik} F'(\zeta_{i}) \alpha,
\ee
where $\alpha = \sum_{j=1}^{p} \beta_{j}$ and $(1,1,\ldots,1)$ is a vector of ones.

Let $\mathcal{M} = \{(i,j) \in \{1,2,\ldots,p\}^{2} \, : \, i \leq j\}$ and $m = |\mathcal{M}|$. Define $W$ to be the by $n \times m$ matrix whose $i$'th row is $\{V_{\ell}[i]\}_{\ell \in \mathcal{M}} \equiv (W_{1i},\ldots,W_{mi}) \equiv W^{(i)}$.  For $\epsilon > 0$, $i \in \{1,2,\ldots,n\}$, vectors $x, \xi \in \mathbb{R}^{p}$ and $v \in \mathbb{R}^{m}$ with $\| \xi \| = \|v \| = 1$, define
\be
R_{v}^{(i)}(\epsilon; \xi, x) = \sum_{(j,k) \in \mathcal{M}} &[ (x_{k} + \epsilon \xi_{k}) F(\zeta + \epsilon \xi' \beta) + (x_{j} + \epsilon \xi_{j}) F(\zeta + \epsilon \xi'\beta)  \\
&+ (x_{j} + \epsilon \xi_{j})(x_{k} + \epsilon \xi_{k}) F'(\zeta + \epsilon \xi'\beta) \alpha] v_{jk}
\ee
and
\be
\hat{R}_{v}^{(i)}(\xi,x) = \left( \frac{d}{d \epsilon} R_{v}^{(i)}(\epsilon; \xi, x)\right) (0),
\ee
the directional derivative of $<W^{(i)},v>$ with respect to $x_{i}$ in the direction $\xi$.

\begin{assumptions} \label{AssumptionsGLMVersion}

We make the following assumptions about the GLM, data-generating distribution $\gamma$, and true parameter $\beta_{0}$:

\begin{enumerate}
\item \textbf{Good $\gamma$:} Assumption \ref{GoodGamma} holds. Furthermore, the MLE $\hat{\beta}_{n}$ based on $n$ datapoints satisfies
\be \label{IneqFastConvMLE}
\P[\| \hat{\beta}_{n} - \beta_{0} \| > \omega_{n} \textnormal{ infinitely often}] = 0
\ee
for some sequence $\lim_{n \rightarrow \infty} \omega_{n} =0$, where $\beta_{0}$ is the true parameter value. \footnote{Notice that for this condition to hold, it is enough for $\hat \beta_n \to \beta_0$ a.s.}

\item \textbf{Moderate Growth and Nonsingularity of Functions:} The functions $a,b,c,d$ are smooth. Furthermore, for all $i \in \{1,2,\ldots,n\}$,
\be  \label{IneqModGrowth1}
\sup_{\|x\| = \| \xi \| = 1} \inf_{\|v\| = 1} | \hat{R}_{v}^{(i)}(\xi,x)| > 0.
\ee

\item \textbf{Moderate Growth of Data:} Sample $X_{1},\ldots,X_{n} \stackrel{i.i.d.}{\sim} \gamma$ and $Y_{i} | X_{i} \sim p(\cdot | X_{i}, \beta_{0})$. We have\footnote{Note that, in this expression, $\{ \frac{\max_{1 \leq i \leq n}(|\log(\| (X_{i},Y_{i}) \|)|)}{\log(n)}\}_{n \in \mathbb{N}}$ is a (random) sequence indexed by $n \in \mathbb{N}$; the event that this sequence is $O(1)$ is exactly the event that this sequence is bounded. }
\be \label{IneqModGrowth2}
\P\left[ \frac{\max_{1 \leq i \leq n} |\log(\| (X_{i},Y_{i}) \|)|}{\log(n)} = O(1) \right] = 1.
\ee

\item  \textbf{Anticoncentration of Data:} For fixed $\theta$, define the collection of independent random variables $V_{i,j} = D_{i,j}(X,\theta)$ and define $\rho_{\max}(\theta)$ to be the maximum of the densities of these random variables. We assume that there exists $\mathcal{B}_{2} \subset \Theta$ and $c,C > 0$ satisfying
\be \label{IneqAntiConc}
\sup_{ \theta \in \mathcal{B}_{2}} \rho_{\max}(\theta) &= n^{O(1)}  \\
\lambda(\mathcal{B}_{2}) &\leq C n^{-c} \\
\beta_{0} &\in \mathcal{B}_{2}.
\ee
\end{enumerate}
\end{assumptions}

\begin{remark} [The (Not-So-)Difficult Condition] \label{RemSimpleAlgBadCond}
As we will see, most of these conditions are straightforward to verify either by inspection of the model or by simple assumptions on the tails of various distributions. The one exception seems to be Inequality \eqref{IneqModGrowth1} - while it appears to hold for a wide variety of models (and its analogues appear to hold for a wide variety of control variates), we don't know a way to verify it in any great generality.

While this makes e.g. Theorem \ref{ThmActualLogisticRegression} less general than it might otherwise be, there is a straightforward way to use a computer to obtain a certificate that this condition holds for a \textit{particular} model of interest. To see this, note that $\hat{R}_{v}^{(i)}(\xi,x)$ is an inner product. Thus, for \textit{any} choice $\xi = \xi^{(1)}, x = x^{(1)}$, the set 
\be 
V^{(1)} = \mathrm{Span}(\{ v \, : \, |\hat{R}_{v}^{(i)}(\xi^{(1)},x^{(1)})| = 0\})
\ee 
is an easily-computed hyperplane; furthermore, it is of codimension 1 \textit{unless} the \textit{coefficients} for this inner product \textit{all} vanish. Continuing to choose points $(\xi^{(2)}, x^{(2)}), \ldots$ and computing $V^{(2)}, V^{(3)}, \ldots$ the same way, we can verify Inequality \eqref{IneqModGrowth1} by checking that $\cap_{j=1}^{\ell} V^{(j)} = \{0\}$ for some (finite) $\ell$. Note that this intersection is almost-surely equal to $\{0\}$ if e.g. $\ell = m + 1$ and $v^{(j)}$ have some density with respect to Lebesgue measure, so in practice we expect that simply sampling such pairs uniformly at random will work in most situations.

If we wish to consider a data-generating process for the covariates that does not have full support on \textit{e.g.} a unit ball, we can run this algorithm while only choosing pairs $x^{(k)}, \xi^{(k)}$ such that $\{ x^{(k)} + s \xi^{(k)}\}_{s=0}^{S}$ is in the support of the data-generating process for sufficiently small $S > 0$.

\end{remark}

We now check Assumption \ref{AssumptionsGLMVersion} for several families of GLMs. For any GLM in canonical form, we always have $b(x) = x$. The following are some popular families of GLMs in canonical form:

\begin{defn} \label{DefCommonGLM} [Some Common GLMs]
\begin{enumerate}
 \item logistic regression: $a= 1$, $c = \log(1+e^{x_i\beta})$, $d = 1$.
 \item binomial logistic regression: $a = \binom{n}{y_{i}}$, $c = n \log(1+e^{x_i\beta})$, $d = 1$
 \item Poisson regression: $a = (y_i!)^{-1}$, $c = e^{x_i \beta}$, $d = 1$
\end{enumerate}
\end{defn}

\begin{lemma} \label{LemmaGLMsAreOK}
Assume that the prior $p$ has bounded density function and subexponential tails. Then the functions in Definition \ref{DefCommonGLM} satisfy all four conditions of Assumption \ref{AssumptionsGLMVersion} as long as they satisfy the first condition. Similarly, the second part of Assumption \ref{GoodGamma} is satisfied as long as $\gamma$ satisfies the first part.
\end{lemma}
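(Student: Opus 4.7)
The plan is to verify, for each of the three canonical GLMs in Definition \ref{DefCommonGLM}, each of the four conditions in Assumption \ref{AssumptionsGLMVersion} beyond the first (which is hypothesized), together with the second part of Assumption \ref{GoodGamma}. In all three examples the functions $a,b,c,d$ are elementary smooth functions, so the smoothness part of condition (2) is automatic, leaving four concrete tasks: (i) the conditional version of inequality \eqref{IneqGoodGammaUncond} for $g_{n,Y}$; (ii) the nonsingularity inequality \eqref{IneqModGrowth1}; (iii) the data-growth bound \eqref{IneqModGrowth2}; and (iv) the anticoncentration bound \eqref{IneqAntiConc}.

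For (i), I would write $g_{n,Y}(x) \propto g_n(x) \prod_{i=1}^{n} p(y_i \mid x_i, \beta_0)$ and note that for each example $p(y_i \mid x_i, \beta_0)$ is a smooth positive function of $x_i$ which, provided $y_i$ lies in the range allowed by (iii), is bounded above and below with uniformly bounded derivative over the compact support $I$. Multiplication by such a factor preserves the modulus-of-continuity bound \eqref{IneqGoodGammaUncond} up to a change of constants, so the bound transfers from $g_n$ to $g_{n,Y}$. For (iii), the covariates are supported in a compact set, giving the upper half of \eqref{IneqModGrowth2} immediately; the lower bound $\min_i \|X_i\| \geq n^{-O(1)}$ follows from boundedness of $g$ and Borel-Cantelli. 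For the responses: logistic has $Y_i \in \{0,1\}$; binomial logistic gives $\log Y_i = O(\log n)$ deterministically; and for Poisson, $\max_i Y_i \leq \log^{2}(n)$ w.e.p. by the standard Chernoff bound on Poisson tails combined with a union bound.

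For (iv), a direct calculation shows that for canonical GLMs
\[
D_{i,j}(X,\theta) = \frac{\beta_j \bigl(y_i - c'(x_i\beta)\bigr)}{d(\sigma)} - \partial_{x_{ij}}\log M(Y,X),
\]
and a further derivative yields $\partial_{x_{ij}} D_{i,j}(X,\theta) = -\beta_j^{2} c''(x_i\beta)/d(\sigma) - \partial^{2}_{x_{ij}}\log M(Y,X)$. Because the prior has bounded density and subexponential tails, the second term (which may be identified with a posterior variance of a score) is bounded by a constant uniformly in $x$. Taking $\mathcal{B}_2 = \{\beta : |\beta_j| \geq n^{-c}\}$ for small $c>0$, the leading term has magnitude at least $c_0 n^{-2c}$ since $c''$ is bounded below by a positive constant on the compact set of plausible values of $x_i\beta$; hence $D_{i,j}$ is a strictly monotone function of $x_{ij}$ with derivative of magnitude at least $n^{-2c}$, and a change of variables using the bounded density of $\gamma$ yields $\rho_{\max}(\theta) = n^{O(1)}$ uniformly in $\theta\in\mathcal{B}_2$. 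Finally $\lambda(\mathcal{B}_2^{c}) \leq Cn^{-c}$ and $\beta_0 \in \mathcal{B}_2$ eventually since $\beta_0$ has fixed nonzero components.

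The main obstacle will be condition (ii). By the canonical form $b(x)=x$, the function $F$ in \eqref{eq:SimpleJac} reduces to $F(\zeta) = -c''(\zeta)/d(\sigma)$, which is real analytic and strictly nonzero in all three models: $c''=e^\zeta/(1+e^\zeta)^2$ for logistic, $c''=Ne^\zeta/(1+e^\zeta)^2$ for binomial logistic (with $N$ the binomial trial count), and $c''=e^\zeta$ for Poisson. Writing $\hat R_v^{(i)}(\xi,x) = \langle u(\xi,x), v\rangle$ with $u(\xi,x)\in\mathbb{R}^{m}$ a vector whose components are elementary polynomials in $\xi_j,x_j$ with coefficients built from $F, F', F''$ evaluated at $\zeta = x\beta$, the criterion of Remark \ref{RemSimpleAlgBadCond} reduces \eqref{IneqModGrowth1} to showing that the linear span of $\{u(\xi,x)\}$, as $(\xi,x)$ ranges over the product of unit spheres, is all of $\mathbb{R}^{m}$. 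Since $F$ is real analytic and nonvanishing, any nontrivial collection of vectors in this set can be perturbed to maintain linear independence, so the verification reduces to exhibiting $m$ explicit linearly independent coefficient vectors --- a finite computation that can be carried out once and for all in each of the three examples.
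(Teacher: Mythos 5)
Your overall structure mirrors the paper's: compute $D_{i,j}$ and $D_{\max}$, then check the remaining three parts of Assumption~\ref{AssumptionsGLMVersion} and the second part of Assumption~\ref{GoodGamma}. Your treatments of condition~(3) and, with minor caveats, condition~(4) line up with the paper's. But there are two substantive gaps.

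First, and most importantly, you do not actually prove condition~(2), i.e.\ Inequality~\eqref{IneqModGrowth1}. You correctly reduce the problem to showing that a certain collection of coefficient vectors spans $\mathbb{R}^m$, and you correctly observe that $F\propto c''$ is nonvanishing and analytic for all three models, but you then stop at ``a finite computation that can be carried out once and for all in each of the three examples'' without carrying it out. This is precisely the verification that the statement of the lemma is asking you to perform; deferring to the algorithm of Remark~\ref{RemSimpleAlgBadCond} is not a proof. The paper gives a genuine argument: restrict first to directions $\xi$ with $\xi'\beta = 0$, so that $\hat R_v^{(i)}(\xi,x)$ simplifies to $\sum_{(j,k)\in\mathcal{M}}\bigl[(\xi_j+\xi_k)F(\zeta) + (\xi_j x_k + \xi_k x_j)\alpha F'(\zeta)\bigr]v_{jk}$; vary $x$ and $\xi$ over their allowed ranges to conclude that vanishing of this expression forces $v$ to have a very constrained form (essentially every ``row'' of $v$, viewed as a symmetric matrix, proportional to $\beta$); and then eliminate that remaining case by taking $\xi\propto\beta$. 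Some version of this casework is needed, and it is absent from your proposal.

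Second, a few of your other steps are imprecise in ways worth flagging. For the second part of Assumption~\ref{GoodGamma}, you claim $p(y_i\mid x_i,\beta_0)$ has ``uniformly bounded derivative over the compact support $I$,'' but for Poisson the $x_i$-derivative of the log-likelihood scales with $y_i$, so it is not uniform in $y_i$. This still works because $y_i$ is polynomially bounded w.e.p., but that dependence has to be tracked; the paper's approach of factoring the ratio as $H(X,\tilde X)\,e^{-(X-\tilde X)'y}$ with $H$ independent of $y$ makes this clean. Regarding condition~(4): you include the term $-\partial_{x_{ij}}\log M(Y,X)$ in $D_{i,j}$, which actually matches the formal definition in Section~\ref{SubsecGeomCondApp} more faithfully than the paper's own computation (which silently drops the normalizer). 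That added care is welcome, but your assertion that this term ``is bounded by a constant uniformly in $x$'' because it is ``a posterior variance of a score'' is not quite right --- it is a posterior \emph{mean} of the unnormalized score, and bounding it requires a posterior-concentration argument that you do not supply. Since this term is $\theta$-independent and only contributes a constant shift to $\log p(\theta\mid\phi_{x,v}(s)) - \log p(\theta\mid x)$, one can argue it is harmless for the total-variation estimate downstream, but as written your step leaves this unjustified.
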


\begin{proof}
We start with some preliminary calculations:

\be
D_{i,j}(\theta,X) &= \frac{1}{p(\theta | X)} \frac{\partial}{\partial x_{ij}} p(\theta | X) \\
&=\frac{1}{p(\theta) \prod_{i=1}^{n} a(y_i,\sigma) e^{ \frac{b(x_i\beta) y_i - c(x_i \beta)}{d(\sigma)}} } \frac{\partial}{\partial x_{ij}} \left( p(\theta) \prod_{i=1}^{n} a(y_i,\sigma) e^{ \frac{b(x_i\beta) y_i - c(x_i\beta)}{d(\sigma)}}  \right)\\
&= \frac{1}{d(\sigma)} \left\{ \frac{\partial}{\partial x_{ij}} b(x_i\beta)y_i - \frac{\partial}{\partial x_{ij}} c(x_i \beta) \right\} \\
&= \frac{\beta_j}{d(\sigma)} \left\{ b'(x_i\beta)y_i - c'(x_i \beta) \right\} \\
\ee

and

\be \label{EqCalcDMax}
D_{\max}(\theta,X) & = \max_{i,i',j,j'}  \frac{1}{p(\theta | X)} \left| \frac{\partial^{2}}{\partial x_{ij} \partial x_{i'j'}} p(\theta | X) \right| \\
&= \max_{i,i',j,j'} \frac{1}{d(\sigma)^2} \Bigg( \left\{ \frac{\partial}{\partial x_{ij}} b(x_i\beta)y_i - \frac{\partial}{\partial x_{ij}} c(x_i \beta) \right\} \left\{ \frac{\partial}{\partial x_{i'j'}} b(x_{i'} \beta) y_{i'} - \frac{\partial}{\partial x_{i'j'}} c(x_{i'} \beta) \right\}  \\
&\quad + \textbf{1}_{\{i = i'\}} \left\{ \frac{\partial^2}{\partial x_{ij} \partial x_{ij'}} b(x_i \beta) y_i - \frac{\partial^2}{\partial x_{ij} \partial x_{ij'}} c(x_i\beta) \right\}  \Bigg) \\
&= \max_{i,i',j,j'} \frac{\beta_j \beta_{j'}}{d(\sigma)^2} \Bigg( \left\{ b'(x_i\beta)y_i - c'(x_i \beta) \right\} \left\{ b'(x_{i'} \beta) y_{i'} - c'(x_{i'} \beta) \right\}  \\
&\quad + \textbf{1}_{\{i = i'\}} \left\{ b''(x_i \beta) y_i - c''(x_i\beta) \right\}  \Bigg).
\ee

We now check the parts of Assumption \ref{AssumptionsGLMVersion} in order:

\begin{enumerate}
\item This is assumed.
\item The first part of the condition is clear by inspection.  To verify Inequality \eqref{IneqModGrowth1}, we start by restricting our attention to vectors $\xi$ that satisfy $\xi'\beta = 0$. In that case,

\be
\hat{R}_{v}^{(i)}(\xi,x) = \sum_{(j,k) \in \mathcal{M}} &[(\xi_{j} + \xi_{k})F(\zeta) + (\xi_{j} x_{k} + \xi_{k} x_{j})\alpha F'(\zeta)]v_{jk}.
\ee

We observe that, allowing the directions of the vectors $x, \xi$ to vary over their allow ranges, this expression can only be identically 0 if every row of $v$ is proportional to the vector $\beta$. In that case, considering the vector $\xi \propto \beta$ immediately allows you to see that the first derivative is not also identically 0 in this case.

\item Inequality \eqref{IneqModGrowth2} follows almost immediately from the fact that the prior $p$, the covariate-generating distribution $\gamma$, and the model likelihoods all have subexponential tails.

More carefully: since $\gamma$ and $p$ have subexponential tails, there exists some $0 < A < \infty$ such that, for $\theta \sim p$ and $X_{1},\ldots,X_{n} \sim \gamma$,
\be \label{IneqModGrowthAMx}
\P[\max(|\theta|,\max_{1 \leq i \leq n} \|X_{i} \|) > A \log(n)] \leq n^{-3}.
\ee
In the two logistic regression cases, $Y_{i} \leq n$ deterministically. In the remaining case, define the random quantity $A_{\max} = \max(|\theta|,\max_{1 \leq i \leq n} \|X_{i} \|)$; typical concentration inequalities for subexponential random variables tell us $\P[\max_{1 \leq i \leq n} |Y_{i}| > (A_{\max}^{2}+1) \log(n)^{2}] \leq n^{-3}$ for all $n$ sufficiently large.\footnote{This bound is extremely conservative.}  Combining these two bounds on $\max_{1 \leq i \leq n} |Y_{i}|$ with Inequality \eqref{IneqModGrowthAMx} completes the proof.

\item We have explicit formulas for $D_{ij}$ and its arguments; the claim can be verified by seeing that none of these formulas have super-polynomially large values outside of $O(n)$ singularities.
\end{enumerate}

Finally, we verify the second part of Assumption \ref{GoodGamma}. In any generalized linear model in canonical form, the ratio of densities
\be
\frac{g_{n,y}(X)}{g_{n,y}(\tilde X )}
\ee
can be written in the form
\be
\frac{g_{n,y}(X)}{g_{n,y}(\tilde X )} = H(X, \tilde{X}) \,e^{-(X-\tilde X)'y}
\ee
for some function $H$ that does not depend on $y$. Thus, as long as there exists a constant $c$ such that $\| Y \|<n^c$ w.e.p, then we can choose $d_1,d_2$ such that the second part of Assumption \ref{GoodGamma} holds whenever the first part holds.
\end{proof}

Finally, we check that Assumption \ref{AssumptionsGLMVersion} implies  Assumptions \ref{AssumptionsGoodPoints} for reasonable test statistics:

\begin{lemma} \label{ThmRelateGeometricGLMAssumptions}
Consider a GLM model that satisfies Assumption \ref{AssumptionsGLMVersion}, along with test statistics $T_{1},\ldots,T_{d}$  as in the statement of Theorem \ref{ThmActualLogisticRegression}.  Then Assumption \ref{AssumptionsGoodPoints} is also satisfied.

\end{lemma}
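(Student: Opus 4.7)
The plan is to verify the three parts of Assumption \ref{AssumptionsGoodPoints} in turn, exploiting the explicit GLM structure together with Assumption \ref{AssumptionsGLMVersion}. Throughout I would restrict attention to ``typical'' $(X,Y)$: by moderate growth \eqref{IneqModGrowth2} together with MLE convergence \eqref{IneqFastConvMLE}, we may work on the event $\{\max_i\|X_i\|, \max_i |Y_i| = O(\log n)\} \cap \{\|\hat\beta(X) - \beta_0\| = o(1)\}$, which holds w.e.p.

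For the inner curvature condition (part 1), since the test statistics are MLE components, $\hat\beta(X)$ is characterized as the zero of the score map $f$; freezing $\hat\beta$ at $\hat\beta(X)$ yields
\be
\mathcal{G}(X) = \{x' \in I^n : f(x', \hat\beta(X)) = 0,\ x'_{1:m} = X_{1:m}\}.
\ee
This is the zero set of a smooth map into $\bb R^d$. I would construct the open cover required by Definition \ref{DefnInnercurvature} via the implicit function theorem applied at each $x \in \mathcal{G}(X)$: the inner curvature at a point is controlled by (i) a lower bound on the smallest singular value of the Jacobian of $x' \mapsto f(x', \hat\beta(X))$ restricted to the free coordinates, and (ii) an upper bound on the relevant second-order derivatives. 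The first is exactly what the non-degeneracy hypothesis \eqref{IneqModGrowth1} delivers, combined with moderate growth \eqref{IneqModGrowth2}; the second follows from smoothness of $b, c$ and moderate growth. Tracking the polynomial losses in $n$ through these estimates gives an inner curvature bound of the form $n^{-c}$.

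For posterior sensitivity (part 2), reuse the explicit formulas derived in the proof of Lemma \ref{LemmaGLMsAreOK}, namely
\be
D_{ij}(\theta, X) = \frac{\beta_j}{d(\sigma)}\bigl(b'(x_i\beta)y_i - c'(x_i\beta)\bigr),
\ee
with an analogous closed form for $D_{\max}$. Take $\mathcal{B}_1 = \mathcal{B}_2 \cap \{\|\beta\| \le \log n\}$, with $\mathcal{B}_2$ as in \eqref{IneqAntiConc}. On $\mathcal{B}_1$, both quantities are polynomially bounded in $n$ under the growth hypothesis, giving the upper bound in \eqref{IneqAssumptionDMax}; the lower bound on $p(\mathcal{B}_1 \mid X, Y)$ follows from subexponential concentration of the GLM posterior around $\hat\beta(X)$ together with Assumption \ref{AssumptionsGLMVersion}(4). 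The directional-derivative inequality \eqref{IneqAssumptionSomeGoodDirection} is the key point: $\sum_{i,j} v_{ij} D_{ij}(\theta, X)$ is a linear combination of the independent (in $i$, given $\theta$) random variables $b'(x_i\beta)y_i - c'(x_i\beta)$, so the anti-concentration estimate \eqref{IneqAntiConc} caps the density of this sum, and hence bounds the Lebesgue measure of the set where it falls in the window $(-A_3^{-1} n^{-a_3}, A_3^{-1} n^{-a_3})$ by $O(n^{-a_3})$.

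For part 3 (non-super-concentration), use log-concavity of the GLM likelihood together with the smooth subexponential prior to conclude that the posterior density is uniformly bounded by $n^{O(1)}$ on $\mathcal{B}_1$; any set of Lebesgue measure at most $A_3 n^{-a_3} + A_4 n^{-a_4}$ then has posterior mass at most $n^{O(1)}(A_3 n^{-a_3} + A_4 n^{-a_4})$, which can be made below $1/2$ by choosing $a_3, a_4$ large enough. The main obstacle is the inner curvature verification in part 1, since it ultimately relies on the non-degeneracy condition \eqref{IneqModGrowth1}; as discussed in Remark \ref{RemSimpleAlgBadCond}, this condition resists a fully abstract proof but is straightforward to certify computationally for any given model. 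All other steps are bookkeeping with the explicit derivative formulas and the hypotheses of Assumption \ref{AssumptionsGLMVersion}.
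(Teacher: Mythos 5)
Parts 2 and 3 of your plan track the paper's argument: part 2 is established by an application of Lemma \ref{LemmaAntiConc} with the density bound from \eqref{IneqAntiConc}, and part 3 by polynomial boundedness of the posterior density (the paper bounds derivatives of $p$ by inspection and invokes moderate growth; your log-concavity route works too but is more model-specific than needed).

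The genuine gap is in part 1. You assert that the non-degeneracy hypothesis \eqref{IneqModGrowth1}, combined with moderate growth \eqref{IneqModGrowth2}, \emph{delivers} a lower bound on the smallest singular value of the relevant Jacobian. But \eqref{IneqModGrowth1} is a pointwise, deterministic condition: for each $i$ there exists \emph{some} direction $(\xi,x)$ at which the coefficients $\hat R_v^{(i)}(\xi,x)$ are not all zero. What the implicit function theorem actually needs is that the smallest singular value $\sigma\bigl(J(X,\hat\beta)\bigr)$ of the random $p\times p$ Jacobian $J=[\partial f_j/\partial\beta_k]$ evaluated at the random data $X$ is bounded away from $0$ w.e.p.\ as $n\to\infty$. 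The entries of $J$ are sums over the $n$ data points and are highly dependent; controlling the smallest singular value of such a structured random matrix is a substantial anti-concentration problem that \eqref{IneqModGrowth1} alone does not solve. The paper resolves this via Lemma \ref{BasicRMT} in Appendix \ref{SecAppSingVal}, which takes \eqref{IneqModGrowth1} and \eqref{GoodGamma} as inputs and, using the anti-concentration lemmas of Appendix \ref{SecAppAntiConc} together with \cite{chafai2009singular} and \cite{friedland2013simple}, bounds $\sigma(J)$ below w.e.p.; it then invokes the quantitative implicit function theorem of \cite{phien2012some} to convert this into a bound on the radius of validity of the local parameterization, and uses continuity of $J(\cdot,\beta)$ plus \eqref{IneqFastConvMLE} to pass from $\beta_0$ to $\hat\beta$. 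Your write-up skips all of this, which is the technical core of the lemma. (A secondary point: you describe the matrix as the Jacobian of $x'\mapsto f(x',\hat\beta)$ in the free $x$-coordinates, which is rectangular; the paper works with the square Jacobian $\partial f/\partial\beta$ and realizes $\mathcal G(X)$ as a level set of the resulting implicit function $x\mapsto\hat\beta(x)$. Lemma \ref{BasicRMT} is stated for the square case, so the student's route would need an adapted random-matrix lemma.)
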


\begin{proof}

We consider the three parts of Assumption \ref{AssumptionsGoodPoints}  in order; the first is by far the longest.

\textbf{First Condition:}
Let $S$ be as in Equation \eqref{EqGLMJacSetDef}, and let  $x \in S(\hat{\beta})$ be a point at which the Jacobian $J$ defined in Equation \eqref{EqGLMJacDef} is invertible. By the implicit function theorem, there exists an open set $U = U(x) \subset \bb R^{np}$ containing $x$ and a continuously differentiable function $g: U \to \bb R^{p}$ such that 
\be
g(x') = \hat \beta
\ee
and
\be
f(x',g(x')) = 0
\ee
for all $x' \in U$, and furthermore
with
\be
\left(\frac{\partial g}{\partial x_{ij}}\right) (x') = -J^{-1}(x',\hat \beta)  \left(\frac{\partial f}{\partial x_{ij}}\right) (x',g(x')).
\ee

We will need to verify that $U$ contains a ball of radius $n^{-O(1)}$ and the smallest singular value $\sigma(J)$ of $J$ is $n^{-O(1)}$, at least at most such points $x$. To do this, we will need a new bound on the smallest singular value of a random matrix that is stated and proved in Appendix \ref{SecAppSingVal}. In particular, we will verify the assumptions of Lemma \ref{BasicRMT} to bound the smallest singular value of $J$, then invoke the quantitative version of the implicit function theorem found in \cite{phien2012some}  to bound the size of $U$. We now set up the notation required to verify  the assumptions of Lemma \ref{BasicRMT}, following the notation for that lemma closely.

Inequality \eqref{IneqModGrowth1} (and the Assumption \ref{GoodGamma} that $\gamma$ has nonzero density on the unit ball) immediately implies that conditions \eqref{IneqNotIdZero} and \eqref{GoodSetNotEmpty} of Lemma \ref{BasicRMT} are satisfied. Condition \eqref{BoringUpperBoundRMT} of  Lemma \ref{BasicRMT} follows immediately from the assumption that the functions $a,b,c,d$ in the definition of the GLM are smooth and the support of the covariates is compact. Applying Lemma \ref{BasicRMT}, we conclude that there exists $\eta > 0$ so that
\be
\{ \sigma(J(X,\beta)) \geq \eta \}
\ee
holds w.e.p. for any fixed $\beta$, including $\beta_{0}$. Since $\| \hat{\beta} - \beta_{0} \| = o(1)$  w.e.p. by Inequality \eqref{IneqFastConvMLE}, and $J(\cdot,\beta)$ is a continuous function of $\beta$, this implies that there exists a constant $\eta'$ such that
\be
\{ \sigma(J(X,\hat{\beta})) \geq \eta' \}
\ee
holds w.e.p. Applying the main result of \cite{phien2012some} with the same bounds on the derivatives of $J$  gives the desired result.

\textbf{Second Condition:} We see that Inequality \eqref{IneqAssumptionDMax} holds w.e.p. whenever  $a,b,c,d$ are smooth, $\gamma$ has bounded support and Inequality  \eqref{IneqModGrowth2} holds. Inequalities \eqref{EqSomeDir} and \eqref{IneqAssumptionSomeGoodDirection} follow from an application of Lemma \ref{LemmaAntiConc}, with bound on $\rho_{\max}$ coming from Inequality \eqref{IneqAntiConc}.

\textbf{Third Condition:} We can see by inspection that the derivatives of $p$ are polynomially bounded in all arguments; the arguments themselves are polynomially bounded w.e.p. by the fact that $a,b,c,d$ are smooth, the fact that $\gamma$ has compacy support, and Inequality \eqref{IneqModGrowth2}.

\end{proof}

\subsection{Application to Logistic Regression} \label{SecProofMainThm}

\begin{theorem} \label{ThmActualLogisticRegression}

Under the conditions of Theorem \ref{ThmExampleTorpid}, the logistic regression model satisfies:
\begin{enumerate}
    \item Parts (a) and (b) of Assumption \ref{ass:LargeFluctuations} 
    \item Assumption \ref{ass:GoodSetHP},
    \item and Inequality \eqref{IneqWarmStartFake},
\end{enumerate}
all with constants $c_{i}, C_{i}$ that do not depend on $n$.
\end{theorem}

\begin{proof} 

By our assumption on $\gamma$ and the usual strong consistency theorem for the MLE \footnote{See \textit{e.g.} Theorem 2 of \cite{fahrmeir1985consistency}. Most of the conditions can be verified immediately, so we give short details only for one of them: Condition $(S_{\delta})$ clearly holds when all variables in the Fisher information matrix are replaced by their expectation; application of Hoeffding's inequality for i.i.d. bounded random variables then implies that $(S_{\delta})$ itself holds. }, the first part of Assumption \ref{AssumptionsGLMVersion} is satisfied. By Lemma \ref{LemmaGLMsAreOK}, the remaining parts of  Assumption \ref{AssumptionsGLMVersion}, all parts of Assumption \ref{GoodGamma}, and thus Assumption \ref{ass:GoodSetHP} are also satisfied. This means that, by Lemma \ref{ThmRelateGeometricGLMAssumptions}, Assumptions \ref{AssumptionsGoodPoints} are satisfied. 

The result then follows almost immediately: Inequality Inequality \eqref{IneqWarmStartFake} follows from Lemma \ref{LemmaWarmStartSuffCond} (as well as the usual Bernstein-von Mises theorem showing convergence of the rescaled posterior), and Parts (a) and (b) of Assumption \ref{ass:LargeFluctuations} follow from applying Lemma \ref{LemmaSuffCondDerivatives} (whose conditions are satisfied because Assumptions \ref{AssumptionsGoodPoints} are satisfied).

\end{proof}

\subsection{Verifying Other Control Variates}\label{RemVerifContVar}
We give a quick guide to verifying the conclusion of Theorem \ref{ThmActualLogisticRegression} for other control variates. Following our proof of that theorem, the same conclusion holds with $T_{1},\ldots,T_{d}$ augmented by any collection of control variates $T_{d+1},\ldots, T_{k}$, as long as the first condition in Assumption \ref{AssumptionsGoodPoints} is satisfied. Unfortunately, verifying this condition is the most difficult part of the proof of Theorem \ref{ThmActualLogisticRegression}, as it requires us to analyze the singular values of a random matrix with highly dependent entries. We mention two very different approaches to checking this result for new control variates:

\begin{enumerate}
\item In Remark \ref{RemSimpleAlgBadCond}, we isolate a single difficult-to-check condition that the proof hinges on, and give a simple algorithm that can be used to verify the condition for specific algorithms.

\item In Appendix \ref{AppSubsecGrid}, we define a special collection of control variates (see Equation \eqref{EqSimpleGridCont}) and show that they satisfy the first condition in Assumption \ref{AssumptionsGoodPoints}.

\end{enumerate}

We recall that adding additional control variates makes it harder to verify Assumption \ref{ass:LargeFluctuations}(a), but easier to verify the remaining assumptions of Theorem \ref{thm:SpecGapBound} with good constants. For this reason, having a collection of control variates for which Assumption \ref{ass:LargeFluctuations}(a) is easily satisfied can lead to substantially sharper bounds for specific algorithms.

\section{Subsampling Constructions} \label{SecAltCons}

\begin{comment}
Throughout the paper, we have made two assumptions that seem to be in tension:

\begin{enumerate}
    \item The individual datapoints are i.i.d., but
    \item Our subsampling algorithms are likely to access the datapoints (roughly) in order.
\end{enumerate}

Of course, most algorithms are written in such a way that i.i.d. datapoints are accessed in random order. In this section, we show how to resolve this tension by rewriting ``random-access" algorithms as ``ordered-access" algorithms, without changing the underlying Markov chain at all. Our basic rewrite, Algorithm \ref{GenAusterity}, includes SGLD (Example \ref{ExSGLD}) and SPMMH (Example \ref{ExSPMMH}) as special cases.

\end{comment}

In this section, fix notation as in Algorithm \ref{GenAusterity} and Proposition \ref{PropIneqChoiceOfSAust}.

We now give an alternative to Algorithm \ref{GenAusterity} which generates the same Markov chain but accesses datapoints in order. We first need a short technical lemma. 

Fix a subset $S = \{1,2,\ldots,n\}$ and integer $k \in \mathbb{N}$, and let $\ell = \lfloor \frac{n}{k} \rfloor$. Let $Y[1], Y[2],\ldots,Y[\ell]$ be defined recursively by sampling $Y[1] \sim C_{k}(S)$ and then according to the following rule:
\be
Y[i] \sim C_{k}(S\backslash \cup_{j=1}^{i-1} Y[j]).
\ee
Let $\tau$ be any stopping time for this sequence, let $\mu_{\tau}$ be the distribution of the sequence $Y \equiv (Y[1],Y[2],\ldots,Y[\tau])$, and for any $t$ let let $M(Y,t) = \cup_{i=1}^{t} \{Y[i]\}$. We have:

\begin{lemma} \label{LemmaSillyRepPerm}
Let $(Y_{1},\tau_{1}),(Y_{2},\tau_{2}),\ldots$ be an adapted sequence with $Y_{i} \sim \mu_{\tau_{j}}$. It is possible to couple this sequence to a permutation $\sigma \sim \mathrm{Unif}[S_{n}]$ such that
\be
\cup_{i=1}^{I} M(Y_{i},\tau_{i}) = \{1,2,\ldots, \max(\sigma(\cup_{i=1}^{I} M(Y_{i},\tau_{i})))\}
\ee
for all $I \in \mathbb{N}$.
\end{lemma}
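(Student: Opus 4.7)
The plan is to build the coupling in an online fashion, assigning positions in $\sigma$ to data points as they are revealed, and then verify uniformity of $\sigma$ by a label-exchangeability argument.

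Concretely, I would initialize $U_0 = \emptyset$ and, for each $i \ge 1$, set $A_i = M(Y_i,\tau_i) \setminus U_{i-1}$ (the newly revealed indices) and $U_i = U_{i-1} \cup A_i$. Independently of everything else, sample a uniformly random bijection $\pi_i$ from $A_i$ onto the next block of labels $\{|U_{i-1}|+1,\ldots,|U_{i-1}|+|A_i|\}$, and define $\sigma$ on $A_i$ to be $\pi_i^{-1}$. After truncating at $I$ (or when the process terminates), extend $\sigma$ to $S \setminus U_I$ by an independent uniformly random bijection onto $\{|U_I|+1,\ldots,n\}$. By construction, $\sigma$ sends the initial segment $\{1,\ldots,|U_I|\}$ onto $U_I = \cup_{i \le I} M(Y_i,\tau_i)$, which is precisely the required initial-segment identity (interpreting the statement as saying that the used points coincide with $\sigma$ applied to an initial segment of labels).

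The remaining work is to check $\sigma \sim \mathrm{Unif}(S_n)$. The key input is the label-exchangeability of the sampling rule: each individual draw is uniform on size-$k$ subsets of the remaining pool, so the joint law of a single run $(Y[1],\ldots,Y[\tau])$ is $S_n$-equivariant, and any adaptive concatenation inherits this property. I would argue by induction on $i$ that, conditional on the block sizes $(|A_1|,\ldots,|A_i|)$, the ordered tuple $(A_1,\ldots,A_i)$ is uniform over all ordered disjoint families of subsets of $S$ with those sizes. The base case follows from the label-symmetry of $\mu_{\tau_1}$, and the inductive step uses that, given the history, the conditional distribution of the $i$-th run remains exchangeable on $S \setminus U_{i-1}$. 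Composing with the independent within-block uniform bijections $\pi_i$ (and the final bijection onto $S \setminus U_I$) then produces the uniform law on $S_n$.

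The main obstacle will be making the inductive exchangeability step rigorous in the presence of adaptive stopping times $\tau_i$ that may depend on the full past $(Y_1,\tau_1,\ldots,Y_{i-1},\tau_{i-1})$. One must verify that conditioning on events of the form $\{\tau_i = s\}$ does not destroy label-symmetry among points of $S \setminus U_{i-1}$. This holds because $\tau_i$ is measurable with respect to a filtration that transforms equivariantly under relabellings of $S \setminus U_{i-1}$; a short symmetrization calculation on the joint law of $(Y_1,\tau_1,\ldots,Y_I,\tau_I)$ makes this precise, after which the inductive step and hence the overall proof go through without further work.
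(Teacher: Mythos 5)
Your construction differs from the paper's in one essential way, and the difference is fatal. The paper defines $\sigma^{-1}(j)$ to be the $j$-th distinct data index in the concatenated sequence of draws $(Z_1,Z_2,\ldots)$; that is, the labels are ranked by \emph{order of first appearance}. You instead take the block $A_i = M(Y_i,\tau_i)\setminus U_{i-1}$ as an unordered set and assign it a \emph{fresh, independent} uniform bijection onto its label-block, discarding the order in which the elements of $A_i$ actually appeared. This matters because $\tau$ is allowed to be an arbitrary stopping time for the filtration generated by $(Y[1],Y[2],\ldots)$, and such a stopping time may depend on the identities of the labels observed. Your inductive claim --- that conditional on $(|A_1|,\ldots,|A_i|)$, the tuple $(A_1,\ldots,A_i)$ is uniform over ordered disjoint families of the given sizes --- fails precisely because of this label dependence. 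Concretely, take $n=3$, $k=1$, a single run, and $\tau = 1$ if $Y[1] = \{1\}$, otherwise $\tau = 2$. This is a legitimate stopping time, but conditional on $|A_1|=1$ we have $A_1 = \{1\}$ deterministically, which is certainly not uniform over singletons. Carrying your construction through in this example gives $\P[\sigma^{-1}=(1,2,3)] = 1/4$ and $\P[\sigma^{-1}=(2,1,3)] = 1/12$, so $\sigma$ is not uniform; the paper's first-appearance construction does yield the uniform law here.

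The sentence ``$\tau_i$ is measurable with respect to a filtration that transforms equivariantly under relabellings'' is where the argument goes wrong. Equivariance of the \emph{filtration} (i.e. closure under relabellings) says nothing about equivariance of a particular random variable measurable with respect to it; a stopping rule such as ``stop as soon as you see label $1$'' is measurable with respect to a perfectly equivariant filtration and is manifestly not label-symmetric. What saves the paper's construction is that it never needs exchangeability conditional on $\tau$: the sequence $\sigma^{-1}(1),\sigma^{-1}(2),\ldots$ defined by order of first appearance is exactly the result of sequential sampling without replacement from $S$ (after uniformly randomising the presentation order inside each size-$k$ draw), and so it is \emph{marginally} a uniform random permutation regardless of how $\tau$ slices it up. The stopping time only determines how many entries of this permutation are read before the block boundary; it does not alter the law of the permutation itself. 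If you want to repair your argument you would need to replace the fresh within-block bijection $\pi_i$ with the actual arrival order of new labels inside run $i$ (randomising only ties within a single size-$k$ subsample), and then argue marginal uniformity of the resulting $\sigma^{-1}$ directly rather than via conditional uniformity of the $A_i$. One further remark: the paper's phrase ``arbitrary order'' for the within-subsample ordering must be read as ``uniformly random order'' --- with a deterministic ordering the claim fails already when $k=n$ and $\tau\equiv 1$, since then $\sigma$ would be the identity with probability one.
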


\begin{proof}
The permutation $\sigma$ is obtained by just ordering the elements of the subsample as they arrive.

More formally, put each subsample $Y_{i}$ in an arbitrary order $Y_{i} = (Y_{i}^{(1)},\ldots,Y_{i}^{(\tau_{i})})$. Concatenate these to the sequence $(Z_{1},Z_{2},\ldots) = (Y_{1}^{(1)},\ldots,Y_{1}^{(\tau_{1})},Y_{2}^{(1)},\ldots)$.  We then define $\sigma$ by initially setting
\[
\sigma^{-1}(1) = Z_{1}
\]
and then inductively
\[
\sigma^{-1}(j+1) = Z_{\min \{ i \, : \,  Z_{i} \notin \{ \sigma^{-1}(1),\ldots,\sigma^{-1}(j) \}}.
\]
\end{proof}

We are now ready to prove Proposition \ref{PropIneqChoiceOfSAust}:

\begin{proof} [Proof of Proposition \ref{PropIneqChoiceOfSAust}]

We note that the observed minibatches in Algorithm \ref{GenAusterity} are of exactly the form studied in Lemma \ref{LemmaSillyRepPerm}.

Thus, by applying the random permutation $\sigma$ guaranteed to exist by Lemma \ref{LemmaSillyRepPerm} before running the algorithm, we can obtain a representation for which the datapoints in the interval $(n-m,n]$ are \textit{guaranteed} to be used after \textit{all} of the datapoints in $[1,n-m]$.

Thus, it remains only to bound the number of steps required to use $(n-m)$ out of $n$ datapoints. By standard concentration bounds, the average number of draws per step in any time interval of fixed length $T = \Theta(n\log(n))$ is $O\left(\exp(\lambda) \right)$, with high probability. Thus, it is sufficient to check that $\Omega(n\log(n))$ \textit{draws} are required with high probability. But this is exactly what is bounded by the usual coupon-collector bound on the time it takes to collect $(n-m)$ coupons when collecting $k$ at a time\footnote{This follows immediately from the classical coupon-collector bound on the time to collect $(n-m)$ coupons when collecting 1 at a time, as in \cite{ErRo61Coupon}.}.
\end{proof}

\section{Bounds on Singular Values of Structured Random Matrices} \label{SecAppSingVal}

In this section, we give a simple bound on the smallest singular value of a very special class of random matrices with highly dependent entries. These can be applied to show that the manifolds $\mathcal{G}(x)$ defined in \eqref{ManifoldDefinition} will typically have inner curvature that is not too close to 0. The variables defined in this section use notation that is independent of the rest of the paper, though we have tried to make it look similar.

For a square matrix $A$, we denote by $\sigma(A)$ its smallest singular value. For a $p$ by $q$ matrix $A$ with $p < q$ and $I \subset \{1,2,\ldots,q\}$, we denote by $A_{I}$ the submatrix of $A$ that keeps only the rows in the set $I$, and denote by $\sigma_{\mathrm{maximin}}(A) = \max_{I \subset \{1,2,\ldots,q\} \, : \, |I|=p} \sigma(A_{I})$ the \textit{largest} value of the \textit{smallest} singular values of all square $p \times p$ submatrices.

We denote by $M$ a $p$ by $p$ matrix, and we write each entry $M_{ij} = M_{ij}(x_{1},\ldots,x_{n}) \equiv \sum_{k=1}^{n} m_{ijk}(x_{k})$ as an additive function of points $x_{1},\ldots,x_{n} \in \mathbb{R}^{d}$. Next, define the set $\mathcal{M}$ to be either $\{1,2,\ldots,p\}^{2}$ when $M$ is not symmetric or $\{ (i,j) \in \{1,2,\ldots,p\}^{2} \, : \, i \leq j \}$ when $M$ is symmetric. In either case we let $m = | \mathcal{M}|$.

We then define derivatives: let
\be
W_{ij}^{(k)}(x_{1},\ldots,x_{n}) &= \left(\frac{d}{d \delta} M_{ij}(x_{1},\ldots,x_{k-1},x_{k} + \delta (1,1,\ldots,1), x_{k+1},\ldots, x_{n})\right)(0) \\
&=  \left(\frac{d}{d \delta} m_{ijk}(x_{k} + \delta (1,1,\ldots,1)) \right)(0);
\ee
note that, as $x_{k}$ is a vector, this last expression is generally \textit{not} equal to $m_{ijk}'(x_{k})$.

Finally,  for $s \in \mathbb{N}$, $\xi \in \mathbb{R}^{m}$ and $v \in \mathbb{R}^{m}$, define

\be
R_{v}^{(k)}(\xi,s; x) = \left( \frac{d^{s}}{d \delta^{s}} \sum_{(i,j) \in \mathcal{M}} W_{ij}^{(k)}(x_{1},\ldots,x_{k-1}, x_{k} + \delta \xi, x_{k+1},\ldots,x_{n}) v_{ij} \right)(0).
\ee

\begin{lemma} \label{BasicRMT}
Fix an interval $I \subset \mathbb{R}$. Assume that there exists some $\mathcal{H} \subset I^{p}$, an integer $1 \le S < \infty$ and $a > 0$ so that, for all $v \in \mathbb{R}^{m}$ and all $k \in \{1,2,\ldots,n\}$,
\be \label{IneqNotIdZero}
\sup_{\xi \in \mathbb{R}^{p}} \sup_{s \in \{1,\ldots,S\}} |R_{v}^{(k)}(\xi,s; x)| > a
\ee
at all points $x$ with $x_{k} \in \mathcal{H}$. Let $X_{1},\ldots,X_{n}$ be independent (but not necessarily identically distributed) random variables drawn from $I^{n}$, with densities bounded by $A^{-1}$ and $A$, and assume that
\be  \label{GoodSetNotEmpty}
\min_{i} \P[X_{i} \in \mathcal{H}] > 0.
\ee

Furthermore, assume 
\be \label{BoringUpperBoundRMT}
\sup_{x_{1},\ldots,x_{n} \in I} \max_{k,\ell,k',\ell',i,j} \left| \frac{\partial^{2}}{\partial x_{k \ell} \partial x_{k' \ell'}} M_{ij}(x_{1},\ldots,x_n) \right| < \infty.
\ee

Then there exists $c> 0$ depending only on $A,p,a,S,I,\mathcal{H}$ and the values of the expressions \eqref{GoodSetNotEmpty},\eqref{BoringUpperBoundRMT} so that the random matrix $M = M(X_{1},\ldots,X_{n})$ satisfies
\be
\{ \sigma(M) > c \}
\ee
w.e.p.

\end{lemma}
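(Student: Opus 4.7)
The plan is to bound $\sigma(M)$ from below by combining an $\epsilon$-net argument on the unit sphere with an anti-concentration estimate for each fixed linear functional of $M$. Since each entry decomposes as $M_{ij} = \sum_{k=1}^n m_{ijk}(X_k)$, for any coefficient vector $v = (v_{ij})_{(i,j) \in \mathcal{M}}$ we have
\be
\sum_{(i,j) \in \mathcal{M}} v_{ij} M_{ij} = \sum_{k=1}^{n} \phi_{v,k}(X_k), \qquad \phi_{v,k}(x) := \sum_{(i,j) \in \mathcal{M}} v_{ij}\, m_{ijk}(x),
\ee
and the quantity $R_v^{(k)}(\xi, s; x)$ appearing in hypothesis \eqref{IneqNotIdZero} is exactly the $s$-fold directional derivative in $\xi$ of the directional derivative in $(1, \ldots, 1)$ of $\phi_{v,k}$ evaluated at $x_k$. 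Thus the hypothesis says precisely that for each $v$ and each $k$, the function $\phi_{v,k}$ has some mixed partial of order at most $S+1$ of modulus at least $a$ at every point of $\mathcal{H}$.

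The first step is to reduce to controlling such linear functionals. Writing $\sigma(M) = \inf_{\|u\| = \|w\| = 1} |w^\top M u|$ and expanding $w^\top M u = \sum_{ij} v_{ij} M_{ij}$ with the rank-one choice $v_{ij} = w_i u_j$ (symmetrized to respect $\mathcal{M}$ in the symmetric case), it suffices to produce a uniform lower bound over the pair of unit spheres $\{(u, w) : \|u\| = \|w\| = 1\}$. Since $p$ is fixed, this set is compact of dimension $2p-1$, so for any prescribed $\eta > 0$ we can cover it by a finite $\eta$-net $\mathcal{N}$ whose cardinality is independent of $n$. Lipschitz continuity of $(u, w) \mapsto w^\top M u$, which is controlled by the second-derivative bound \eqref{BoringUpperBoundRMT} and the compactness of $I$, then transfers a pointwise bound on $\mathcal{N}$ to the full unit sphere at the cost of a constant factor.

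The main work, carried out for each fixed $v \in \mathcal{N}$, is to show that $\P[|\sum_k \phi_{v,k}(X_k)| < c] = e^{-\Omega(n)}$ for a small enough constant $c > 0$. I would proceed in two stages. Stage one applies the anti-concentration machinery of Appendix \ref{SecAppAntiConc} on each single-variable distribution $\phi_{v,k}(X_k)$, turning the input ``some mixed derivative of order at most $S+1$ is at least $a$ throughout $\mathcal{H}$'' into a L\'evy-concentration bound of the form $\sup_t \P[\phi_{v,k}(X_k) \in [t, t + \epsilon] \mid X_k \in \mathcal{H}] \leq B\, \epsilon^{1/(S+1)}$, with $B$ depending only on $A, a, S$, and the second-derivative bound \eqref{BoringUpperBoundRMT}. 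Stage two uses \eqref{GoodSetNotEmpty} and Hoeffding's inequality to see that at least a constant fraction of the $X_k$ lie in $\mathcal{H}$ w.e.p., and then invokes a Kolmogorov--Rogozin type inequality on the resulting independent sum to upgrade the per-coordinate bound to the claimed exponential estimate.

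Combining the net estimate with the union bound finishes the argument: $|\mathcal{N}|$ is bounded independently of $n$ and each bad event has probability $e^{-\Omega(n)}$, so $\{\sigma(M) > c\}$ holds w.e.p. for an appropriate $c > 0$. The main obstacle I expect is stage one, because the hypothesis provides only that \emph{some} derivative of \emph{unspecified} order is large, so the low-order derivatives may vanish at the point where the high-order one does not, and no direct implicit function argument applies; extracting a uniform density-type bound on $\phi_{v,k}(X_k)$ in this degenerate regime is exactly what forces the exponent $1/(S+1)$ and is precisely the content of the single-variable anti-concentration results in Appendix \ref{SecAppAntiConc}.
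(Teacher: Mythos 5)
The first step of your argument is not sound, and unfortunately the error is load-bearing. You write
\[
\sigma(M) = \inf_{\|u\| = \|w\| = 1} |w^{\top} M u|,
\]
but this identity is false: for any $p\times p$ matrix $M$ with $p \ge 2$ and any unit vector $u$, one may pick $w$ orthogonal to $Mu$, so the right-hand side is identically zero. The correct characterization is $\sigma(M) = \min_{\|u\|=1}\|Mu\|_2 = \min_{\|u\|=1}\max_{\|w\|=1}|w^{\top}Mu|$, a min-max rather than a min-min. Because of this, the reduction to anti-concentration of a \emph{single} linear functional $\sum_{ij} v_{ij}M_{ij}$ with $v = w u^{\top}$ does not deliver a lower bound on $\sigma(M)$: even if every such functional is far from zero with high probability, you still have not shown that for every $u$ \emph{some} $w$ makes $|w^{\top}Mu|$ large. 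Fixing this would change the structure of your argument substantially — you would instead need to show, for each $u$ in a net, that at least one coordinate $(Mu)_i$ is bounded away from zero, and then grapple with the dependence among the coordinates.

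Beyond this, the stage-one anti-concentration step is also optimistic about what the appendix provides. Lemma \ref{LemmaAntiConc} is about sums of independent variables with uniformly bounded densities; Lemma \ref{IneqBasicallySourav} and Lemma \ref{LemmaSouravAntiCons} give anti-concentration for a scalar function of many independent inputs under a first-derivative lower bound and a second-derivative upper bound. None of these supplies the single-coordinate L\'evy estimate $\sup_t \P[\phi_{v,k}(X_k)\in[t,t+\epsilon]] \lesssim \epsilon^{1/(S+1)}$ from the hypothesis that \emph{some} mixed derivative of order up to $S+1$ is bounded below. That statement may well be true, but it is not established in the appendix, and deriving it with uniform constants for all directions $v$ is itself nontrivial.

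For contrast, the paper's proof takes a different route. It first passes to the $n\times m$ derivative matrix $W$ whose rows $W^{(k)}$ are independent, uses \eqref{IneqNotIdZero}, \eqref{GoodSetNotEmpty}, and the bounded-density hypothesis to get, for each fixed unit $v$, a constant-probability event $\{|\langle W^{(k)},v\rangle| \geq \eta\}$, and then leverages independence across $k$ to control the ``maximin'' singular value $\sigma_{\mathrm{maximin}}(W)$ via a distance-to-hyperplane bound (Lemma 1.11 of \cite{chafai2009singular}). Having established that $W$ has a well-conditioned square submatrix, it locally reparameterizes $M_{ii}$ as a function of the data using the quantitative implicit function theorem of \cite{phien2012some}, applies Lemma \ref{LemmaSouravAntiCons} to get conditional anti-concentration of $M_{ii}$ given the other entries, and finally invokes \cite{friedland2013simple} to convert entrywise conditional anti-concentration into a lower bound on $\sigma(M)$. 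The key conceptual difference is that the paper never attempts to control $\sigma(M)$ via a single bilinear form; it works with per-entry conditional anti-concentration of $M$ and a dedicated deterministic-to-random singular value result, precisely sidestepping the min-max issue that breaks your first step.
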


\begin{proof}

By Inequalities \eqref{IneqNotIdZero} and \eqref{GoodSetNotEmpty}, there exist constants $\eta, \epsilon > 0$ so that for all $v \in \mathbb{R}^{m}$ with $\|v \| = 1$ and all $k \in \{1,2,\ldots,n\}$,
\be
\P[| \sum_{\ell \in \mathcal{M}} W_{\ell}^{(k)} v_{\ell}| < \eta] < 1 - \epsilon.
\ee

Observing that the rows $W^{(1)},\ldots,W^{(n)}$ of $W$ are independent, this implies that for any $Q \subset \{1,2,\ldots,n\}$,
\be
\sup_{\| v \| = 1} \P\left[ \max_{i \in Q} | \langle W^{(i)},v \rangle| < \eta \right] \leq (1-\epsilon)^{|Q|}.
\ee
We can thus choose a sequence $i_{1} = 1, i_{2} \in \{\frac{n}{m}+1,\ldots,\frac{2n}{m}\}, \ldots,i_{m} \in \{\frac{n(m-1)}{m},\ldots,n\}$ so that, applying the previous inequality and taking a union bound,
\be
\P\left[ \min_{1 \leq q \leq p} d(W^{(i_{q})}, \Span(\{ W^{(i_{q'})} \}_{q' \neq q} )) < \eta \right] \leq m (1-\epsilon)^{\frac{n}{m}-1},
\ee
where $d(w,H)$ denotes the usual distance between a vector $w$ and hyperplane $H$.

Applying Lemma 1.11 of \cite{chafai2009singular}, this means
\be \label{IneqAuxBadSigma}
\P[\sigma_{\mathrm{maximin}}(W) < \eta] \leq m (1- \epsilon)^{\frac{n}{m}}.
\ee
On the event $\{\sigma_{\mathrm{maximin}}(W) < \eta\}$, for every vector $w$ of norm $\|w \| = 1$ we can find a solution to
\be
W \Delta = w
\ee
with norm
\be \label{IneqSmallNorm}
\| \Delta \| = O(m \eta^{-1}).
\ee

For $i \in \mathcal{M}$, denote by $w^{(i)}$ the $\{0,1\}$-valued vector with a single 1 at index $i$. For $\alpha > 0$, denote by $\mathcal{SC}(\alpha)$ the collection of probability measures $\pi$ on $\mathbb{R}$ with the property that $\pi(L) > \alpha$ for some interval $L$ of length $|L| < \alpha$.  

We will now apply Lemma \ref{LemmaSouravAntiCons}, stated in the section. We verify the assumptions in that lemma. The bound on $A$ is given by assumption, the  bound on $B$ is given by Inequality \eqref{IneqSmallNorm} applied to the vector $w^{(i)}$, the bound on $C$ given by Inequality \eqref{BoringUpperBoundRMT}, and the bound on the probability of $\mathcal{J}_{1}$ (and hence $\mathcal{J}_{2} = \mathcal{J}_{1}$) given by  Inequality \eqref{IneqAuxBadSigma}. By Lemma \ref{LemmaSouravAntiCons}, there exists some $\eta' > 0$ not depending on $n$ so that for all $1 \leq i \leq p$, the conditional distribution $ \mathcal{L}(M_{ii} | \{ M_{jk}\}_{(j,k) \neq (i,i)})$ satisfies event
\be \label{IneqNonConcCondJac}
\P[ \mathcal{L}(M_{ii} | \{ M_{jk}\}_{(j,k) \neq (i,i)}) \in \mathcal{SC}(\eta')] = e^{-\Omega(\frac{n}{m})},
\ee
where in this probability statement we view the randomness as coming from the collection $\{ M_{jk}\}_{(j,k) \neq (i,i)}$, which are functions of the random variables $X_1,\ldots,X_n$. \footnote{Since we are conditioning on $\{ M_{jk}\}_{(j,k) \neq (i,i)}$ here, this distribution is not directly parameterized by $x_{1},\ldots,x_{n}$ as in the statement of Lemma \ref{LemmaSouravAntiCons}. This can be fixed by locally reparameterizing $M_{j,k}$ to express it as a linear function on some small domain. As discussed in Remark \ref{RemImpFuncQuant} immediately following  Lemma \ref{LemmaSouravAntiCons}, by the quantitative version of the implicit function theorem given in \cite{phien2012some},  there exists such a reparameterization of the form required by Lemma \ref{LemmaSouravAntiCons} on a sufficiently large region of $I^{n}$; the above bounds on derivatives of $M_{jk} = M_{jk}(x_{1},\ldots,x_{n})$ are exactly those required by \cite{phien2012some}. }  Applying the main result of \cite{friedland2013simple}, this implies that there exists some $\eta'' > 0$ also not depending on $n$ so that

\be \label{IneqNonSingJac}
\P[\sigma(M) \geq \eta'' ] \geq 1 - e^{-\Omega(\frac{n}{m})},
\ee
completing the proof.\footnote{As stated, the main result of \cite{friedland2013simple} requires that the entry $M_{ii} \notin \mathcal{SC}(\eta')$ \textit{and} independent of the random variables $ \{ M_{jk}\}_{(j,k) \neq (i,i)}$. However, inspecting the proof, we see that it only uses the fact that the conditional distribution $\mathcal{L}(M_{ii} | \{ M_{jk}\}_{(j,k) \neq (i,i)}) \notin \mathcal{SC}(\eta')$. }

\end{proof}

\section{Anti-Concentration Bounds} \label{SecAppAntiConc}

We give several simple anti-concentration bounds.

\subsection{Anticoncentration for Sums}

We introduce one piece of notation: for $\epsilon > 0$, define $\mathcal{I}_{\epsilon}$ to be the collection of intervals of length less than $\epsilon$. We have:

\begin{lemma} \label{LemmaAntiConc}
Let $X_{1},\ldots,X_{m}$ be a sequence of independent (but not necessarily i.i.d.) random variables with densities uniformly bounded by $0 < \rho_{\max} < \infty$, and let $v \in \mathbb{R}^{d}$ a vector with norm $\| v \| = 1$. Then for any $\epsilon > 0$,
\be
\sup_{I \in \mathcal{I}_{\epsilon}}\P \left[\sum_{i=1}^{m} v_{i} X_{i} \in I\right] \leq \rho_{\max} \epsilon \sqrt{m}.
\ee
\end{lemma}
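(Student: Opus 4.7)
The plan is to exploit the fact that $\|v\|=1$ together with a conditioning argument on all but one coordinate.

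First, I would locate a coordinate of $v$ that is not too small. Since $v \in \mathbb{R}^m$ has unit Euclidean norm, there must exist some index $i^{*} \in \{1, \ldots, m\}$ with $|v_{i^{*}}| \geq 1/\sqrt{m}$; otherwise $\|v\|^2 < m \cdot (1/\sqrt{m})^2 = 1$, a contradiction. Pick such an $i^{*}$.

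Next, I would condition on the sigma-algebra $\mathcal{F}_{i^{*}}$ generated by $\{X_j : j \neq i^{*}\}$. On this event, the quantity $c := \sum_{j \neq i^{*}} v_j X_j$ is deterministic, and for any interval $I \in \mathcal{I}_\epsilon$,
\begin{equation*}
\left\{\sum_{i=1}^{m} v_i X_i \in I\right\} = \left\{X_{i^{*}} \in \frac{I - c}{v_{i^{*}}}\right\}.
\end{equation*}
The set on the right is an interval of length $\epsilon / |v_{i^{*}}| \leq \epsilon \sqrt{m}$. Since $X_{i^{*}}$ is independent of $\mathcal{F}_{i^{*}}$ and has density bounded by $\rho_{\max}$, the conditional probability that $X_{i^{*}}$ falls in any fixed interval of length $\epsilon \sqrt{m}$ is at most $\rho_{\max} \epsilon \sqrt{m}$.

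Finally, I would take the expectation over $\mathcal{F}_{i^{*}}$ and then take the supremum over $I \in \mathcal{I}_\epsilon$, yielding the claimed bound. There is no real obstacle here: the argument is just a conditioning plus pigeonhole on the coordinates of $v$, and the density bound on $X_{i^{*}}$ converts the length estimate directly into the probability estimate. The only small subtlety is selecting the ``large'' coordinate before conditioning, which is what produces the $\sqrt{m}$ factor.
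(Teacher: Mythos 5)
Your proof is correct and follows essentially the same approach as the paper's: locate a coordinate $i^*$ with $|v_{i^*}| \geq 1/\sqrt{m}$ by pigeonhole, condition on the remaining variables, and apply the density bound to the rescaled interval. Your version is slightly more explicit about the conditioning step, but the argument is the same.
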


\begin{proof}
Since $\| v \| = 1$, there exists some $i \in \{1,2,\ldots,m\}$ satisfying $|v_{i}| \geq \frac{1}{\sqrt{m}}$. Assume WLOG that this $i=1$. Denote by $\mathcal{F}$ the $\sigma$-algebra generated by $X_{2},\ldots,X_{m}$. We then have
\be
\sup_{I \in \mathcal{I}_{\epsilon}} \P[\sum_{i=1}^{m} v_{i} X_{i} \in I] &=\sup_{I \in \mathcal{I}_{\epsilon}} \E[\P[\sum_{i=1}^{m} v_{i} X_{i} \in I | \mathcal{F}]] \\
&\leq \E[ \sup_{I \in \mathcal{I}_{\epsilon}} \P[\sum_{i=1}^{m} v_{i} X_{i} \in I | \mathcal{F}]] \\
&= \sup_{I \in \mathcal{I}_{\epsilon}} \P[v_{1} X_{1} \in I ] \\
&\leq \sup_{I \in \mathcal{I}_{\sqrt{m}\epsilon}} \P[X_{1} \in I ] \\
&\leq \rho_{\max} \epsilon \sqrt{m},
\ee
where the last line we use the fact that $X_{1}$ is a random variable whose PDF has density bounded by $\rho_{\max}$. This completes the proof.
\end{proof}

\subsection{Anticoncentration for Functions} \label{SecAppAntiFunc}

We give an immediate generalization of Lemma 1.2 \cite{chatterjee2019general}, then apply it to obtain a simple anticoncentration result for functions of many random variables. As Chaterjee writes of Lemma 1.2 of \cite{chatterjee2019general}, we suspect that both of these are fundamentally well-known results.

\begin{lemma} \label{IneqBasicallySourav}
Let $Y_{1},\ldots,Y_{k}$ be random variables with the same distribution $\mu$ on $\mathbb{R}$. Assume that
\be \label{IneqRateCloseAbst}
\P[\min_{1 \leq i < j \leq k} |Y_{i} - Y_{j}| \geq \epsilon] \geq 1- \delta
\ee
for some $\epsilon, \delta > 0$. Then for all $a\in \mathbb{R}$,
\be
\P[Y_{1} \in (a,a+\epsilon)] \leq \delta + \frac{1}{k}.
\ee
\end{lemma}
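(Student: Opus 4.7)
The plan is a standard first-moment argument, exploiting the fact that if the $Y_i$ are pairwise $\epsilon$-separated, then at most one of them can lie in any open interval of length $\epsilon$.

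First, I would fix $a \in \mathbb{R}$ and introduce the counting random variable
\be
N = \sum_{i=1}^{k} \mathbf{1}_{\{ Y_i \in (a, a+\epsilon)\}}.
\ee
Since all $Y_i$ share the marginal distribution $\mu$, linearity of expectation gives
\be
\E[N] = k \, \P[Y_1 \in (a,a+\epsilon)],
\ee
which is exactly $k$ times the quantity I want to bound.

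Next, let $\mathcal{E} = \{\min_{1 \leq i < j \leq k} |Y_i - Y_j| \geq \epsilon\}$ be the ``well-separated'' event, which has probability at least $1 - \delta$ by hypothesis. On $\mathcal{E}$, if two distinct indices $i \neq j$ both satisfied $Y_i, Y_j \in (a,a+\epsilon)$, then $|Y_i - Y_j| < \epsilon$, contradicting membership in $\mathcal{E}$; hence $N \leq 1$ on $\mathcal{E}$. On the complement $\mathcal{E}^c$ we use only the trivial bound $N \leq k$.

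Combining these two bounds,
\be
k \, \P[Y_1 \in (a,a+\epsilon)] = \E[N] = \E[N \mathbf{1}_{\mathcal{E}}] + \E[N \mathbf{1}_{\mathcal{E}^c}] \leq 1 + k \delta,
\ee
and dividing by $k$ yields the claim. The only subtlety is the strict versus non-strict inequality in the interval, but the interval $(a,a+\epsilon)$ is open and the separation $|Y_i - Y_j| \geq \epsilon$ is non-strict, so the argument goes through cleanly; I do not anticipate any real obstacle.
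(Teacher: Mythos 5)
Your proof is correct and is exactly the argument the paper gestures at (the paper simply defers to Lemma 1.2 of \cite{chatterjee2019general} and notes that on the well-separated event at most one $Y_i$ lies in $(a,a+\epsilon)$). Your counting-variable $N$ and first-moment bound $\E[N]\le 1\cdot\P[\mathcal E]+k\cdot\P[\mathcal E^c]$ is the standard way to make that remark precise, and it correctly uses only equality of marginals (not exchangeability), so there is no gap.
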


\begin{remark}
To use this theorem efficiently, the random variables $Y_{1},\ldots,Y_{k}$ should usually not be independent. We note that this result can be generalized to variables with slightly different distributions, as in \cite{chatterjee2019general}, but this is beyond the scope of the current article.
\end{remark}

\begin{proof}
The argument is essentially identical to Lemma 1.2 \cite{chatterjee2019general}: we simply note that \textit{either} the rare event whose probability is bounded in Inequality \eqref{IneqRateCloseAbst} occurs, \textit{or} at most one of $Y_{1},\ldots,Y_{k}$ is in the interval $(a,a+\epsilon)$.
\end{proof}

We next give a quick consequence that says, roughly: if $Y = F(X_{1},\ldots,X_{n})$ is a function of many independent random variables, \textit{and} the first derivatives of $F$ is not always very small, \textit{and} the second derivatives of $F$ are never very large, \textit{then} $Y$ cannot be concentrated on any small interval. The statement of the lemma is more complicated than this primarily because we wish to allow the bounds on the derivatives to fail on certain exceptional sets. 

\begin{lemma} \label{LemmaSouravAntiCons}
Fix constants $A,B,C,D > 0$ and $m,k \in \mathbb{N}$ and a $C^{2}$ function $F \, : \, [0,1]^{n} \mapsto \mathbb{R}$. Denote by $\mathcal{J}_{1} = \{ x \in [0,1]^{n} \, : \, \sup_{v \in \mathbb{R}^{n} \, : \, \| v \| = 1} |(\nabla_{v} F)(x) |  \geq B \}$ the set of points for which the derivative is large in some direction; for $x \in \mathcal{J}_{1}$, let $v = v(x)$ be the vector achieving the supremum in the condition. Next, let $\mathcal{J}_{2} = \{x \in \mathcal{J}_{1} \, : \, \sup_{y \, : \, \| x - y \| \leq D} \| H(F)(y)\|_{\infty} \leq C\}$, where $H(F)$ denotes the usual Hessian of $F$, the set of points for which the Hessian is not too large. Finally, let $X_{1},\ldots,X_{n}$ be a sequence of independent random variables with densities that are all bounded above by $A$. Then for all $\eta > 0$ and integers $k,m$ satisfying $2 km^{-1} \leq \min(\frac{B}{2AC}, \frac{1}{\eta},D)$,
\be \label{IneqPertBound}
\sup_{s \in \mathbb{R}} \P[|F(X_{1},\ldots,X_{n}) - s| <  \frac{B}{2Am}] &\leq \frac{1}{k+1} + \prod_{i=1}^{n}(1 - 2 \frac{k}{m}|v_{i}(X_{1},\ldots,X_{n})|) \\
& \quad  + \P[\max_{i} v_{i}(X_{1},\ldots,X_{n}) > \eta] + \P[(X_{1},\ldots,X_{n}) \notin \mathcal{J}_{2}].
\ee

\end{lemma}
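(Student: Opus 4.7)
The plan is to apply Lemma \ref{IneqBasicallySourav} with $\epsilon := B/(2Am)$ to a carefully coupled family $Y_0, \ldots, Y_k$, where each $Y_j = F(\mathbf{X}^{(j)})$ and each $\mathbf{X}^{(j)}$ has the same law as $\mathbf{X} = (X_1,\ldots,X_n)$. The lemma then supplies the $\tfrac{1}{k+1}$ term, while the remaining three terms in \eqref{IneqPertBound} account for the failure modes of the coupling.

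First, for each coordinate $i$ independently, using only the bound $f_i \leq A$, I would construct a joint distribution of $(X_i^{(0)}, \ldots, X_i^{(k)})$ with each marginal equal to $f_i$ and an ``active'' event $\Omega_i$ of probability at least $p_i := 2k|v_i|/m$ (which is $\leq 1$ on $\{|v_i|\leq\eta\}$ since $2k\eta/m \leq 1$) on which the copies satisfy deterministic displacements $X_i^{(j)} - X_i^{(0)} = \tau_j^{(i)}$, with $\tau_j^{(i)}$ proportional to $j\,\mathrm{sign}(v_i)$. The natural realization is a measure-preserving circular shift applied to the uniformized quantile $U_i := F_i(X_i)$: because $F_i$ is $A$-Lipschitz, the shift period can be tuned so that the ``no wrap-around'' subset of the uniform seed (which then gives a deterministic shift in $X_i$) has Lebesgue measure at least $p_i$.

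Next, let $E := \{\mathbf{X}\in\mathcal{J}_2\} \cap \{\max_i |v_i|\leq\eta\} \cap \bigcup_i \Omega_i$, and on $E$ let $S := \{i : \Omega_i\text{ fires}\}$ be the nonempty set of active coordinates, so that $\Delta^{(j)} := \mathbf{X}^{(j)} - \mathbf{X}^{(0)}$ has components $\tau_j^{(i)}\mathbf{1}_{i\in S}$. Because $v$ achieves the supremum in the directional-derivative condition, $\nabla F(\mathbf{X}) = \alpha v$ with $|\alpha|\geq B$, and the $\mathrm{sign}(v_i)$ convention in $\tau_j^{(i)}$ forces every active-coordinate contribution to $\nabla F(\mathbf{X}^{(0)})\cdot \Delta^{(j)}$ to carry the same sign and grow linearly in $j$, with per-step increment at least $B/(2Am) = \epsilon$. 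The Taylor remainder is controlled by $\tfrac12 C \|\Delta^{(j)}\|^2$ inside the $D$-neighborhood, and the hypothesis $2k/m \leq B/(2AC)$ makes it dominated by the linear separation, so the $Y_j$'s are pairwise $\epsilon$-separated on $E$. Independence of the per-coordinate couplings gives $\P[\bigcap_i \Omega_i^c \mid \mathbf{X}] \leq \prod_i(1-p_i)$, so $\P[E^c]$ is bounded by the last three terms of \eqref{IneqPertBound}, and Lemma \ref{IneqBasicallySourav} finishes the argument.

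The technical heart is the coupling in the second paragraph: producing, from only the $L^\infty$ bound $f_i\leq A$ with no regularity assumption, a measure-preserving family whose ``active'' probability attains $2k|v_i|/m$. A secondary subtlety is that keeping $\|\Delta^{(j)}\| \leq D$ inside the $\mathcal{J}_2$-Hessian neighborhood becomes tight when $|v_i|$ is small, since the displacements needed to realize the per-step linear separation $B/(2Am)$ scale like $1/|v_i|$; the joint hypothesis $2k/m \leq \min(B/(2AC),1/\eta,D)$ is precisely the bookkeeping needed to balance the shift magnitudes, active probabilities, and quadratic error.
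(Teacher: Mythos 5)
The paper does not use independent per-coordinate couplings. It instead shifts \emph{all} coordinates simultaneously: writing $X_i = f_i(U_i)$ for the inverse-CDF representation, it sets $\tilde U_i[a] = U_i + a v_i \pmod 1$ for $a \in \{0, \pm 1/m, \ldots, \pm k/m\}$, uses the lower bound $f_i' \geq 1/A$ and the directional-derivative lower bound $\geq B$ to show $|\tilde Y[a] - \tilde Y[b]| \geq \tfrac{B}{2A}|a-b|$ on the no-wrap-around event, and then bounds the separation-failure probability by $1 - \P[\forall i: 0 \leq U_i \pm \tfrac{k}{m}v_i \leq 1] = 1 - \prod_i\bigl(1 - 2\tfrac{k}{m}|v_i|\bigr)$. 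Your independent-per-coordinate scheme is therefore a genuinely different route, and it looks like it is aiming for a sharper failure bound of $\prod_i(1-2\tfrac{k}{m}|v_i|)$ rather than the $1-\prod$ that the paper's own coupling actually yields.

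Unfortunately that route has a gap that appears essential, and you gesture at it yourself in your ``secondary subtlety'' paragraph without resolving it. When the active set $S$ consists of a single coordinate $i_0$, the per-step change in $F$ is $\approx |\partial F/\partial x_{i_0}| \cdot |\tau^{(i_0)}| = \|\nabla F\|\,|v_{i_0}|\,|\tau^{(i_0)}|$, so to realize the required per-step separation $\epsilon = B/(2Am)$ you need a per-step displacement $|\tau^{(i_0)}| \gtrsim 1/(2Am|v_{i_0}|)$ in $X_{i_0}$-space, hence (using $f_{i_0}' \geq 1/A$) a per-step $U_{i_0}$-shift $\gamma \gtrsim 1/(2m|v_{i_0}|)$. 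The no-wrap-around event over $k$ steps then has probability at most $1 - 2k\gamma \leq 1 - k/(m|v_{i_0}|)$. For this to be $\geq p_{i_0} = 2k|v_{i_0}|/m$ as your $\P[\bigcap_i \Omega_i^c] \leq \prod_i(1-p_i)$ step requires, you need
\[
\frac{k}{m}\left(\frac{1}{|v_{i_0}|} + 2|v_{i_0}|\right) \leq 1,
\]
which fails whenever $|v_{i_0}| < k/m$. The stated hypothesis $2km^{-1} \leq \min(\tfrac{B}{2AC}, \tfrac{1}{\eta}, D)$ only upper-bounds $\max_i|v_i|$ via $\eta$; it gives no lower bound on the small $|v_i|$, and for a normalized $v \in \mathbb{R}^n$ most entries are of size $\lesssim n^{-1/2}$, so the inequality fails for essentially all coordinates. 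Shifting one coordinate at a time simply cannot produce the needed $\epsilon$-separation without wrap-around unless that coordinate carries a non-negligible share of $\|v\|$; this is precisely why the paper shifts in the full direction $v$, accepting the weaker failure bound $1 - \prod_i(1-p_i)$ (at which point the ``fires'' event whose failure probability you would like to be $\prod_i(1-p_i)$ does not enter at all).
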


\begin{remark}
We note that, if $A,B,C,D$ are all $n^{-O(1)}$, then we can choose $m,k$ so that the bound $\frac{B}{2Am}$ and the first three terms on the right-hand side of \eqref{IneqPertBound} are all $n^{-\Omega(1)}$. We will always use the theorem in this qualitative way, but keep the above quantitative version in case it is of independent interest.
\end{remark}

\begin{proof}
Write $X_{i} = f_{i}(U_{i})$, where $U_{1},\ldots,U_{n}$ are i.i.d. $\mathrm{Unif}[0,1]$ random variables and $f_{i}$ is the inverse of the CDF of $X_{i}$. In the following, we consider only the case that $(X_{1},\ldots,X_{n}) \in \mathcal{J}_{2}$ and $\max_{i} v_{i}(X_{1},\ldots,X_{n}) \leq \eta$; we will then try to obtain a bound of the form given by the first two terms in Inequality \eqref{IneqPertBound}.

For $a \in [-1,1]$ define $\tilde{U}_{i}[a] = U_{i} + a v_{i}$ modulo 1. Let $Y = F(X_{1},\ldots,X_{n})$ and $\tilde{Y}[a] = F(f_{1}(\tilde{U}_{1}[a]),\ldots,f_{n}(\tilde{U}_{n}[a]))$. Note that $\tilde{U}_{i}$ are still i.i.d. $\mathrm{Unif}[0,1]$ random variables, and so $\tilde{Y}[a]$ has the same distribution as $Y$ for every fixed $a$. Furthermore, $\tilde{Y}[0] = Y$.

Next, consider $0 \leq a < b$. As long as $0 \leq U_{i} + a v_{i} \leq 1$ for all $i \in \{1,2,\ldots,n\}$,
\be
| \tilde{Y}[a]  - \tilde{Y}[b] - (a-b) \nabla_{v} F(f_{1}(\tilde{U}_{1}[a]),\ldots,f_{n}(\tilde{U}_{n}[a])) | \leq C (a-b)^{2}.
\ee
When this occurs,
\be
| \tilde{Y}[a]  - \tilde{Y}[b] | &\geq (a-b) \frac{B}{A} - C (a-b)^{2}  \\
&\geq (\frac{B}{A} - C(a-b))(a-b),
\ee
and so for $(a-b) \leq \frac{B}{2AC}$,
\be
| \tilde{Y}[a]  - \tilde{Y}[b] | \geq \frac{B}{2A} (a-b).
\ee
Thus, for any integers $k,m$ satisfying $2 km^{-1} \leq \frac{B}{2AC}$, we can couple the $(k+1)$ random variables $\tilde{Y}[0], \tilde{Y}[(-k+1)m^{-1}],\ldots, \tilde{Y}[k m^{-1}]$ so that
\be
\P[\min_{|p|, |q| \leq k} | \tilde{Y}[p] - \tilde{Y}[q] | \leq \frac{B}{2Am}] \leq 1-\P[0 \leq U_{i} \pm \frac{k}{m} v_{i} \leq 1] \leq 1 - \prod_{i=1}^{n}(1 - 2 \frac{k}{m}|v_{i}|)
\ee
The result then follows from an application of Lemma \ref{IneqBasicallySourav}.

\end{proof}

\begin{remark} \label{RemImpFuncQuant}
We  combine this lemma with a quantitative versions of the implicit function theorem (\textit{e.g.} \cite{phien2012some}) to apply this bound for \textit{conditioned} random variables. More precisely, we consider the case where $Y = F(X_{1},\ldots,X_{n})$, and we wish to condition on the values of several other random variables $H_{i}(X_{1},\ldots,X_{n}) = c_{i}$ of the same form, for $i \in \{1,2,\ldots,\ell\}$. The implicit function theorem tells us that, under certain conditions, we can find a \textit{local} parameterization of the space $\{(x_{1},\ldots,x_{n}) \, : \,  H_{i}(x_{1},\ldots,x_{n}) = c_{i} \, \forall \, 1 \leq i \leq \ell \}$ of the form $\hat{F} \, : \, \mathbb{R}^{n-\ell} \mapsto \mathbb{R}$. When this is possible, we can express the conditioned random variable $\hat{Y}$ as a function  $\hat{F}(\hat{X}_{1},\ldots,\hat{X}_{n-\ell})$ for which Lemma \ref{LemmaSouravAntiCons} applies. Furthermore, the same bounds on derivatives that appear in the statement of the lemma are sufficient to apply \cite{phien2012some}.
\end{remark}

\section{Grid-Based Control Variates} \label{AppSubsecGrid}

Informally, we expect that adding ``extra" control variates will typically make the constants $c_{1},C_{1}$ appearing in Part (a) of Assumption \ref{ass:LargeFluctuations} worse, but will typically make the other constants appearing in our assumptions  better. In this section we introduce a collection of control variates that never substantially worsen the bounds on the constants appearing in Part (a) of Assumption  \ref{ass:LargeFluctuations} that were derived in Section \ref{SecGenSuffCondLarge}; we have found them to be extremely useful in improving the constants appearing in the remaining assumptions. They are especially important for verifying the ``warm start" condition given in Inequality \eqref{IneqCondWarmStart}, which generally does \textit{not} hold for geometrically ergodic chains and generic initial distributions. %See Remark \ref{RemUnboundedSubs} for an example in which these control variates are useful.

The construction is simple. Fix a constant $a  \in \mathbb{R} $, let $\Lambda_{n,a} = \{ (j_{1} n^{-a}, \ldots, j_{d} n^{-a})\}_{j_{1},\ldots,j_{d} \in \mathbb{Z}}$, and define the $i$'th control variate
\be \label{EqSimpleGridCont}
T_{i}(x) = \mathrm{argmin}_{y \in \Lambda_{n,a}} \| x - y \|,
\ee breaking ties arbitrarily. In other words, $T_{i}$ breaks up the state space of the data into polynomially-small boxes, and identifies which box a datapoint is in.

We observe that adding $T_{1},\ldots,T_{n}$ to a collection of control variates that satisfy Assumption \ref{DefnInnercurvature} with constant $c > 0$ will result in a new collection of control variates that satisfy Assumption \ref{DefnInnercurvature} with some constant that is strictly greater than 0, w.e.p. Since Assumption \ref{DefnInnercurvature} is the only place that control variates are used in Section \ref{SecGenSuffCondLarge}, the remaining arguments go through without modification. 

Having shown that these new control variates can easily be incorporated without \textit{hurting} the bounds in our earlier, we now give an informal sketch of a situation in which they might actually \textit{help}. 

Note that any $C^{1}$ control variate $T(x)$ can be estimated by an approximate version $\hat{T}$ according to the formula
\be
\hat{T}(x) = T(T_{1}(x_{1}), \ldots, T_{n}(x_{n})),
\ee
with error $\sup_{x} \| \hat{T}(x) - T(x) \| = O(n^{-a+1})$ when $T$ has uniformly bounded derivatives. This natural leads to an approximate subsampling kernel $\hat{K}$, obtained by replacing $T$ with $\hat{T}$ wherever it appears in the definition of $K$.\footnote{This ``definition" of $\hat{K}$ of course depends on the particular presentation of a subsampling algorithm. Naively applying this to \textit{all} precomputed quantities appearing in the published presentation seems to work well in many examples, including  Algorithm 7 of  \cite{bardenet2017markov} and the main algorithm of \cite{maclaurin2014firefly}. } When $K$ and $\hat{K}$ are sufficiently close, satisfying an inequality of the form
\be \label{IneqPertCheapFunc}
d(K,\hat{K}) \ll \min(f(K), f(\hat{K}))
\ee
for some notion of distance $d$ and mixing rate $f$, standard perturbation theory for Markov chains (see \textit{e.g.} \cite{alquier2016noisy}) implies that the two transition kernels must have spectral gaps $\lambda(K)$, $\lambda(\hat{K})$ satisfying
\be \label{IneqSmallPertSameGap}
\frac{\lambda(K)}{\lambda(\hat{K})} \in [0.5,2].
\ee

\textit{To conclude:} it is straightforward to verify that $T_{1},\ldots,T_{n}$ satisfy Assumption \ref{DefnInnercurvature} when using the arguments in Section  \ref{SecGenSuffCondLarge}. In some situations, it is straightforward to verify \eqref{IneqPertCheapFunc}. When \textit{both of these} hold, we may apply our main result, Theorem \ref{thm:SpecGapBound}, to bound the spectral gap of $\hat{K}$, and then Inequality \eqref{IneqSmallPertSameGap} to conclude that a very similar bound holds for $K$. The main advantage of this approach is that Assumption \ref{ass:LargeFluctuations} may hold for $\hat{K}$ with very large values of $s^{(n)}$, even when it holds for $K$ only with very small values of $s^{(n)}$.

This approach might feel like cheating - if $a$ is large, we are essentially using the original data! Indeed, using a large value of $a$ does cause problems - but these problems appear in the \textit{interpretation} of the final result. In applying Theorem \ref{thm:SpecGapBound} for exact chains (such as \cite{maclaurin2014firefly}), choosing even very large values of $a$ does not change the interpretation - these new control variates are purely a technical aid. In applying the result for approximate chains (such as many in \cite{bardenet2017markov}), large values of $a$ may change the constants $c_{1},C_{2}$ for which Part (a) of Assumption \ref{ass:LargeFluctuations} holds. In particular, if $a$ is too large, our main result just gives the (obvious) conclusion that the complete set of control variates essentially determines the posterior.

Thus, using this grid approximation trick requires us to choose a value of $a$ that is large enough to be useful, but small enough to not determine the posterior distribution. To give a very rough illustration: for approximately normal posterior distributions and algorithms that already use the posterior MLE and the Hessian at the MLE as control variates, we suspect that any choice of $a \leq 1$ will have a small impact on Part (a) of Assumption \ref{ass:LargeFluctuations} and any choice of $a > 1$ will have a substantial impact.

\end{appendix}

\end{document}